\newcommand{\CC}{\mathbb C}
\newcommand{\RR}{\mathbb R}
\newcommand{\DD}{\mathbb D}
\newtheorem{lemma}{Lemma}
\newtheorem{cor}{Corollary}
\newtheorem{theorem}{Theorem}
\newtheorem*{thm-others}{Theorem}
\newtheoremstyle{thm-others}% name of the style to be used
  {}% measure of space to leave above the theorem. E.g.: 3pt
  {}% measure of space to leave below the theorem. E.g.: 3pt
  {}% name of font to use in the body of the theorem
  {}% measure of space to indent
  {bold}% name of head font
  {}% punctuation between head and body
  {}{}% Manually specify head
\title{
{Discontinuity in the asymptotic behavior of planar orthogonal
polynomials under a perturbation of the Gaussian weight}}
\author{Seung-Yeop Lee, Meng Yang}
\begin{document}
\maketitle
\begin{flushleft} \line(1,0){500} \end{flushleft}
\begin{abstract}
We consider the orthogonal polynomials, $\{P_n(z)\}_{n=0,1,\cdots}$,
with respect to the measure
$$|z-a|^{2c} e^{-N|z|^2}dA(z)$$ supported over the whole complex
plane, where $a>0$, $N>0$ and $c>-1$. We look at the scaling limit
where $n$ and $N$ tend to infinity while keeping their ratio, $n/N$,
fixed. The support of the limiting zero distribution is given in
terms of certain ``limiting potential-theoretic skeleton'' of the
unit disk. We show that, as we vary $c$, both the skeleton and the
asymptotic distribution of the zeros behave discontinuously at
$c=0$. The smooth interpolation of the discontinuity is obtained by
the further scaling of $c=e^{-\eta N}$ in terms of the parameter
$\eta\in[0,\infty).$
\end{abstract}
\begin{flushleft} \line(1,0){500} \end{flushleft}
\section{Introduction}

    Consider the ensemble of $n$ point particles,
$\{z_j\}_{j=1}^n\subset\CC$, distributed according to the
probability measure given by
\begin{equation}\label{eq:gibbs} \frac{1}{{\cal Z}_n} \prod_{i<j}|z_i-z_j|^2\cdot\exp\bigg(-N \sum_{j=1}^n Q(z_j)\bigg) \cdot \prod_{j=1}^ndA(z_j)\end{equation}
where ${\cal Z}_n$ is the normalization constant, $N>0$ is a (large)
parameter, $Q:\CC\to\RR\cup\{\infty\}$ is called an {\it external
potential} and $dA$ is the standard Lebesgue measure on the plane.

The statistical behavior of the particles has been studied \cite{Ma
2011} for a large class of potentials in various contexts including
random normal matrices and two-dimensional Coulomb gas.  For
example, in the scaling limit where $n$ and $N$ tend to infinity
while $n/N$ is fixed, it is known \cite{HM 2013} that the counting
measure of the particles converges weakly,
\begin{equation}\nonumber {\mathbb E} \frac{1}{N}\sum_{j=1}^n \delta(z-z_j) \to  \frac{\Delta Q}{4\pi}\chi_K \end{equation}
where $\Delta Q=(\partial_x^2+\partial_y^2) Q$, $\chi_K$ is the
indicator function of the compact set $K\subset\CC$ that we will
call a {\it droplet} following \cite{HM 2013}, and the expectation
value is taken with respect to the measure in \eqref{eq:gibbs}.

A connection to orthogonal polynomials can be provided by Heine's
formula. It says that the averaged characteristic polynomial of the
$n$ particles is the (monic) orthogonal polynomial of degree $n$,
i.e.,
$$ P_n(z)=P_{n,N}(z)={\mathbb E} \prod_{j=1}^n(z-z_j)   $$
satisfies the orthogonality condition,
\begin{equation}\label{ortho}
\int_\mathbb{C}P_{n,N}(z)\overline{P_{m,N}(z)}e^{-N
Q(z)}dA(z)=h_{n,N}\delta_{nm} \quad(n,m=0,1,2,\ldots),
\end{equation}
where $h_{n,N}$ is a (positive) norming constant. From this
connection, one might wonder if the zero distribution of $P_n$ would
tend to the averaged distribution of the particles.  Though this is
 the case with the orthogonal polynomials on the real line
(that corresponds to the particles confined on the line), in the
cases of two-dimensional orthogonal polynomials so far studied
\cite{Ba 2015, Ba1 2015, ku106 2015, ku104 2015, ku103 2015, ku94
2015}, the limiting zero distribution is observed to be concentrated
on a small subset of the droplet, on some kind of
potential-theoretic {\it skeleton} of $K$.\footnote{In some cases,
the skeleton is also called ``mother body'' \cite{Gu 1998, Gu
1999}.}

A {\it skeleton of} $K$ will refer to a subset of (the polynomial
hull of) $K$ with zero area, such that there exists a measure that
is supported exactly on the skeleton and that generates the same
logarithmic potential in the exterior of (the polynomial hull of)
$K$ as the Lebesgue measure supported on $K$. One characteristic of
such skeleton is that it can be discontinuous under the continuous
variation of the droplet $K$. A simple example \cite{Gu 1998} comes
from the sequence of polygons converging to a disk. The skeleton of
the polygon, which is the set of rays connecting each vertex to the
center, does not converge to the skeleton of the disk, the single
point at the center. Such discontinuity can also occur, as we will
see, when the perturbed droplets have real analytic boundary.

In this paper we ask whether the zero distribution of $P_n$ also
exhibits the similar discontinuity under the variation of the
underlying droplet or, equivalently, under the variation of the
external potential. We consider the external potential given by
\begin{equation}\label{qz}
Q(z)=|z|^2+ \frac{2c}{N}\log \frac{1}{|z-a|},\quad c>-1,\quad a>0.
\end{equation}
When $N$ is large and $c\ll N$, this represents a small perturbation
of the Gaussian weight. It corresponds to the interacting Coulomb
particles with charge $+1$ for each, in the presence of an extra
particle with charge $+c$ at $a$.   By a simple rotation of the
plane, the above $Q$ covers the case with any nonzero $a\in\CC$.

We are interested in the scaling limit where $N$ and $n$ go to
infinity while the ratio, $n/N$, is fixed to a positive number.
Below we will set $N=n$ without losing generality since the
orthogonality \eqref{ortho} gives the relation
$$ P_{n,\, N}(z; a) = \left(\frac{n}{N}\right)^{n/2} P_{n,\, n}\left(\sqrt{\frac{N}{n}}z; \sqrt{\frac{N}{n}}a\right),$$
where $P_{n,\, N}(z;a)=P_{n,\, N}(z)$ stands for the orthogonal
polynomial with respect to the external potential given by
\eqref{qz}. Though we will mostly use $N$, we will keep $n$ whenever
the expression holds true for general $n\neq N$.
\subsection{Limiting skeleton}

The potential \eqref{qz} has been studied in \cite{Ba 2015} with the
notation: $ c_\text{there}= c_\text{here}/N$. Let us denote
$c_\text{there}$ by $\gamma$ such that
$$  \gamma = \frac{c}{N}~~ \text{and}~~ Q(z)=|z|^2+ 2\gamma\log \frac{1}{|z-a|}. $$

To state Theorem \ref{thm1} let us introduce $K_\gamma$,
$\mu_\gamma$ and ${\cal S}_\gamma$,
 and define $\mu$ and ${\cal S}$.

Let $K_\gamma\subset\CC$ be the compact set, called a {\it droplet},
so that
$$\mu^{(2\text{D})}_\gamma = \frac{1}{4\pi} {\chi}_{K_\gamma}$$ is the unique
probability measure that minimizes the energy functional,
$$ I[\mu]= \int Q \,d \mu  + \frac{1}{2}\iint \log\frac{1}{|z-w|} d\mu(z)d\mu(w).$$
Let ${\cal S}_\gamma={\rm supp}\,\mu_\gamma$ be the {\it skeleton
of} $K_\gamma$, that is, the compact subset of $\CC$ with zero area
such that the probability measure $\mu_\gamma$ generates the same
logarithmic potential as $\mu_\gamma^{(2\text{D})}$:
\begin{equation}\label{balayage} U^{\mu_\gamma}(z) =U^{\mu_\gamma^{(2\text{D})}}(z),
\qquad z\notin \text{(polynomial convex hull of $K_\gamma$)}.
\end{equation} Here we denote
$U^m(z)=-\int \log|z-w| \,dm(w)$ for a positive Borel measure $m$.
We note that such skeleton may not be unique in general.  We give
explicit definitions of ${\cal S}_\gamma$ and $\mu_\gamma$ in
Section \ref{sec2}.

We define the {\it limiting skeleton} ${\cal S}$ by
\begin{equation}\label{skele} {\cal S}=\left\{z\in\CC : ~ {\rm Re}(\log z-az) =
\log\beta-a\beta,~~{\rm Re}\, z\leq\beta\right\},
\end{equation}
where
$$\beta=\min\{a, 1/a\}.$$
From the equivalent representation of ${\cal S}$ in the real
coordinates by
\begin{equation*} {\cal S}=
\left\{(x,y)\in{\mathbb R}^2 : x^2+y^2=\beta^2 e^{2a(x-\beta)},~
x\leq \beta \right\},
\end{equation*}
it is a simple exercise to show that, ${\cal S}\subset{\rm
clos}\,{\mathbb D}$ is a simple closed curve that encloses the
origin and intersects $\beta.$ We will denote the interior and the
exterior of ${\cal S}$ by ${\rm Int}\,{\cal S}$ and ${\rm
Ext}\,{\cal S}$ respectively. See Figure \ref{pic1} for some
illustration of ${\cal S}$.

We define $\mu$ to be the probability measure supported on ${\cal
S}$ given by
\begin{equation}\label{mu}  d\mu(z) =\rho(z)d\ell(z)=\frac{1}{2\pi
}\Big|a-\frac{1}{z}\Big|d\ell(z),\quad z\in{\cal S},
\end{equation}
 where $d\ell$ is the arclength
measure of ${\cal S}$. Alternatively, the same measure can be
written in terms of holomorphic differential by $d\mu(z) =(2\pi
\mathrm{i})^{-1}(1/z-a)\,dz.$

\begin{theorem}\label{thm1}
As $\gamma\to 0 $ we have the convergences,
 $$K_\gamma\to{\rm clos}\,\DD,\quad \mu_\gamma\to\mu,\quad {\cal S}_\gamma\to {\cal S},$$ in the appropriate senses (i.e., respectively in
 Hausdorff metric, in weak-$\ast$, and in Hausdorff metric).
\end{theorem}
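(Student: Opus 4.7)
The plan is to exploit the explicit descriptions of $K_\gamma$, ${\cal S}_\gamma$, and $\mu_\gamma$ developed in Section~\ref{sec2} (following \cite{Ba 2015}), and then to pass to the $\gamma\to 0$ limit in each. Because the target objects ${\rm clos}\,\DD$, ${\cal S}$, and $\mu$ are completely explicit, each of the three convergences reduces to a direct asymptotic analysis of the defining formulas as $\gamma\downarrow 0$.

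First, for the droplet, $Q_\gamma(z)=|z|^2+2\gamma\log(1/|z-a|)$ converges to $|z|^2$ uniformly on compact subsets of $\CC\setminus\{a\}$, and $|z|^2$ has droplet ${\rm clos}\,\DD$. Using the Euler--Lagrange variational characterisation of $K_\gamma$, together with the explicit parametrisation of its boundary recalled in Section~\ref{sec2}, one sees that for $a<1$ the set $K_\gamma$ is ${\rm clos}\,\DD$ minus a shrinking hole centred near $a$, while for $a>1$ it is a vanishing deformation of ${\rm clos}\,\DD$; in both cases the Hausdorff convergence $K_\gamma\to{\rm clos}\,\DD$ follows. Next, for the skeleton ${\cal S}_\gamma$, the defining relation recalled in Section~\ref{sec2} (a level set of a $\gamma$-dependent function built from a logarithm and a Cauchy-type integral) reduces, after expanding in $\gamma$ and sending $\gamma\to 0$, to ${\rm Re}(\log z-az)=\log\beta-a\beta$, the constant on the right being pinned down by the requirement that the curve pass through $z=\beta=\min\{a,1/a\}$. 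Parametrising ${\cal S}_\gamma$ by the angle around the origin, one proves pointwise convergence of its radial function and then upgrades to uniform convergence by compactness, yielding Hausdorff convergence ${\cal S}_\gamma\to{\cal S}$.

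For the measure $\mu_\gamma$, once the Hausdorff convergence of supports is in hand, the weak-$\ast$ convergence $\mu_\gamma\to\mu$ reduces to uniform convergence of arclength densities on compact arcs together with the fact that all the measures involved are probability measures. A more conceptual alternative is to verify directly that the limit candidate $\mu$ satisfies the balayage identity~\eqref{balayage} with respect to ${\rm clos}\,\DD$: using the holomorphic representation $d\mu(z)=(2\pi\mathrm{i})^{-1}(1/z-a)\,dz$ and deforming the contour ${\cal S}$ to $\partial\DD$, a residue computation yields $U^{\mu}(z)=-\log|z|=U^{\chi_\DD/\pi}(z)$ for $z\notin{\rm clos}\,\DD$. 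Combined with the convergence of the supports and of the $\gamma$-dependent densities, this is enough to conclude the weak-$\ast$ limit.

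The main obstacle is exactly the discontinuity phenomenon advertised in the title: the skeleton of the limit droplet ${\rm clos}\,\DD$ is the single point $\{0\}$, whereas $\lim_\gamma {\cal S}_\gamma={\cal S}$ is a closed curve enclosing the origin. Consequently the proof cannot appeal to any abstract continuity of ``skeleton-valued'' functionals of the droplet; it must exploit the explicit $\gamma$-dependence of ${\cal S}_\gamma$ inherited from the logarithmic term $2\gamma\log(1/|z-a|)$ in $Q$. The delicate technical step is to control ${\cal S}_\gamma$ near the arc on which the implicit equation degenerates as $\gamma\to 0$ (this is what pins down the constant $\log\beta-a\beta$), and to justify the one-sided restriction ${\rm Re}\,z\le\beta$ appearing in \eqref{skele}, which isolates the correct connected component of the full level set $\{{\rm Re}(\log z-az)=\log\beta-a\beta\}$.
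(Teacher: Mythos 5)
Your outline correctly identifies the heart of the matter -- the full level set of $\mathrm{Re}\,\phi_A$ has several components, and one must show that $\mathcal{S}_\gamma$ accumulates only on the one component singled out by the restriction $\mathrm{Re}\,z\le\beta$ in \eqref{skele} -- but the proposal stops precisely at that step and does not say how to carry it out. ``Parametrise $\mathcal{S}_\gamma$ by the angle around the origin and prove pointwise convergence of the radial function'' is not a workable plan as stated: for $a>1$ the set $\mathcal{S}_\gamma$ is an \emph{open arc} with two distinct endpoints $\beta_\gamma,\overline{\beta_\gamma}$ in $\CC\setminus\RR$, so the angular parametrisation misses a shrinking but nonempty range of angles and, more importantly, one still has to show the arc closes up onto the correct loop rather than onto the component of $\{\mathrm{Re}\,\phi_A=0\}$ lying on $(\beta,\infty)$ (the unbounded one), or onto some proper subarc. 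Nothing in the radial-convergence scheme forces that; pointwise limits of a family of compact sets can easily fail to be connected or can land on the wrong component of a degenerate level set, which is exactly the ``discontinuity'' phenomenon the theorem is about. (Also, the defining relation for $\mathcal{S}_\gamma$ is built from an abelian/algebraic $y_\gamma$, not a Cauchy-type integral, and for $a\ge 1$ the two branch points of $y_\gamma$ coalesce as $\gamma\to 0$, which is a genuine degeneration, not a regular perturbation that one can handle by ``expanding in $\gamma$.'')

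The paper's proof closes this gap with two ingredients that your proposal is missing. First, it works with the map $\phi_\gamma(z)=\int_{\beta_\gamma}^z y_\gamma$ and shows (Lemma~\ref{lemma20}) that $\phi_\gamma$ maps $\mathcal{S}_\gamma$ bijectively onto a \emph{fixed} segment $I=\{\mathrm{i}t:-2\pi\le t\le 0\}$ for all small $\gamma$; combined with the locally uniform convergence $\phi_\gamma\to\phi_0$ away from $\{0,a\}$ (Lemma~\ref{lem1}), this lets one transport points of $\mathcal{S}$ to points $z_j=\phi_{\gamma_j}^{-1}\circ\phi_0(z)\in\mathcal{S}_{\gamma_j}$ and control their limits. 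Second, it uses a purely topological fact (Lemma~\ref{lemma3}): the set of limit points of a sequence of connected compact sets containing $\beta_{\gamma_j}\to\beta$ is itself connected to $\beta$, which forces $\mathcal{S}_\infty\subset\mathcal{S}$ because $\mathcal{S}$ is the only component of $\phi_0^{-1}(I)$ through $\beta$. This is what ``justifies the one-sided restriction $\mathrm{Re}\,z\le\beta$'' and is the step your write-up flags as delicate but leaves unproved. Your treatment of $K_\gamma\to\overline{\DD}$ and $\mu_\gamma\to\mu$ is essentially the same as the paper's first approach (explicit parametrisation of $\partial K_\gamma$, convergence of $|y_\gamma|$ to the density in \eqref{mu}); the alternative residue computation verifying that $\mu$ satisfies the balayage identity \eqref{balayage} for $\DD$ is a consistency check on the limit candidate, not a proof that $\mu_\gamma$ converges to it.
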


The proof is in Section \ref{sec2}.\vspace{0.2cm}

\noindent{\bf Remark 1.} In both examples, the one by Gustafsson
\cite{Gu 1998} and the one from the above theorem -- the
discontinuity occurs when the droplet becomes a disk. It is an
interesting question whether the discontinuity occurs with other
shapes than disk. We think that, at least for an algebraic potential
where the exterior of the droplet is a quadrature domain, the
discontinuity happens only with the disk. This is for the simple
reason that the disk is the only quadrature domain where the
exterior domain is also a quadrature domain.

\subsection{Strong asymptotics of $P_N$ and the location of zeros}

Let us define
\begin{equation}\label{phi0}
\begin{split}
 &\phi_{A}(z) = a(z-\beta)-\log\frac{z}{\beta},\\
&\phi(z)=\left\{\begin{array}{l}\displaystyle \phi_{A}(z),\qquad
z\in {{\rm Ext}\,{\cal S}},\\\displaystyle -\phi_{A}(z),\quad\, z\in
{{\rm Int}\,{\cal S}}.
\end{array}\right.
\end{split}
\end{equation}
Note that ${\rm Re}\,\phi\equiv 0$ on ${\cal S}$.

Let $U$ be a certain neighborhood of ${\cal S}\setminus \{\beta\}$
where ${\rm Re}\,\phi\leq 0$. See Figure \ref{lens} and the
paragraph below Lemma \ref{lemma1} for more details. Let $D_\beta$
be a disk neighborhood of $\beta$ with a fixed radius such that the
map $\zeta: D_\beta\to \mathbb{C}$ given below is univalent.
\begin{equation}\label{zetamap}
\zeta(z)=\left\{\begin{array}{l}\displaystyle \sqrt{
2N\phi_{A}(z)}=a\sqrt{N}(z-\beta)(1+\mathcal {O}(z-\beta))
\qquad\,\text{for}\quad a>1,\\\displaystyle
-N\phi_{A}(z)=\frac{1-a^2}{a}N(z-\beta)\left(1+\mathcal
{O}(z-\beta)\right) \quad\text{for}\quad a<1.
\end{array}\right.
\end{equation}
\begin{theorem}\label{thumm}
For $a>1$\,and for any fixed nonzero $c>-1$, we have
\begin{equation*}
P_N(z) =\left\{
\begin{array}{ll}\displaystyle
z^N\left(\frac{z}{z-\beta}\right)^{c}\left(1+\mathcal
{O}\left(\frac{1}{N}\right)\right),
 & z\in {\rm Ext}\,{\cal S}\setminus (U\cup D_{\beta}),\vspace{0.4cm}\\\displaystyle
-\frac{\beta^N\sqrt{2\pi}(a^2-1)^c}{N^{1/2-c}a\Gamma(c)}\frac{e^{Na(z-\beta)}}{z-\beta}
\left(\frac{z-\beta}{z-a}\right)^{c}
\left(1+\mathcal{O}\left(\frac{1}{\sqrt N}\right)\right),
 & z\in {\rm Int}\,{\cal S}\setminus (U\cup D_{\beta}),\vspace{0.4cm}\\\displaystyle
\begin{split}&z^N\left(\frac{z}{z-\beta}\right)^{c}\left(1+\mathcal
{O}\left(\frac{1}{N}\right)\right) \vspace{0.2cm}\\\displaystyle
&\qquad
-\frac{\beta^N\sqrt{2\pi}(a^2-1)^c}{N^{1/2-c}a\Gamma(c)}\frac{e^{Na(z-\beta)}}{z-\beta}
\left(\frac{z-\beta}{z-a}\right)^{c} \left(1+\mathcal
{O}\left(\frac{1}{\sqrt N}\right)\right), \end{split} & z\in
U\setminus D_{\beta},\vspace{0.4cm}
 \\\displaystyle
z^N\left(\left(\frac{z\zeta(z)}{z-\beta}\right)^{c}e^{\frac{\zeta^2(z)}{4}}D_{-c}(\zeta(z))
+\mathcal{O}\left(\frac{1}{\sqrt N}\right)\right), & z\in D_{\beta}.
\end{array}
\right.
\end{equation*}
Here $D_{{-c}}$ be the Parabolic cylinder function or Weber function
and is defined by \cite{Ha 2010}
\begin{equation}\label{para!}
D_{{-c}}(\zeta):=\frac{e^{\frac{\zeta^2}{4}}}{\mathrm{i}\sqrt{2\pi}}\int_{\epsilon-\mathrm{i}\infty}^{\epsilon+\mathrm{i}\infty}e^{-\zeta
s+\frac{s^2}{2}}s^{-c}ds,\quad \epsilon>0.
\end{equation}
\end{theorem}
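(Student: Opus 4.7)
The plan is to apply the Deift--Zhou Riemann--Hilbert (RH) steepest descent method. The first task is to reduce the planar orthogonality \eqref{ortho} to a contour RH problem. Writing the weight as $(z-a)^c(\bar z-a)^c e^{-N|z|^2}$ and using integration by parts in $\bar z$ together with contour deformation around the branch cut of $(\bar z-a)^c$, the planar orthogonality is converted into a contour orthogonality on a suitable loop around $a$, encoded as a $2\times 2$ RH problem $Y$ whose $(1,1)$ entry equals $P_N$.

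Next I would perform the standard chain of transformations $Y\to T\to S\to R$. The step $Y\to T$ absorbs the large factor $z^N$ via a $g$-function tailored to the equilibrium problem of Section~\ref{sec2}, and rewrites the jumps in terms of $\phi$ from \eqref{phi0}. Since ${\rm Re}\,\phi=0$ on ${\cal S}$ with opposite signs on the two sides, the step $T\to S$ opens lenses around ${\cal S}\setminus\{\beta\}$, and the resulting jumps on the lens contours are exponentially close to the identity off $\beta$. The global outer parametrix is then solved in closed form; its $(1,1)$ entry reproduces the algebraic factor $(z/(z-\beta))^c$ of the first summand on the exterior side and the corresponding combination $\beta^N e^{Na(z-\beta)}(z-\beta)^{c-1}/(z-a)^c$, up to constants, on the interior side.

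The main technical step is the local parametrix in the disk $D_\beta$. Because $\phi_A'(\beta)=a-1/\beta=0$ and $\phi_A''(\beta)=a^2$, the function $\phi_A$ has a double zero at $\beta$, so the conformal map $\zeta(z)$ in \eqref{zetamap} is well defined and satisfies $\zeta(z)\sim a\sqrt{N}(z-\beta)$. In the $\zeta$ coordinate the local RH problem reduces to a model problem on four rays whose jumps involve the factor $\zeta^{-c}$ together with constant unipotent matrices. This model is solved in closed form by the parabolic cylinder function $D_{-c}$ via the integral representation \eqref{para!}, yielding the fourth case of the theorem directly. Matching this local parametrix to the outer parametrix on $\partial D_\beta$ via the classical large-$\zeta$ asymptotics $D_{-c}(\zeta)\sim \zeta^{-c}e^{-\zeta^2/4}$, and tracking the Stokes-type switch of dominant and subdominant terms across ${\cal S}$, produces the sum of the two summands in the third case on $U\setminus D_\beta$ and pins down the constant $\sqrt{2\pi}(a^2-1)^c/(a\Gamma(c)N^{1/2-c})$ of the second summand; the factor $\sqrt{2\pi}/\Gamma(c)$ is precisely the Stokes multiplier connecting $D_{-c}$ across adjacent sectors of the complex plane.

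The mismatch on $\partial D_\beta$ is of size $O(N^{-1/2})$, so the remainder $R$ satisfies a small-norm RH problem and $R=I+O(N^{-1/2})$ uniformly. Undoing the transformations in each of the four regions then yields the four stated asymptotics. The principal obstacle will be the construction and matching of the local parabolic-cylinder parametrix---identifying the correct four-ray model problem with Fisher--Hartwig-type data $c$, verifying that \eqref{para!} realizes its $(1,1)$ entry, and extracting the explicit constants from the Stokes data of $D_{-c}$---while a secondary difficulty is the reduction from planar to contour orthogonality, which requires careful bookkeeping of the branch of $(z-a)^c$ and of the contour chosen around $a$.
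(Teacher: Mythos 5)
Your high-level roadmap (contour RH problem, $g$-function, lens opening, global parametrix with Fisher--Hartwig-type exponent, parabolic cylinder local parametrix, small-norm completion) is the same roadmap the paper follows, and the identification of the Stokes multiplier $\sqrt{2\pi}/\Gamma(c)$ as the source of the constant in the interior region is correct. However, there is a genuine gap in the error analysis that makes your construction fail for the generality claimed in Theorem~\ref{thumm}.

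You assert that the mismatch on $\partial D_\beta$ is $O(N^{-1/2})$ and conclude directly via the small-norm theorem. That step is only valid for a bounded range of $c$, not ``for any fixed nonzero $c>-1$.'' The reason is the conjugation by $(N^{c/2}\eta(z))^{\sigma_3}$ that appears in Lemma~\ref{lemma2} when the local parametrix is expressed against the outer one: the jump of the error matrix on $\partial D_\beta$ has the form
\begin{equation*}
\left(N^{c/2}\eta\right)^{\sigma_3}\,H\,\widehat{\cal F}(\zeta)\,H^{-1}\,\left(N^{c/2}\eta\right)^{-\sigma_3},
\end{equation*}
and the $N^{\pm c}$ prefactors scale the off-diagonal entries by $N^{\pm c}$. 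With a single rational correction $F_1$ (which is essentially what you describe, and what the paper does in Theorem~\ref{thma1}), the resulting jump is $I+\mathcal O(1/N^{1/2+c})$ in the $(21)$ slot and $\mathcal O(c/N^{3/2-c})$ in the $(12)$ slot. This is small uniformly over, say, $c\in[-1/2,1/2]$, but for $c\leq-1/2$ or $c\geq 3/2$ one of the off-diagonal entries fails to tend to zero and the small-norm argument collapses. This is exactly why the paper, after proving Theorem~\ref{thma1} for $c\in[-1/2,1/2]$, devotes Section~\ref{rs} to an inductive higher-order parametrix: Lemma~\ref{lem4} factors ${\cal F}=\widehat{\cal F}F_k\cdots F_1$ with $\widehat{\cal F}=I+\mathcal O(\zeta^{-L})$ for any prescribed $L$, Lemma~\ref{r} produces, at each step, a rational matrix $R_k$ removing the pole at $\beta$ while preserving unimodularity, and the accumulated product $R_k\cdots R_1$ replaces your single ${\cal R}$. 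After $L>2|c|+1$ steps the conjugated jump is $I+\mathcal O(N^{-L+|c|})$, which finally beats the $N^{\pm c}$ amplification and lets the small-norm theorem close the argument for arbitrary fixed $c$. Your proposal does not contain this iterative improvement, so it proves something closer to Theorem~\ref{thma1} (a restricted $c$-range and a weaker, $c$-dependent error bound) rather than Theorem~\ref{thumm}. A secondary, smaller issue: even when the small-norm step applies, a uniform $\mathcal O(N^{-1/2})$ error would give only $1+\mathcal O(N^{-1/2})$ in the exterior region, whereas the theorem asserts $1+\mathcal O(1/N)$ there; recovering the sharper exterior bound again relies on the higher-order corrections.
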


\begin{theorem}\label{thum33}
For $a<1$ and for any fixed nonzero $c>-1$, we have
\begin{equation}\label{pnp}
P_N(z)=\left\{
\begin{array}{lll}
\displaystyle z^N\left(\frac{z}{z-a}\right)^{c}\left(1+\mathcal
{O}\left(\frac{1}{ N^\infty}\right)\right),  &z\in {\rm Ext}\,{\cal
S}\setminus (U\cup D_{\beta}),\vspace{0.4cm}\\\displaystyle
-\frac{a^{1+N}(1-a^2)^{c-1}}{N^{1-c}\Gamma(c)}\frac{e^{Na(z-a)}}{z-a}\left(1+\mathcal
{O}\left(\frac{1}{N}\right)\right), &z\in {\rm Int}\,{\cal
S}\setminus (U\cup D_{\beta}),\vspace{0.4cm}\\\displaystyle
\begin{split}
& z^N\left(\frac{z}{z-a}\right)^{c}\left(1+\mathcal
{O}\left(\frac{1}{
N^\infty}\right)\right)\vspace{0.2cm}\\
&\qquad-\frac{a^{1+N}(1-a^2)^{c-1}}{N^{1-c}\Gamma(c)}\frac{e^{Na(z-a)}}{z-a}\left(1+\mathcal
{O}\left(\frac{1}{N}\right)\right),\end{split} &z\in U\setminus
{D}_{\beta},\vspace{0.4cm}\\\displaystyle
\begin{split}
& z^N\bigg(\left(\frac{z}{z-a}\right)^{c}\left(1+\mathcal
{O}\left({\frac{1}{ N^\infty}}\right)\right) \vspace{.2cm}\\
&\qquad\displaystyle
-\left(\frac{z\zeta(z)}{z-a}\right)^{c}\frac{1}{e^{\zeta(z)}}\left({
\hat{f}}(\zeta(z))+\mathcal
{O}\left(\frac{1}{N}\right)\right)\bigg),\end{split} &
z\in{D}_{\beta}.

\end{array}
\right.
\end{equation}
Here
\begin{equation}\nonumber
\hat{f}(\zeta)=\frac{-1}{2\mathrm{i}\pi}\int_{\cal
L}\frac{e^{s}}{s^{c}(s-\zeta)}ds,
\end{equation}
where the contour ${\cal L}$ begins at $-\infty$, circles the origin
once in the counterclockwise direction, and returns to $-\infty$.
The error bound ${\cal O}(1/N^\infty)$ means ${\cal O}(1/N^k)$ for
arbitrary integer $k$.
\end{theorem}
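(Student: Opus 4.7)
The plan is a Deift--Zhou nonlinear steepest descent analysis for a $2\times 2$ Riemann--Hilbert problem (RHP) whose solution $Y$ has $P_N$ as its $(1,1)$-entry. First, one must reduce the planar orthogonality \eqref{ortho} to a contour orthogonality against an effective weight $w_N$ on a small loop around $a$: a Cauchy--Pompeiu / integration-by-parts argument (of the type used by Balogh--Bertola--Lee--McLaughlin for planar weights with a power singularity) shows that the $\bar z$-variable can be integrated out, producing an incomplete-gamma-type kernel $w_N(s)$ that carries the information of $|s-a|^{2c}e^{-N|s|^2}$. The resulting RHP for $Y$ has the upper-triangular jump $\bigl(\begin{smallmatrix}1 & w_N\\ 0 & 1\end{smallmatrix}\bigr)$ on this loop and normalization $Y(z)\sim\mathrm{diag}(z^N,z^{-N})$ as $z\to\infty$.

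\textbf{Normalization and global parametrix.} Using $\phi_A$ from \eqref{phi0} as effective potential (equivalently, using the limiting measure $\mu$ of Theorem~\ref{thm1} as a $g$-function), I would perform the usual chain $Y\to T\to S$, deforming the contour to the skeleton ${\cal S}$ (which, for $a<1$, passes through $\beta=a$) and opening a lens $U$ around ${\cal S}\setminus\{\beta\}$ on which $\mathrm{Re}\,\phi\le 0$. The lens-boundary jumps are then exponentially small. The remaining jump on ${\cal S}$ is solved globally by the outer parametrix with diagonal entries $\bigl(z/(z-a)\bigr)^{\pm c}$, having branch cut along ${\cal S}$, the correct behavior at $\infty$, and the prescribed local behavior at $a$.

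\textbf{Main obstacle: local parametrix at $\beta=a$.} The crux of the $a<1$ case is that the weight singularity $z=a$ coincides with the soft-edge point $\beta=a$, so $\phi_A$ vanishes \emph{linearly} at $\beta$ rather than quadratically, and $\zeta(z)=-N\phi_A(z)\sim\frac{1-a^2}{a}N(z-\beta)$. Consequently the parabolic-cylinder model used for $a>1$ must be replaced by a model built from the Hankel-type integral $\hat f$. One verifies that $\hat f(\zeta)$ solves a model RHP on ${\cal L}$ whose jump is proportional to $e^{\zeta}\zeta^{-c}$ and whose large-$\zeta$ expansion $\hat f(\zeta)\sim \zeta^{-c}/\Gamma(c)+\cdots$ in the outer sector matches precisely the outer parametrix. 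Checking the matching condition $P^{(\mathrm{loc})}(P^{(\infty)})^{-1}=I+O(1/N)$ on $\partial D_\beta$ is the most delicate calculation, and it is exactly this matching that produces the explicit constant $a^{1+N}(1-a^2)^{c-1}/(N^{1-c}\Gamma(c))$ appearing in the theorem.

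\textbf{Extraction of the four regime formulas.} Setting $R=S\cdot(\text{parametrices})^{-1}$ and invoking the small-norm theorem yields $R=I+O(1/N)$; on the exterior side the residual jumps are in fact exponentially small, which explains the improved error $O(1/N^\infty)$ in the exterior regime. Unwinding the transformations and reading off $Y_{11}$ region by region then gives: in ${\rm Ext}\,{\cal S}\setminus(U\cup D_\beta)$ only the outer parametrix contributes, producing $z^N(z/(z-a))^c$; in ${\rm Int}\,{\cal S}\setminus(U\cup D_\beta)$ the outer contribution is exponentially small while the triangular jump across the contour (with its pole structure at $z=a$) produces the incomplete-gamma term $-\frac{a^{1+N}(1-a^2)^{c-1}}{N^{1-c}\Gamma(c)}\frac{e^{Na(z-a)}}{z-a}$; in the lens $U\setminus D_\beta$ the two contributions add, since crossing the lens boundary precisely contributes the triangular jump; finally, in $D_\beta$ the outer parametrix is replaced by the $\hat f$-based local parametrix, producing the stated formula involving $\hat f(\zeta(z))$ and $e^{-\zeta(z)}$.
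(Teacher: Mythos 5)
Your outline follows the paper's general strategy (contour RHP from \cite{Ba 2015}, global parametrix $\Phi$, linear vanishing of $\phi_A$ at $\beta=a$ forcing the Hankel-integral model $\hat f$ in place of the parabolic cylinder function), but it has one concrete gap that would prevent you from obtaining the stated error bounds.

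You claim that the small-norm theorem gives $R=I+O(1/N)$ after matching the local parametrix to $O(1/N)$ on $\partial D_\beta$, and that the $O(1/N^\infty)$ bound in the exterior regime ``is explained by the residual jumps being exponentially small there.'' This reasoning fails: in the small-norm analysis the final error matrix ${\cal E}$ is a Cauchy transform over \emph{all} of its jump contours, so an $O(1/N)$ jump on $\partial D_\beta$ contaminates the estimate for ${\cal E}(z)$ at \emph{every} $z$, including $z$ far in the exterior. Exponentially small lens jumps do not localize the error; the bottleneck is always the disk boundary. With a single-step local parametrix one can only conclude $P_N(z)=z^N(z/(z-a))^c(1+O(1/N^{2-c}))$ in ${\rm Ext}\,{\cal S}\setminus(U\cup D_\beta)$ (this is exactly the paper's Theorem \ref{thm6}), not the $O(1/N^\infty)$ stated in Theorem \ref{thum33}.

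What is actually needed — and what the paper does in Sections \ref{rs}--\ref{sec5} using the method of \cite{BL 2008} — is an arbitrary-order improvement of the local parametrix. One decomposes ${\cal F}(\zeta)=\widehat{\cal F}(\zeta)F_k(\zeta)\cdots F_1(\zeta)$ with $\widehat{\cal F}=I+O(\zeta^{-L})$ (Lemma \ref{lem4}), and inductively builds rational corrections $R_1,\dots,R_k$ with poles at $\beta$ (Lemma \ref{r}, Corollary \ref{cor}) and holomorphic factors $H_1,\dots,H_k$ (equation \eqref{hk}) so that the jump on $\partial D_\beta$ is driven down to $I+O(N^{-L})$ for any preassigned $L$ (Lemma \ref{thum3}). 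Only then does the small-norm theorem yield $Z=\bigl(I+O(N^{-L})\bigr)Z^\infty$, and tracking the leading structure of $R_k\cdots R_1$ (which for $a<1$ stays upper triangular, so only the $(12)$-entry is relevant) reproduces the specific constant $\frac{a^{1+N}(1-a^2)^{c-1}}{N^{1-c}\Gamma(c)}$ together with the $O(1/N^\infty)$ versus $O(1/N)$ split between the two terms in each regime. Your proposal omits this entire iterative construction, so it cannot reach the claimed error rates.
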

One can check that the branch cut discontinuity of $(z/(z-a))^c$ in
the last equation of \eqref{pnp} is canceled by the discontinuity of
$\hat{f}$ so that the asymptotic expression of $P_N$ in $D_\beta$ is
analytic.

From Theorem \ref{thumm} and \ref{thum33}, one can notice that the
zeros of $P_N$ can appear when the two terms in the asymptotic
expressions of $P_N$ in $U\setminus D_\beta$ cancel each other and
hence must have the same order in $N$. Such cancellation may be
expressed in terms of $\phi_{A}$ as we presently explain  below.
\begin{align}\nonumber
\left(\frac{z}{z-\beta}\right)^{c} &= e^{N\phi_{A}(z)}
\left(\frac{z-\beta}{z-a}\right)^{c}\frac{\sqrt{2\pi}(a^2-1)^c}{a\Gamma(c)N^{\frac{1}{2}-c}(z-\beta)},&\text{for}\quad a>1,\\
\nonumber\left(\frac{z}{z-a}\right)^{c} &= e^{
N\phi_{A}(z)}\frac{a(1-a^2)^{c-1}}{N^{1-c}\Gamma(c)(z-a)},&\text{for}\quad
a<1.
\end{align}
Taking the logarithm of the absolute values on both sides and after
simple calculations, we get
\begin{align}\label{zerolocation1}
- {\rm Re}\,\phi_{A}(z) &=\bigg(c-\frac{1}{2}\bigg)\frac{\log
N}{N}-\frac{\log\Gamma(c)}{N}+\frac{1}{N}\log{\left|\left(\frac{z-\beta}{z-a}\right)^{c}\frac{\sqrt{2\pi}(a^2-1)^c}{a(z-\beta)^{1-c}z^c}\right|},\,\,&a>1,\\\displaystyle
\label{zerolocation2}- {\rm Re}\, \phi_{A}(z)&=\frac{(c-1)\log
N}{N}-\frac{\log\Gamma(c)}{N}+\frac{1}{N}\log{\bigg|\frac{a(1-a^2)^{c-1}}{(z-a)^{1-c}z^c}\bigg|},\,\,&a<1.
\end{align}
As we will show in Lemma \ref{lemma1}, ${\rm Re}\,\phi_{A}$ is
positive (resp. negative) in $U\cap{\rm Int}\,{\cal S}$ (resp. in
$U\cap{\rm Ext}\,{\cal S}$). For $a>1,$ since the dominant term in
the right hand side of \eqref{zerolocation1} is
$\left(c-\frac{1}{2}\right)\frac{\log N}{N}$, the zeros will
approach ${\cal S}$ from ${\rm Ext}\,{\cal S}$ for $c>\frac{1}{2}$
and from ${\rm Int}\,{\cal S}$ for $c<\frac{1}{2}$. For $a<1$, since
the dominant term in the right hand side of \eqref{zerolocation2} is
$\left(c-1\right)\frac{\log N}{N}$, the zeros will approach ${\cal
S}$ from ${\rm Ext}\,{\cal S}$ for $c>1$ and from ${\rm Int}\,{\cal
S}$ for $c<1.$  See Figure \ref{pic1}. We also remark, without
proof, that the limiting distribution of the zeros is given by $\mu$
which is explicitly given in \eqref{mu}. This can be proven, for
example, using the method in \cite{Tot 1997} (Chapter III) and
\cite{Mha 1991} (Theorem 2.3).

\begin{figure}
\begin{center}
\includegraphics[width=0.495\textwidth]{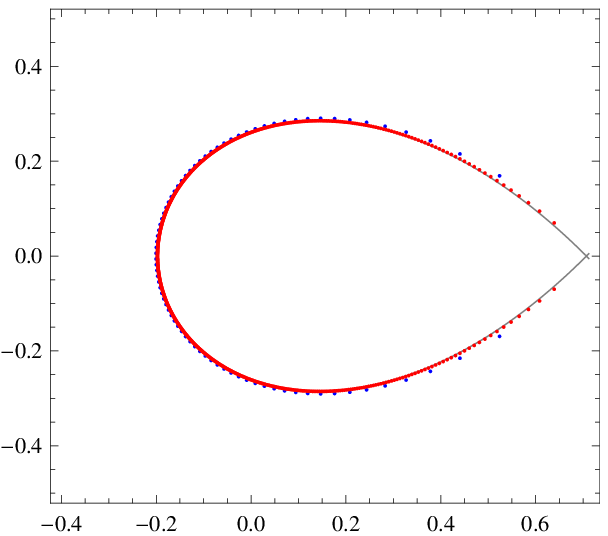}
\includegraphics[width=0.49\textwidth]{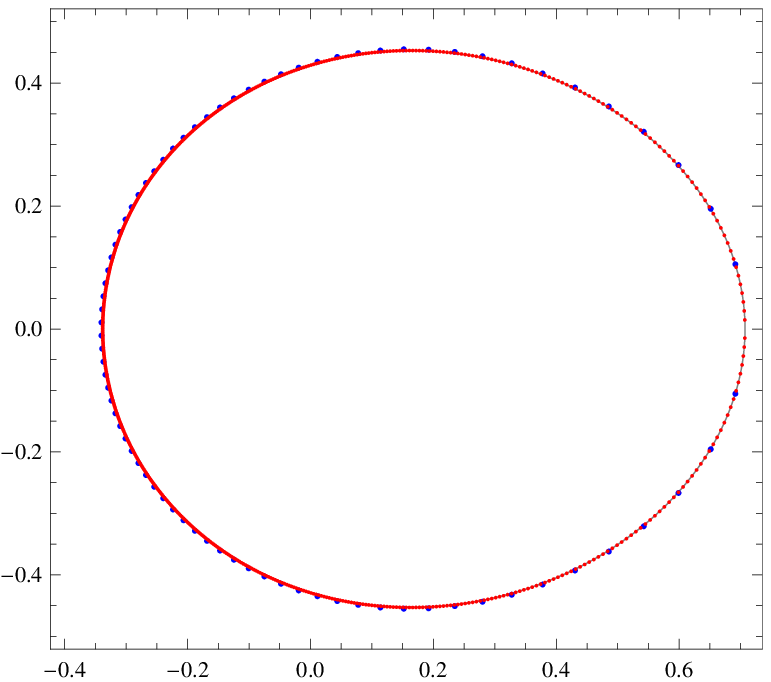}
\end{center}
\caption{The zeros of orthogonal polynomials with degrees $80$
(blue) and $600$ (red) for $c=1$. The left is for $a=\sqrt{2}$ and
the right is for $a=1/\sqrt{2}$.  In both cases, zeros are close to
the curves representing ${\cal S}$.} \label{pic1}
\end{figure}

\begin{figure}
\begin{center}
\includegraphics[width=0.465\textwidth]{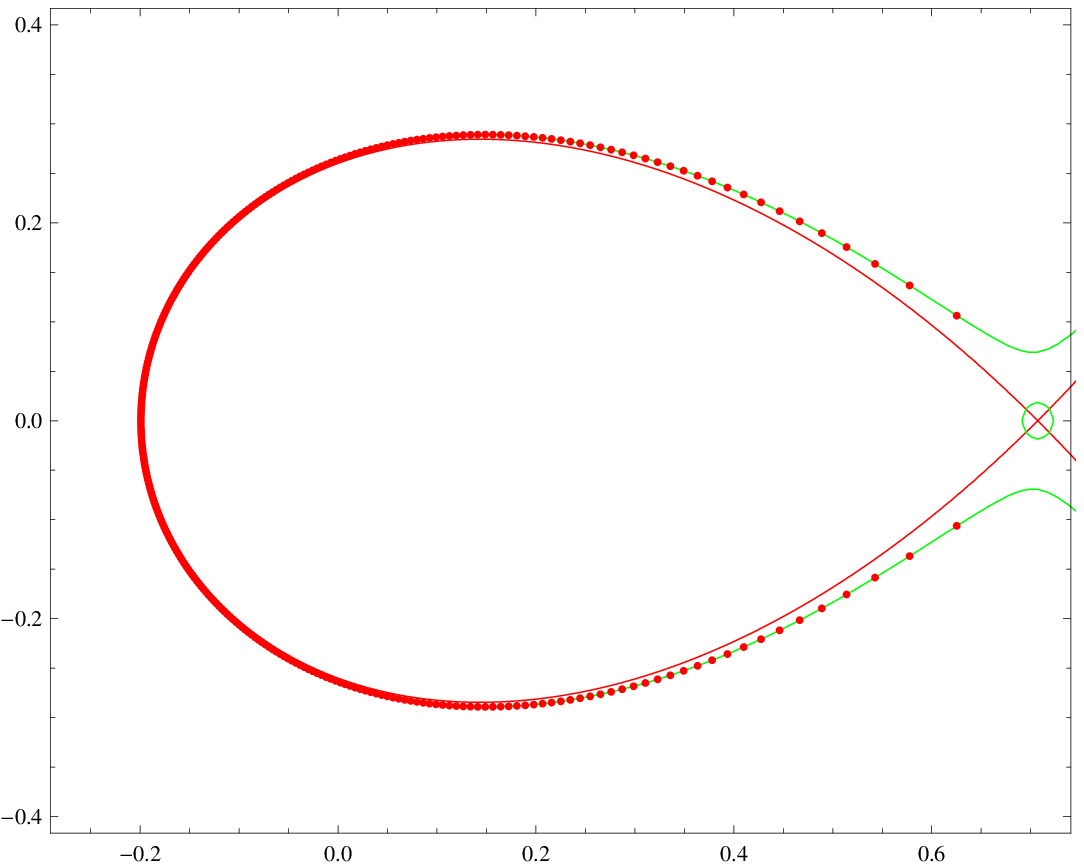}
\includegraphics[width=0.4\textwidth]{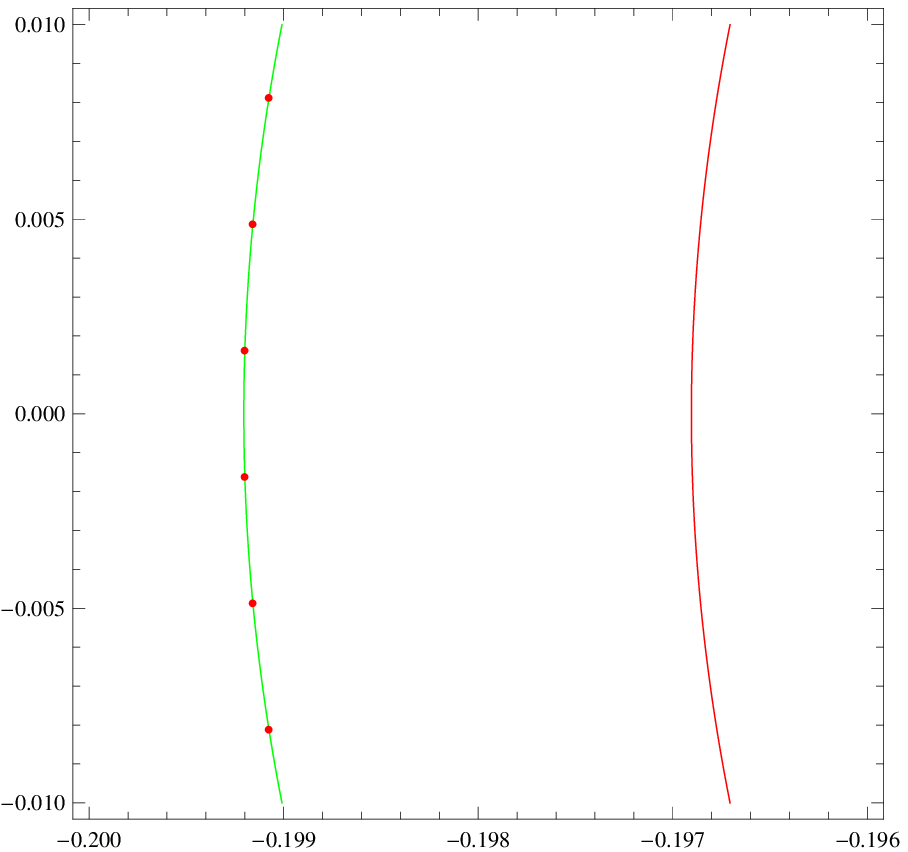}
\end{center}
 \caption{Zeros of orthogonal polynomials when $a=\sqrt 2$, $c=1$ and $N=300$. The red line is ${\cal S}$ and the green line is the solution set of \eqref{zerolocation1}. The right
figure is the enlarged view of the left figure. } \label{fig1}
\end{figure}

\begin{figure}
\begin{center}
\includegraphics[width=0.48\textwidth]{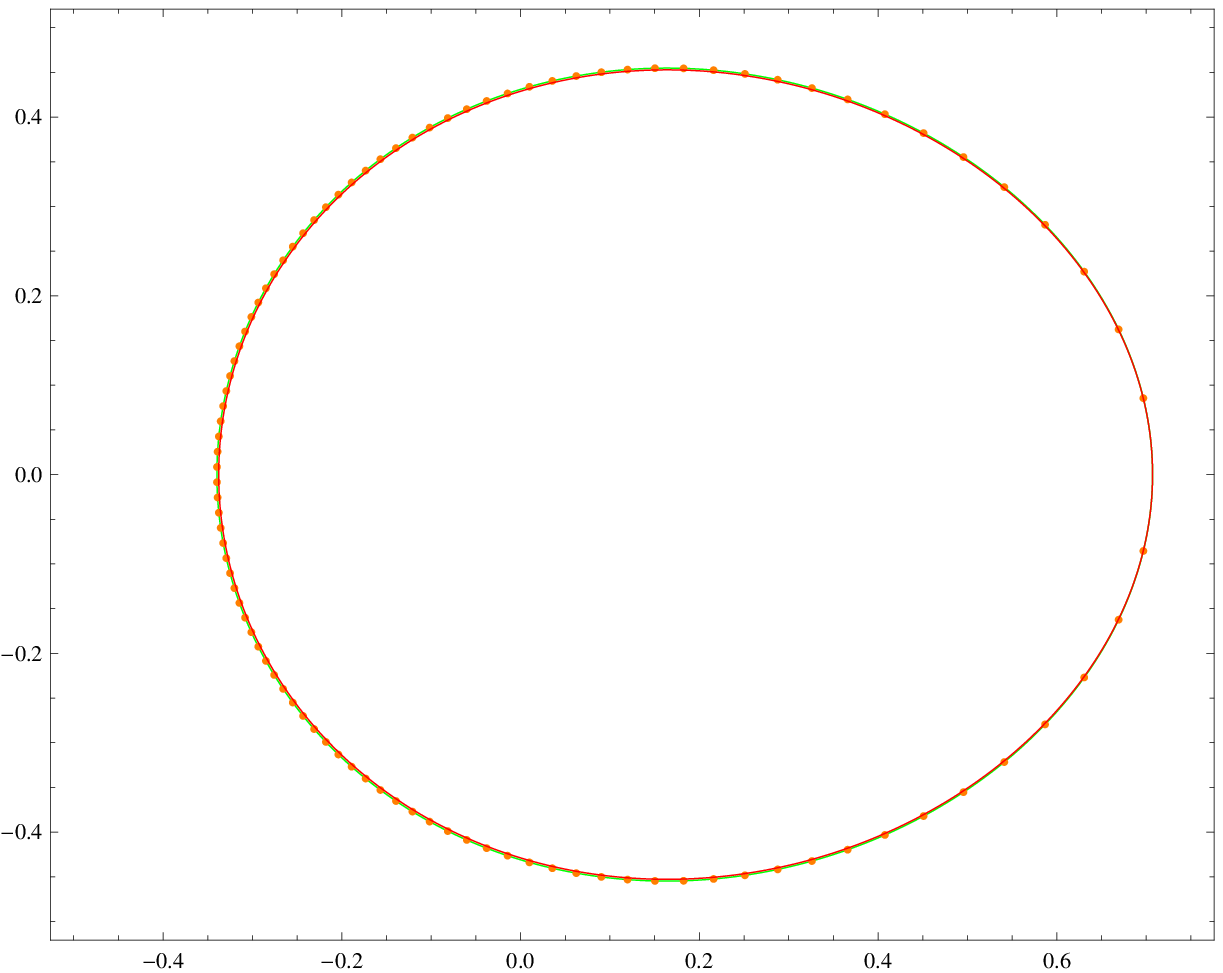}
\includegraphics[width=0.4\textwidth]{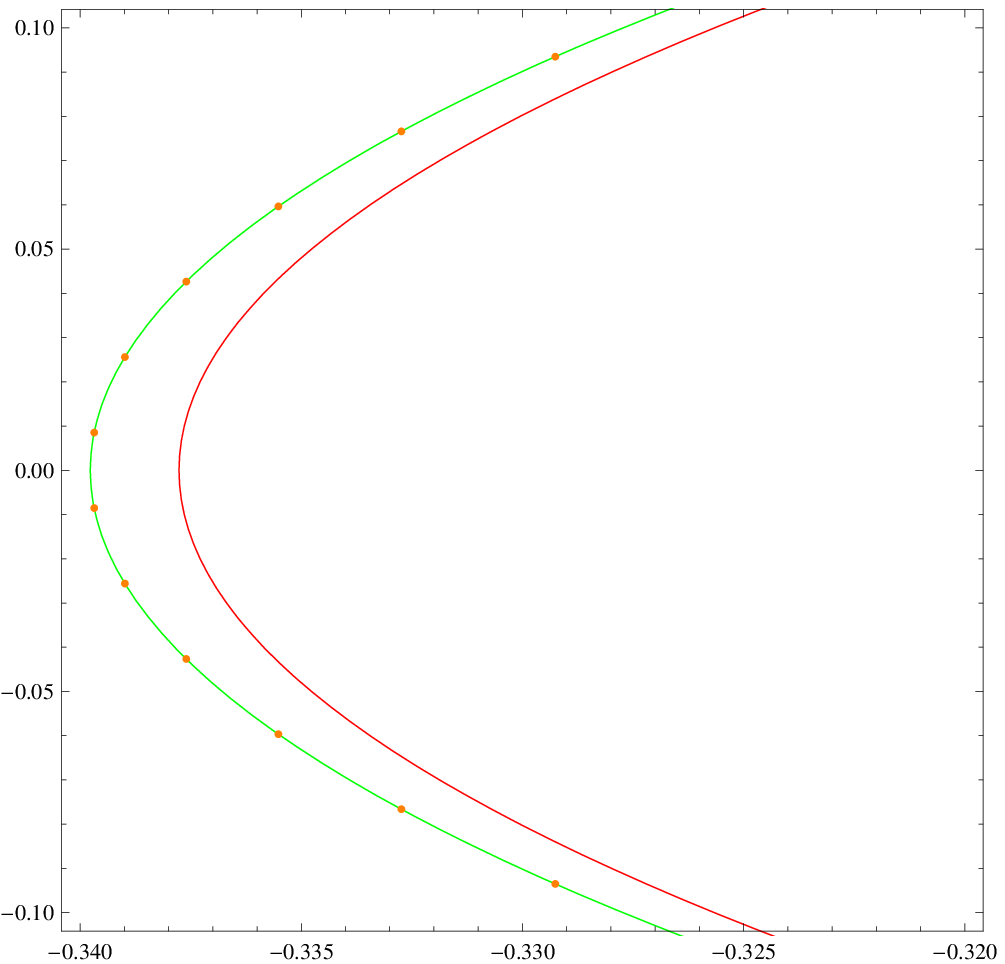}
\end{center}
\caption{When $a=1/\sqrt 2$, $c=1$ and $N=100.$ The red line is
${\cal S}$ and the green line is the solution set of
\eqref{zerolocation2}. The right figure is the enlarged view of the
left figure.} \label{fig3}
\end{figure}

We remark that the case $-1<c<0$ is essentially treated in \cite{Ba1
2015}.  We note that the limiting locus of zeros remains the same
for both the positive and negative $c$ (which seems unexpected
according to Remark 1.2 in \cite{Ba1 2015}).  It turns out that, as
the value of $c$ gets bigger, we need higher order corrections in
the Riemann-Hilbert analysis. To obtain the result that works for an
arbitrary value of $c$, therefore, we need an arbitrary order
correction in the Riemann-Hilbert analysis. This is done in Section
\ref{rs} using the method developed in \cite{BL 2008}.

We found that the limiting support of the zeros does not depend on
$c$. Even for $c$ algebraically decaying in $N$ (e.g., $c=
N^{-1000}$) the limiting support of the zeros converges to ${\cal
S}$. However, when $c$ decays exponentially in $N$, say $c=e^{-\eta
N},$  the right hand sides of both \eqref{zerolocation1} and
\eqref{zerolocation2} converge to
$$-\eta = -\lim_{N\to\infty} \frac{\log\Gamma(e^{-\eta N})}{N},\quad \eta>0$$
and the zeros approach the curve in ${\rm Int}\,{\cal S}$ given by
the equation
\begin{equation}\label{etacurve}
{\rm Re}\,\phi_{A}(z)=\eta.
\end{equation}
A similar ``sensitive behavior of zeros under a parameter" has been
observed in \cite{ku105 2015}.

It is simple to observe that the family of curves given by
\eqref{etacurve} for $0\leq\eta<\infty$ continuously interpolates
between the curve ${\cal S}$ and the origin. In Figure \ref{pic2},
we show the curves satisfying \eqref{etacurve} for $\eta=0.2$ and
$\eta=0.4$, with the corresponding zeros.

To establish the behavior of zeros for {\em scaling $c$}, however,
Theorem \ref{thumm} and \ref{thum33} are not enough as the error
bounds in the theorems are for {\em fixed $c$}.  For $c$ {\em that
scales to zero with $N$} we will prove Theorem \ref{thma1} and
\ref{thm6} where the error bounds are {\em uniform in }
$c$.\vspace{0.2cm}

\noindent{\bf Remark 2.} A simple way to understand the phenomenon
is to recall the well--known instability of roots of polynomials,
for example, the zeros of $P_n(z) = z^n + a/n^k$ still tend to the
uniform distribution on the unit circle as $n\to \infty$ (for any
fixed positive $k$) although the polynomial is a ${\cal O}(n^{-k})$
perturbation of the monomial. This simple toy example already shows
that a perturbation that interpolates between the two behaviors
would require to have $a = e^{-n\eta}$. In this perspective it is
not unexpected to see the exponentially small perturbations of the
orthogonality measure in order to interpolate the behaviors.
\vspace{0.2cm}

\begin{figure}
\begin{center}
\includegraphics[width=0.495\textwidth]{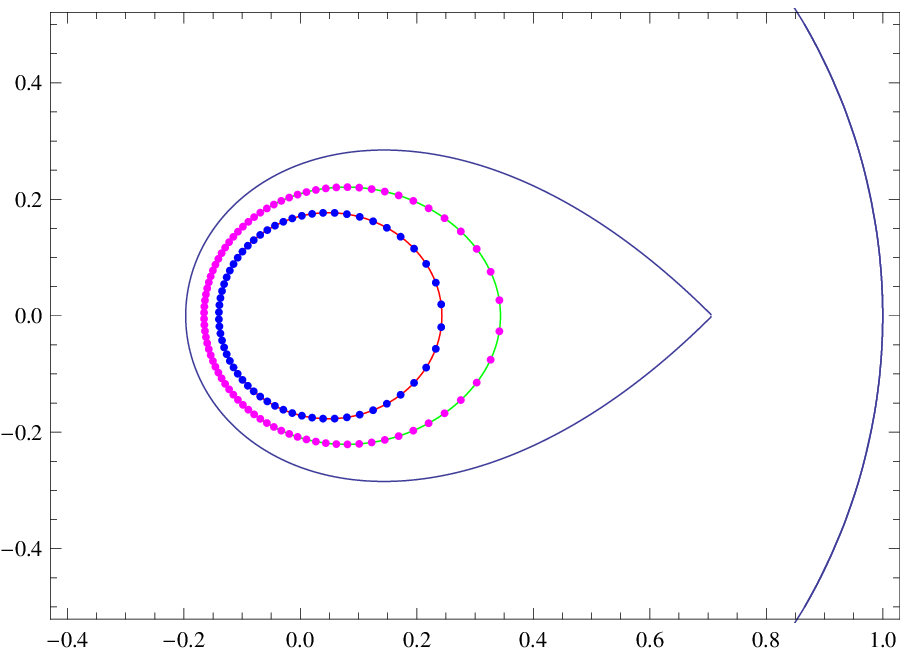}
\includegraphics[width=0.495\textwidth]{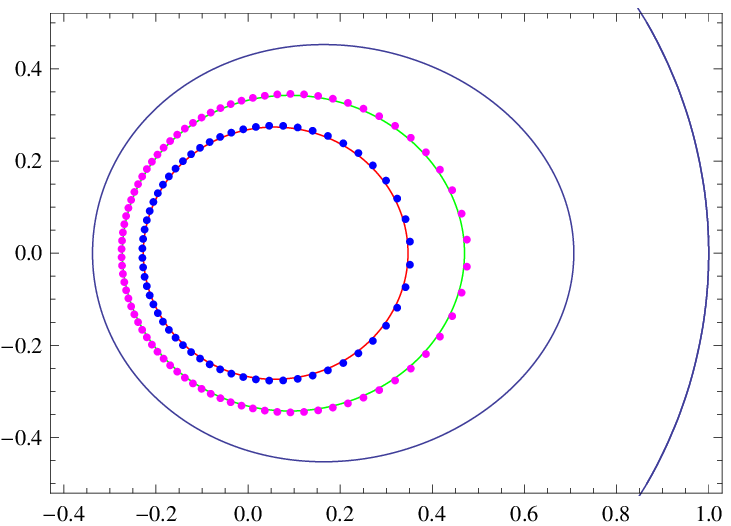}
\end{center}
\caption{The zeros of orthogonal polynomials with degrees $60$
(blue) and $80$ (magenta) for $c=e^{-\eta n}$, where $\eta=0.4$
(blue) and $\eta=0.2$ (magenta). The left is for $a=\sqrt{2}$ and
the right is for $a=1/\sqrt{2}$. In both cases, zeros seem to
converge to the curves given by \eqref{etacurve} of the
corresponding values.} \label{pic2}
\end{figure}

\noindent{\bf Remark 3.} A main message of the paper is that the
asymptotic zero locus can be quite sensitive to the small
perturbation of the underlying measure. In Figure \ref{pic4} we give
another numerical plot that supports such statement. The example
considers the orthogonal polynomials with the cutoff. Though the
cutoff may be considered as a ``small perturbation'' to the
underlying Coulomb particle system, it seems to affect the
polynomial significantly.\vspace{0.2cm}

In the next section we prove Theorem 1 about the limiting skeleton.
In section 3 we prove the asympototic result for $a>1$ and $c$ near
0. In section 4 we prove the similar result for an arbitrary $c$. In
section 5 we prove the asympototic result mostly following the
arguments from the previous two sections. In the last section, we
argue that the similar method will give the result for the critical
case of $a\approx1$, by showing that the local parametrix satisfies
the Riemann-Hilbert problem for Painlev\'e IV
equation.\vspace{0.2cm}

\noindent{\bf Acknowledgement.} The first author was supported by
Simons Collaboration Grants for Mathematicians. We thank the referee
for the insightful remarks -- Remark 1 and 2 were prompted by the
referee.

 \begin{figure}
 \centering
 \includegraphics[width=0.50\textwidth]{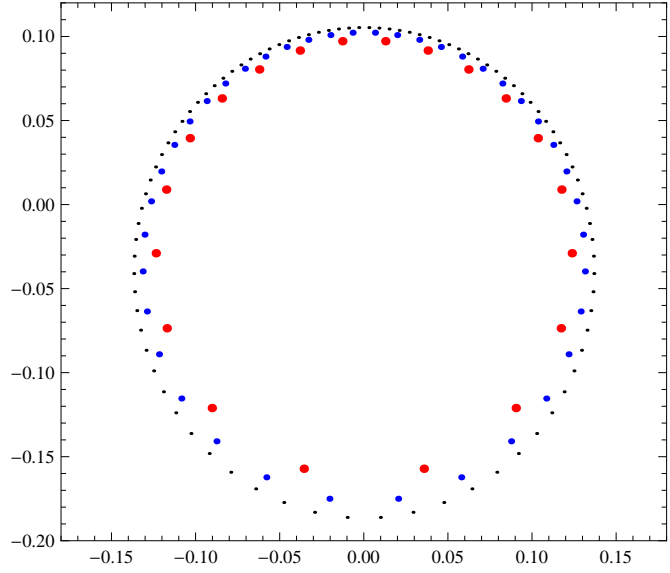}
 \caption{The zeros of orthogonal polynomials with degrees
 from $\{ 20,
40,  90\}$ and with the orthogonality measure given by $\chi_K
\exp(-n|z|^2) dA(z) $ where $K=(-\infty,
+\infty)\times[-{3\mathrm{i}}/{2},+\mathrm{i}\infty)\subset\CC$. The
plot suggests that the limiting support of zeros is not the
origin.}\label{pic4}
 \end{figure}

\section{The proof of Theorem \ref{thm1}}\label{sec2}

For the convenience of the readers we reproduce the useful
definitions in \cite{Ba 2015}.

For $a<1$ for a sufficiently small $\gamma$ we define
\begin{equation}{\label{ap1}}
 K_\gamma=\overline{D\big(0,\sqrt{1+\gamma}\big)}\setminus
D(a,\sqrt{\gamma}),
\end{equation}
 where $D(a,r)$ stands for the disc with radius
$r$ centered at $a$.

For $a<1$ we define ${\cal S}_\gamma$ to be the simple closed curve
enclosing $[0,a]$ and intersecting
$$\beta_\gamma=\frac{a^2+1-\sqrt{(1-a^2)^2-4a^2\gamma}}{2a}>a,$$
such that the quadratic differential $y_\gamma(z)^2dz^2$ is real and
negative on $\mathcal {S}_\gamma$ where
 $$
y_\gamma(z):=(-1)^{\chi_{\text{Int}\mathcal
{S}_\gamma}}\left[a+\frac{\gamma}{z-a}-\frac{1+\gamma}{z}\right].
$$
Here, we denote the interior of the simple closed curve ${\cal
S}_\gamma$ by $\text{Int}\,{\cal S}_\gamma$. We recall that $\chi$
is the indicator function.

For $a\geq1,$ the set $K_\gamma$ is defined to be the closure of the
interior of the real analytic Jordan curve given by the image of the
unit circle under $f_\gamma$ given by
$$f_\gamma(\nu)=\rho\nu-\frac{\kappa
}{\nu-\alpha}-\frac{\kappa}{\alpha},$$ whose parameters $\rho>0,
\kappa\geq 0,$ and $0<\alpha\leq 1/a$ are given in terms of $a$ and
$\gamma$ below. First, $\rho$ and $\kappa$ are given by
$$\rho=\frac{1+a^2\alpha^2}{2a\alpha},\quad \kappa=\frac{(1-\alpha^2)(1-a^2\alpha^2)}{2a\alpha}.$$
The parameter $\alpha$ is given by the unique solution of
$P_\gamma(\alpha^2)=0$ such that $0<\alpha\leq 1/a$ where
$$P_\gamma(X):=X^3-\left(\frac{a^2+4\gamma+2}{2a^2}\right)X^2+\frac{1}{2a^4}.$$
The uniqueness is easily seen by $P_\gamma(0)>0$ and
$P_\gamma(1/a^2)= -2\gamma/a^6<0$.  We note that, as $\gamma$ goes
to zero,
 $\alpha$ goes to $1/a$, $\kappa$ goes to zero and $\rho$ goes to 1.

For $a\geq 1$ we define $\mathcal {S}_\gamma$ to be the smooth arc
with the endpoints at
$$\beta_\gamma:=\alpha\rho-\frac{\kappa}{\alpha}+2\mathrm{i}\sqrt{\kappa\rho}\quad \text{and}\quad \overline{\beta_\gamma}$$
such that the quadratic differential $y_\gamma(z)^2dz^2$ is real and
negative on $\mathcal {S}_\gamma$ where
$$y_\gamma(z):=\frac{a(z-b_\gamma)\sqrt{(z-\beta_\gamma)(z-\overline{\beta_\gamma})}}{z(z-a)},\quad b_\gamma=\frac{\rho}{\alpha}.\footnote{In \cite{Ba 2015} $b_\gamma$ is written as $\alpha/\rho$ by mistake.   } $$

For all values of $a$, we define the probability measure
$\mu_\gamma$ supported on ${\cal S}_\gamma$ by
$$d\mu_\gamma=\frac{1}{2\pi}|y_\gamma(z)|d\ell_\gamma,$$ where $d\ell_\gamma$ is the arclength
measure of ${\cal S}_\gamma.$

For all values of $a$, we define $\phi_\gamma$ by
$$\phi_\gamma(z)=\int_{\beta_\gamma}^z y_\gamma(s)\,ds,$$
where the integration contour lies in the simply connected domain
$\mathbb{C}\setminus([0,\infty)\cup
[\beta_\gamma,\overline{\beta_\gamma}])$, where
$[\beta_\gamma,\overline{\beta_\gamma}]$ stands for the vertical
line segment connecting $\beta_\gamma$ and $\overline{\beta_\gamma}$
(for $a\geq1$, $[\beta_\gamma,\overline{\beta_\gamma}]$ is a point
on $\mathbb{R}^+$). One can consider $\phi_\gamma$ to be defined
{\em over the whole complex plane} by analytic continuation over
$[0,\infty)\cup [\beta_\gamma,\overline{\beta_\gamma}]$ {\em
consistently} for all $\gamma$.

\begin{lemma}\label{lem1}
   As $\gamma$ goes to $0$, $\phi_\gamma$ converges to
$\phi_0:=\phi_{\gamma=0}$ uniformly over a compact subset in
$\mathbb{C}\setminus\{0,a\}$.
\end{lemma}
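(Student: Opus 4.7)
The plan is to reduce the uniform convergence of $\phi_\gamma$ to two ingredients: (i) continuity in $\gamma$ of the parameters that appear in the definition of $y_\gamma$ and $\beta_\gamma$, and (ii) uniform convergence $y_\gamma\to y_0$ of the integrand on compacta of $\CC\setminus\{0,a\}$. Granted these, for any compact $K\subset\CC\setminus\{0,a\}$ I pick, for each $z\in K$ and each small $\gamma$, a contour $\Gamma_\gamma(z)$ from $\beta_\gamma$ to $z$ of uniformly bounded length lying in a single fixed compact $K'\subset\CC\setminus\{0,a\}$ that avoids the cuts $[0,\infty)\cup[\beta_\gamma,\overline{\beta_\gamma}]$; uniform convergence of $y_\gamma$ on $K'$ together with $\beta_\gamma\to\beta_0$ then yields $\phi_\gamma(z)=\int_{\Gamma_\gamma(z)}y_\gamma(s)\,ds\to\phi_0(z)$ uniformly in $z\in K$.

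For ingredient (i), when $a<1$ the closed-form formula for $\beta_\gamma$ is analytic in $\gamma$ at $0$ and gives $\beta_\gamma\to 1/a=\beta_0$. When $a\geq 1$ (with $a\neq 1$) the cubic $P_\gamma$ has $X=1/a^2$ as a simple root at $\gamma=0$: a direct computation gives $P_0(1/a^2)=0$ and $P_0'(1/a^2)=(1-a^2)/a^4\neq 0$, so the implicit function theorem makes $\alpha$ analytic in $\gamma$ at $0$ with $\alpha\to 1/a$. Consequently $\rho\to 1$, $\kappa\to 0$, $b_\gamma=\rho/\alpha\to a$, and $\beta_\gamma=\alpha\rho-\kappa/\alpha+2\mathrm{i}\sqrt{\kappa\rho}\to 1/a$.

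For ingredient (ii), if $a<1$ then $a+\gamma/(z-a)-(1+\gamma)/z\to a-1/z$ uniformly on compacta of $\CC\setminus\{0,a\}$. If $a\geq 1$, I fix the branch of $\sqrt{(z-\beta_\gamma)(z-\overline{\beta_\gamma})}$ by requiring it to behave like $z$ at infinity; this branch converges uniformly on compacta of $\CC\setminus\{1/a\}$ to $z-1/a$, the factor $(z-b_\gamma)/(z-a)$ converges uniformly on compacta of $\CC\setminus\{a\}$ to $1$, and the net limit is $y_\gamma\to a(z-1/a)/z=a-1/z=y_0$. The piecewise sign $(-1)^{\chi_{\mathrm{Int}\,\mathcal{S}_\gamma}}$ appearing for $a<1$ is absorbed into the consistent analytic continuation across the cuts mentioned in the excerpt; on the extended domain $y_\gamma$ becomes a single-sheeted function (in fact rational, for $a<1$) with simple poles only at $0$ and $a$.

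The main obstacle is bookkeeping these branch cuts as $\gamma\to 0$. For $a\geq 1$ the cut segment $[\beta_\gamma,\overline{\beta_\gamma}]$ collapses onto the single point $1/a$, which lies on the other cut $[0,\infty)$, so one must verify that the chosen branch limits to the correct expression for $y_0$; this is a direct check from the normalization $y_\gamma\sim a$ at $\infty$. For $a<1$ the set $\mathcal{S}_\gamma$ varies with $\gamma$, making the sign $(-1)^{\chi_{\mathrm{Int}\,\mathcal{S}_\gamma}}$ non-constant in $\gamma$; this is resolved by working with the analytic continuation of $\phi_\gamma$ rather than its original piecewise definition, so that only the residues at $0$ and $a$ (which clearly depend continuously on $\gamma$) enter the picture. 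Once these branch issues are settled, the uniform convergence of $\int_{\Gamma_\gamma(z)}y_\gamma(s)\,ds$ to $\phi_0(z)$ on $K$ is standard and completes the argument.
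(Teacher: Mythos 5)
Your approach is essentially the same as the paper's: establish $\beta_\gamma\to\beta$ (and $b_\gamma\to a$ for $a\geq 1$), deduce the uniform convergence $y_\gamma\to y_0$ on compacta of $\CC\setminus\{0,a\}$, and integrate along contours from $\beta_\gamma$ to $z$ to conclude $\phi_\gamma\to\phi_0$ uniformly. The paper's proof simply states the convergence of $\beta_\gamma$, $b_\gamma$ and $y_\gamma$; you supply the supporting details (the implicit function theorem applied to $P_\gamma(\alpha^2)=0$ at the simple root $1/a^2$, the choice of branch of $\sqrt{(z-\beta_\gamma)(z-\overline{\beta_\gamma})}$, the contour bookkeeping), all of which are sound. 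One factual slip: for $a<1$ the closed-form expression $\beta_\gamma=\frac{a^2+1-\sqrt{(1-a^2)^2-4a^2\gamma}}{2a}$ tends to $a$ as $\gamma\to0$, not to $1/a$; the value $1/a$ is the other root of the underlying quadratic, and indeed $\beta=\min\{a,1/a\}=a$ in this regime, so $\beta_0=a$. This does not affect your argument, which only uses that $\beta_\gamma\to\beta_0$.
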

\begin{proof}
It is simple to check that, as $\gamma$ goes to zero, $\beta_\gamma$
converges to $\beta$ and $b_\gamma$ converges to $a$.  Therefore
$y_\gamma(z)$ converges to $y_{\gamma=0}(z)$, by choosing the branch
cut of $y_\gamma$ at $[\beta_\gamma, \overline{\beta_\gamma}]$ that
converges to $\beta$.  This convergence is uniform away from the
singularities of $y_\gamma$ at $0$ and $a$.
\end{proof}

\begin{lemma}\label{lemma20} Let $I=\{ {\rm i}t:-2\pi \leq t\leq 0 \}$. The mapping $\phi_\gamma:{\cal S}_\gamma\setminus\{\beta_\gamma,\overline{\beta_\gamma}\}\to I\setminus\{0,-{\rm i}2\pi\}$ is invertible.
\end{lemma}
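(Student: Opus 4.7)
The plan is to prove that $\phi_\gamma$ restricted to $\mathcal{S}_\gamma\setminus\{\beta_\gamma,\overline{\beta_\gamma}\}$ is a homeomorphism onto $I\setminus\{0,-\mathrm{i}2\pi\}$ by establishing three facts: (a) $\phi_\gamma$ maps $\mathcal{S}_\gamma$ into $\mathrm{i}\mathbb{R}$; (b) $\phi_\gamma$ is strictly monotonic along $\mathcal{S}_\gamma$ under a suitable orientation; (c) $\phi_\gamma(\overline{\beta_\gamma})=-\mathrm{i}2\pi$, understood as the limiting value after traversing the loop once in the $a<1$ case. Combined with the intermediate value theorem applied to ${\rm Im}\,\phi_\gamma$ on the connected open arc, these yield the claimed bijection.

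Parts (a) and (b) follow from the quadratic-differential structure. For (a): by the defining property of $\mathcal{S}_\gamma$, the quantity $y_\gamma(z)^2\,dz^2$ is real negative along $\mathcal{S}_\gamma$, so $y_\gamma(z)\,dz$ is purely imaginary there, and integrating from $\phi_\gamma(\beta_\gamma)=0$ keeps the real part zero. For (b): $y_\gamma$ vanishes on $\mathcal{S}_\gamma$ only at the endpoints, namely as simple branch-point zeros of $\sqrt{(z-\beta_\gamma)(z-\overline{\beta_\gamma})}$ for $a\ge 1$, and as the unique zero of $a+\gamma/(z-a)-(1+\gamma)/z$ on the loop for $a<1$. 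Hence on the connected open arc $\mathcal{S}_\gamma\setminus\{\beta_\gamma,\overline{\beta_\gamma}\}$ the purely-imaginary differential $y_\gamma\,dz$ is nonzero with constant sign by continuity; orienting $\mathcal{S}_\gamma$ so this sign is negative, ${\rm Im}\,\phi_\gamma$ strictly decreases along the traversal, giving injectivity.

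Part (c) I would obtain by invoking the probability normalization of $\mu_\gamma$. With the constant negative imaginary sign from (b), the total variation computes as
\[
|\phi_\gamma(\overline{\beta_\gamma})|=\int_{\mathcal{S}_\gamma}|y_\gamma\,dz|=\int_{\mathcal{S}_\gamma}|y_\gamma|\,d\ell_\gamma=2\pi\int_{\mathcal{S}_\gamma}d\mu_\gamma=2\pi,
\]
using $d\mu_\gamma=(2\pi)^{-1}|y_\gamma|\,d\ell_\gamma$ and $\int d\mu_\gamma=1$. Combined with the decreasing monotonicity, this forces $\phi_\gamma(\overline{\beta_\gamma})=-\mathrm{i}2\pi$. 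The main obstacle lies in the non-vanishing claim of (b): for $a\ge 1$ one must verify that the extra zero of $y_\gamma$ at $z=b_\gamma=\rho/\alpha>0$ lies off the critical trajectory, which follows from the geometric description of $\mathcal{S}_\gamma$ (surrounding the origin but not $b_\gamma$ for small $\gamma$) established in the construction recalled from \cite{Ba 2015}. Combining (a)--(c), continuity and strict monotonicity of ${\rm Im}\,\phi_\gamma$ on the connected open arc with endpoint values $0$ and $-\mathrm{i}2\pi$ give, by the intermediate value theorem, the desired invertible map onto $I\setminus\{0,-\mathrm{i}2\pi\}$.
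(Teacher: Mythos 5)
Your proof is essentially correct and structurally parallel to the paper's: both establish that $\phi_\gamma$ maps the open arc onto $I\setminus\{0,-2\pi\mathrm{i}\}$ by (i) showing the image lies on $\mathrm{i}\mathbb{R}$ (from the quadratic-differential condition), (ii) showing there are no critical points of $\phi_\gamma$ on the open arc so the map is injective by monotonicity, and (iii) identifying the other endpoint value $\phi_\gamma(\overline{\beta_\gamma})=-2\pi\mathrm{i}$. The genuine difference is in step (iii). You obtain $|\phi_\gamma(\overline{\beta_\gamma})|=2\pi$ from the probability normalization $\int d\mu_\gamma=1$ together with the identity $d\mu_\gamma=(2\pi)^{-1}|y_\gamma|\,d\ell_\gamma$; the paper instead writes $\phi_\gamma(\overline{\beta_\gamma})=\tfrac12\oint y_\gamma\,ds$, deforms the loop to clockwise circles around $\infty,0,a$, and sums the residues $1,1+\gamma,-\gamma$ (Lemma 2.19 of \cite{Ba 2015}) to get $-2\pi\mathrm{i}$ directly. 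The two computations invoke equivalent facts from \cite{Ba 2015}; yours is slightly more conceptual, the paper's slightly more explicit.

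One loose end in your argument is the sign. You write ``orienting $\mathcal{S}_\gamma$ so this sign is negative,'' but the orientation and sign are not yours to choose: they are already fixed by the definition $\phi_\gamma(z)=\int_{\beta_\gamma}^z y_\gamma(s)\,ds$ with the prescribed branch and integration domain. Your normalization argument only determines $\phi_\gamma(\overline{\beta_\gamma})=\pm 2\pi\mathrm{i}$, so the choice between $I$ and $\{\mathrm{i}t:0\le t\le 2\pi\}$ is not settled. The paper's contour-deformation computation resolves this automatically (the clockwise orientation of the small circles produces the factor $-2\pi\mathrm{i}$). This is a small omission, not a structural flaw. Also, for $a\ge 1$ the set $\mathcal{S}_\gamma$ is an arc rather than a Jordan curve, so the phrase ``surrounding the origin but not $b_\gamma$'' misdescribes that case; the underlying claim you need, namely $b_\gamma\notin\mathcal{S}_\gamma$, is the correct one and is the same fact the paper implicitly uses when asserting $\phi_\gamma$ has no critical points on the open arc.
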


\begin{proof}
We prove this for $a>1$ as the other case is similiar.  We get
$\phi_\gamma(\beta_\gamma)=0$ by definition. We have
$$\phi_\gamma(\overline{\beta_{\gamma}})=\int_{\beta_\gamma}^{\overline{\beta_{\gamma}}}
y_\gamma(s)\,ds=\frac{1}{2}\oint y_\gamma(s)\,ds,$$ where, in the
first integral, the integration contour can be taken along ${\cal
S}_\gamma$ and, in the second integral, the integration contour goes
{\em around} ${\cal S}_\gamma$ counterclockwise while the branch cut
of $y_\gamma$ is placed at ${\cal S}_\gamma$ (instead of at
$[\beta_\gamma, \overline{\beta_\gamma}]$). The latter integration
contour can be deformed into three clockwise contours around
$\infty, 0$ and $a$, which leads to
$$\phi_\gamma(\overline{\beta_{\gamma}})=-\frac{2\pi\mathrm{i}}{2}\left(\mathop{\textrm{Res}}_{z=\infty} y_\gamma(z)
+\mathop{\textrm{Res}}_{z=0} y_\gamma(z)+\mathop{\textrm{Res}}_{z=a}
y_\gamma(z)\right).$$ By Lemma 2.19 in \cite{Ba 2015}, we have
$\textrm{Res}_{z=\infty} y_\gamma(z)=1, \textrm{Res}_{z=0}
y_\gamma(z)=1+\gamma,$ and $\textrm{Res}_{z=a} y_\gamma(z)=-\gamma$
and, therefore, we have
$\phi_\gamma(\overline{\beta_{\gamma}})=-2\pi\mathrm{i}.$  Since
$\phi_\gamma$ is continuous on $S_\gamma$ (here we again place the
branch cut of $y_\gamma$ at $[\beta_\gamma,
\overline{\beta_\gamma}]$) we have $I\subset\phi_\gamma({\cal
S}_\gamma)$. Since $\phi_\gamma$ has no critical point in ${\cal
S}_\gamma$ except at the endpoints, $\phi_\gamma$ is 1-to-1 and
$I=\phi_\gamma({\cal S}_\gamma)$.
\end{proof}

\begin{lemma}\label{lemma3} Let $\{K_j\subset\CC\}_{j=1}^\infty$ be bounded a sequence of compact sets such that $K_\infty$, the set of limit points of $\{K_j\}_{j=1}^\infty$, is also compact. If $K_j$'s are all connected such that $b_j\in K_j$ and $\lim_{j\to\infty} b_j= b_\infty$ then $K_\infty$ is connected to $b_\infty$.
\end{lemma}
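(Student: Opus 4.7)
\medskip

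My plan is to prove this by contradiction, after first establishing a simple intermediate fact about how the $K_j$ sit relative to $K_\infty$. The point $b_\infty$ automatically belongs to $K_\infty$ because $b_j\in K_j$ and $b_j\to b_\infty$, so the real content of the lemma is the connectedness assertion; I read the conclusion ``$K_\infty$ is connected to $b_\infty$'' as saying that $K_\infty$ is a connected set containing $b_\infty$.

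The key intermediate step is: for every $\varepsilon>0$, there exists $J$ such that $K_j\subset N_\varepsilon(K_\infty)$ for all $j\geq J$, where $N_\varepsilon$ denotes an open $\varepsilon$-neighborhood. This is proved by contradiction: if it failed there would be a subsequence $j_k\to\infty$ and points $z_{j_k}\in K_{j_k}$ with $z_{j_k}\notin N_\varepsilon(K_\infty)$; boundedness of $\bigcup_j K_j$ lets us extract via Bolzano-Weierstrass a further subsequence converging to some $z^\ast$, which lies in $K_\infty$ by definition of the set of limit points, contradicting $\mathrm{dist}(z_{j_k},K_\infty)\geq\varepsilon$.

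Now suppose, toward a contradiction, that $K_\infty$ is disconnected. Write $K_\infty=A\sqcup B$ with $A,B$ nonempty, disjoint, and closed; compactness of $K_\infty$ gives $d:=\mathrm{dist}(A,B)>0$. Without loss of generality $b_\infty\in A$. Pick $\varepsilon<d/3$, so that $N_\varepsilon(A)$ and $N_\varepsilon(B)$ are disjoint open sets. By the intermediate step, $K_j\subset N_\varepsilon(A)\cup N_\varepsilon(B)$ for large $j$; since $K_j$ is connected, it must lie entirely in one of the two neighborhoods. Since $b_j\to b_\infty\in A$, we have $b_j\in N_\varepsilon(A)$ for large $j$, forcing $K_j\subset N_\varepsilon(A)$ for all sufficiently large $j$. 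But then every limit point of $\{K_j\}$ lies in $\overline{N_\varepsilon(A)}$, giving $K_\infty\subset\overline{N_\varepsilon(A)}$. This is incompatible with $B\subset K_\infty$ and $\mathrm{dist}(A,B)=d>\varepsilon$, contradicting $B\neq\emptyset$.

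The argument is essentially a standard compactness-plus-connectedness exercise, and I do not foresee a genuine obstacle; the only point requiring care is the intermediate ``eventual containment'' step, since $K_j$ need not converge to $K_\infty$ in Hausdorff metric (only the one-sided containment holds). The boundedness hypothesis, explicitly used there, is what makes the argument go through, and the role of the connectedness of each $K_j$ together with the assumption $b_j\to b_\infty$ is precisely to pin $K_j$ to the component of the neighborhood containing $b_\infty$.
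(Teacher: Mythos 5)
Your proof is correct and follows essentially the same route as the paper's: assume $K_\infty$ is disconnected, separate the two pieces by disjoint open sets, use the eventual containment $K_j\subset O_1\cup O_2$ together with connectedness of $K_j$ and $b_j\to b_\infty$ to force $K_j$ into the piece near $b_\infty$, and derive a contradiction with the other piece being nonempty. The only difference is cosmetic: you phrase the separation in metric terms ($\varepsilon$-neighborhoods) and explicitly prove, via Bolzano--Weierstrass and boundedness, the eventual-containment step that the paper asserts without justification.
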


\begin{proof} If not, there exist open sets $O_1$ and $O_2$ such that $K_\infty$ is the disjoint union
of $K_\infty\bigcap O_1$ and $K_\infty\bigcap O_2$. Since $K_\infty$
is compact and since both $K_\infty\bigcap O_1$ and $K_\infty\bigcap
O_2$ are closed in the relative topology of $K_\infty$, both
$K_\infty\bigcap O_1$ and $K_\infty\bigcap O_2$ are compact and,
therefore, there are disjoint open neighborhoods of the two disjoint
compact sets (a property of a Hausdorff space). Without loss of
generality, we can call the disjoint neighborhoods by $O_1$ and
$O_2$. Suppose $b_\infty\in O_2$. For $j$ large enough we have
$K_j\subset O_1\bigcup O_2$ and $b_j\in O_2$ and, therefore,
$K_j\subset O_2$ because $K_j$ is connected. This is a
contradiction.
\end{proof}

\noindent{\it Proof of Theorem \ref{thm1}.}  Assume ${\cal
S}_\gamma$ does not converge to ${\cal S}$ in Hausdorff metric. Then
there exist a sequence $\{p_j\}\subset{\cal S}$ and
$\{\gamma_j\}\to0$ such that $dist(p_j,{\cal
S}_{\gamma_j})>2\epsilon$ for some $\epsilon>0$.  Taking a limit
point $z\in{\cal S}$ of $\{p_j\}$ and choosing a subsequence if
necessary we can assume $dist(z,{\cal S}_{\gamma_j})>\epsilon$ for
all $j$'s. Such $z$ cannot be $\beta\in{\cal S} $ because
$\{\beta_{\gamma_j}\in{\cal S}_{\gamma_j}\}$ converges to $\beta$ as
$j$ goes to $\infty$.
 Since $\phi_{\gamma_j}: {\cal
S}_{\gamma_j}\setminus\{\beta_{\gamma_j},\overline{\beta_{{\gamma_j}}}\}\to
I\setminus\{0,-2\pi\mathrm{i}\}$ is invertible by Lemma
\ref{lemma20}, we can define
$$z_j:=\phi_{\gamma_j}^{-1}\circ\phi_{0}(z)\in{\cal S}_{\gamma_j}.$$ Let
$z_\infty$ be a limit point of $\{z_j\}$, then
$z_\infty\notin\{0,a\}$ because ${\cal S}_{\gamma_j}$ is uniformly
away from $0$ and $a$ for sufficiently small $\gamma_j.$ We also
have $z_\infty\neq\beta$ (and similarly,
$z_\infty\neq\overline\beta$) because, if not,
$|z_j-\beta_{\gamma_j}|$ would go to zero while
$|\phi_{\gamma_j}(z_j)-\phi_{\gamma_j}(\beta_{\gamma_j})|=|\phi_0(z)|>0$.

Since $({\rm clos}\,\{z_j\})\cap\{0,a\}=\emptyset$ Lemma \ref{lem1}
says that
$$  |\phi_0(z)-\phi_{0}(z_j)|=|\phi_{\gamma_j}(z_j)-\phi_{0}(z_j)|\stackrel{j\to\infty}{\longrightarrow} 0. $$
Since a subsequence of $\{\phi_{0}(z_j)\}$ converges to
$\phi_{0}(z_\infty)$ by the continuity of $\phi_0$, we have
\begin{equation}\label{phi0infty}
\phi_{0}(z)=\phi_{0}(z_\infty).
\end{equation}

Let ${\cal S}_\infty$ be the set of limit points of $\{{\cal
S}_{\gamma_j}\}$.  By Lemma \ref{lemma3} ${\cal S}_\infty$ is
connected to $\beta$.  Since ${\cal S}$ is the only component of
$\phi_0^{-1}(I)$ that is connected to $\beta$ we have ${\cal
S}_\infty\subset{\cal S}$. From \eqref{phi0infty} and $z_\infty\in
{\cal S}\setminus\{\beta,\overline\beta\}$, we get $z=z_\infty$ by
Lemma \ref{lemma20}.  This is a contradiction because $z_\infty$ is
a limit point of $\{{\cal S}_{\gamma_j}\}$ and, therefore,
$dist(z,z_\infty)\geq \epsilon$. This concludes the proof of ${\cal
S}_\gamma\to {\cal S}$.

\begin{figure}
\begin{center}
\includegraphics[width=0.45\textwidth]{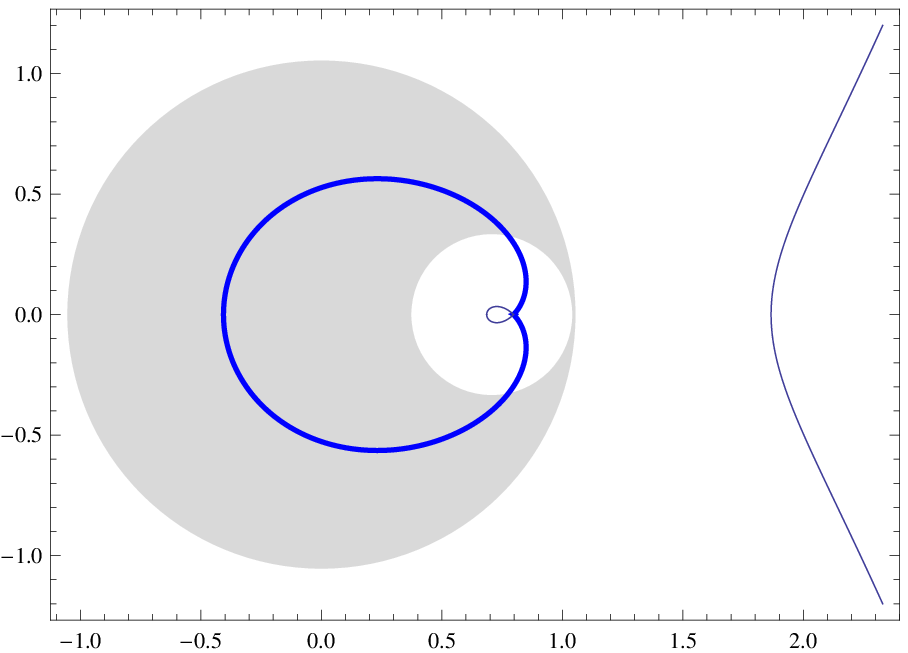}
\includegraphics[width=0.495\textwidth]{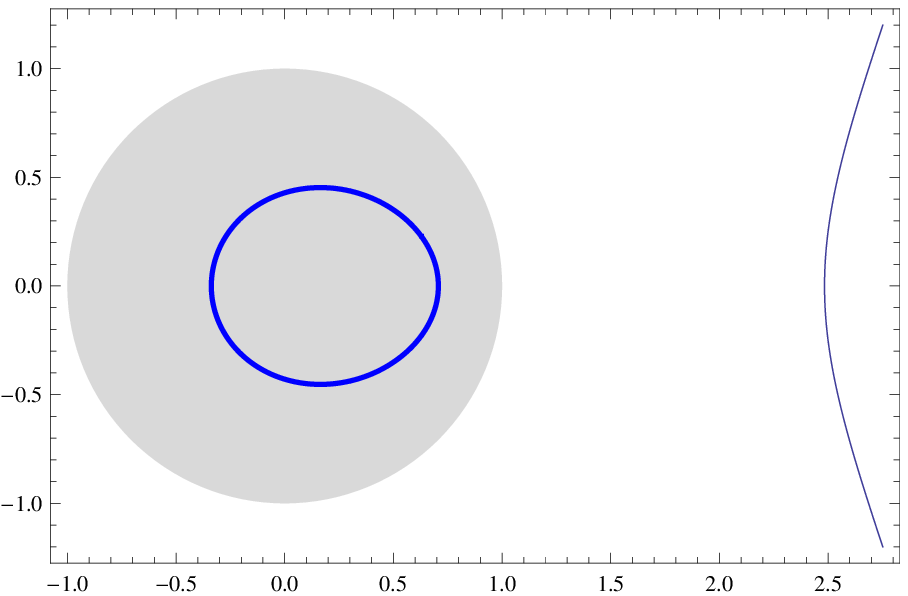}
\end{center}
\caption{Illustration of the convergence, ${\cal S}_\gamma\to{\cal
S}$ and $K_\gamma\to K$, when $a=1/\sqrt 2$.  For $\gamma=1/9$
(left), ${\cal S}_\gamma$ is drawn with thick line and the rest of
the set $\{z:{\rm Re}\,\phi_\gamma(z)=0\}$ is drawn with the thin
line; $K$ is the shaded region.  Same for $\gamma=0$
(right).}\label{figtp1}
\end{figure}

For $a<1$, the convergence of $K_\gamma $ to ${\rm clos}\,\DD$
follows from \eqref{ap1}.

For $a\geq1,$ we need to show that $\partial
K_\gamma=f_\gamma(\partial \mathbb{D})$ converges to
$\partial\mathbb{D}$. Recall that, as $\gamma$ goes to zero,
 $\alpha$ goes to $1/a$, $\kappa$ goes to zero and $\rho$ goes to 1.
It follows that $\lim_{\gamma\to 0} f_\gamma(v)=v$, which means
$K_\gamma\to{\rm clos}\,\DD$.

For all $a$, the convergence of $\mu_\gamma$ to $\mu$ follows from
the facts ${\cal S}_\gamma\to{\cal S}$ and
$\lim_{\gamma\to0}|y_\gamma(z)|=2\pi \rho(z)$ where $\rho$ is
defined in \eqref{mu}.\qed

\section{Matrix Riemann-Hilbert Problem}

The following fact is from \cite{Ba 2015}:
\begin{thm-others}\label{Thm2}
Let $\Gamma$ be a simple closed curve enclosing the line segment
$[0,a]\subset\CC$ and oriented counterclockwise.  Let the analytic
function $\omega_{n,N}$ on $\mathbb{C}\setminus{[0,a]}$ be defined
by
$$\omega_{n,N}(z) :=\left(\frac{z-a}{z}\right)^c\frac{e^{-Naz}}{z^{n}},$$
where  we choose the principal branch. Then the Riemann-Hilbert
problem,
\begin{equation}\nonumber
\left\{\begin{array}{lll}
Y(z) \text{ is holomorphic in }\mathbb{C}\setminus \Gamma, \\
\\
Y_+(z)=Y_-(z)\begin{bmatrix}
1&\omega_{n,N}(z) \\
0&1
\end{bmatrix} ,& z\in\Gamma,\\
\\\displaystyle
Y(z)=\left(I+\mathcal
{O}\left(\frac{1}{z}\right)\right)\begin{bmatrix}z^n&0\\0&z^{-n}\end{bmatrix},&
z\to\infty,
\end{array}\right.
\end{equation}
has the unique solution given by
$$ Y(z) = \begin{bmatrix} P_n(z) &\displaystyle \frac{1}{2\pi \mathrm{i}}\int_\Gamma\frac{P_n(w)\omega_{n,N}(w)}{w-z}dw \\
Q_{n-1}(z) &\displaystyle \frac{1}{2\pi
\mathrm{i}}\int_\Gamma\frac{Q_{n-1}(w)\omega_{n,N}(w)}{w-z}dw\end{bmatrix},
$$ where $Q_{n-1}(z)$ is the unique polynomial of degree $n-1$ such that
$$ \frac{1}{2\pi
\mathrm{i}}\int_\Gamma \frac{Q_{n-1}(w)\omega_{n,N}(w)}{w-z}dw
=\frac{1}{z^n}\left(1+\mathcal {O}\left(\frac{1}{z}\right)\right).$$
\end{thm-others}

\begin{lemma}\label{lemma1}
 For $a<1$, there exists a neighborhood $V$ of
$\overline{{\rm Int}\,{\cal S}}$ such that ${\rm Re}\,\phi(z)<0$ on
$V\setminus {\cal S}$ and the boundary of $V$ is a smooth Jordan
curve. For $a\geq1$, there exists a domain $V$ such that it contains
$\overline{{\rm Int}\,{\cal S}}\setminus\{\beta\}$ and its boundary,
$\partial V$, is a smooth Jordan curve that intersects $\beta$. Also
${\cal S}$ is smooth except at $\beta$, where it makes a corner with
the inner (i.e. towards ${\rm Int}\,{\cal S}$) angle $\pi/2$.
Lastly, ${\rm Re}\,\phi>0$ on $(\beta,a]$.
\end{lemma}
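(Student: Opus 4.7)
The plan is to work entirely with the explicit function $\phi_{A}(z)=a(z-\beta)-\log(z/\beta)$, whose derivative $\phi_{A}'(z)=a-1/z$ has a single critical point at $z=1/a$. Since $\beta=\min\{a,1/a\}$, this critical point coincides with $\beta$ precisely when $a\ge 1$; this coincidence is exactly what distinguishes the two cases and what produces the corner at $\beta$. Set $u(z):={\rm Re}\,\phi_{A}(z)$. Three basic facts organise the whole argument: (a) $u\equiv 0$ on ${\cal S}$, directly from \eqref{skele}; (b) $u$ is harmonic on $\CC\setminus\{0\}$ with $u(z)\to+\infty$ as $z\to 0$; (c) by \eqref{phi0}, $\phi=-\phi_{A}$ on ${\rm Int}\,{\cal S}$ and $\phi=+\phi_{A}$ on ${\rm Ext}\,{\cal S}$. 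Applying the maximum principle to $-u$ on ${\rm Int}\,{\cal S}$ punctured at the origin (boundary values $0$ on ${\cal S}$, large negative on a shrinking disk around $0$) yields $u>0$ on ${\rm Int}\,{\cal S}\setminus\{0\}$, so by (c), ${\rm Re}\,\phi<0$ throughout ${\rm Int}\,{\cal S}$.

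In the case $a<1$, the critical point $1/a$ lies off ${\cal S}$, so $\nabla u\ne 0$ on ${\cal S}$ and the implicit function theorem makes ${\cal S}$ a real-analytic Jordan curve across which $u$ changes sign transversally. A tubular neighborhood of ${\cal S}$ of sufficiently small radius is therefore a smooth Jordan annulus on whose ${\rm Ext}\,{\cal S}$ side $u<0$; letting $V$ be the union of this tube with ${\rm Int}\,{\cal S}$ produces a neighborhood of $\overline{{\rm Int}\,{\cal S}}$ with smooth Jordan boundary on which ${\rm Re}\,\phi<0$ off ${\cal S}$. The last claim of the lemma is vacuous here since $(\beta,a]=(a,a]=\emptyset$.

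In the case $a\ge 1$, we have $\beta=1/a$ and $\phi_{A}''(\beta)=1/\beta^{2}=a^{2}\ne 0$, so $\phi_{A}(z)=\tfrac{a^{2}}{2}(z-\beta)^{2}+O((z-\beta)^{3})$. Writing $z-\beta=re^{\mathrm{i}\theta}$, the zero set of ${\rm Re}\,\phi_{A}$ near $\beta$ is, to leading order, the four rays $\theta\in\{\pi/4,3\pi/4,5\pi/4,7\pi/4\}$; the constraint ${\rm Re}\,z\le\beta$ retains only the rays at $\theta=3\pi/4,5\pi/4$, which meet at $\beta$ with inner angle $\pi/2$ on the ${\rm Int}\,{\cal S}$ side. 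Away from $\beta$ the curve ${\cal S}$ is smooth by the same IFT argument, and near $\beta$ the signs of $u$ are $+$ in the sectors around $\theta=0,\pi$ and $-$ in those around $\theta=\pm\pi/2$. To build $V$, take a narrow tubular neighborhood of the smooth portion of ${\cal S}$ on the ${\rm Int}\,{\cal S}$ side and, near $\beta$, replace the tube by a cap that protrudes into both ${\rm Ext}\,{\cal S}$-sectors at $\theta\approx\pm\pi/2$ (where $u<0$), with boundary tangent to the imaginary direction at $\beta$. This excludes the ``bad'' sector around $\theta=0$ automatically. The construction becomes transparent in the coordinate $w(z):=\sqrt{\phi_{A}(z)}$, which is biholomorphic near $\beta$ and sends ${\cal S}$ locally to the two rays $\arg w\in\{3\pi/4,5\pi/4\}$, turning the matching at $\beta$ into a trivial Euclidean problem; a standard smoothing then produces a single smooth Jordan curve through $\beta$.

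Finally, for $x\in(\beta,a]$ on the positive real axis (non-trivial only when $a>1$) we have $u'(x)=a-1/x>0$ because $x>\beta=1/a$; with $u(\beta)=0$ this gives $u>0$, and since $(\beta,a]\subset{\rm Ext}\,{\cal S}$, ${\rm Re}\,\phi=u>0$ there. The only genuine technical difficulty in the whole argument is the corner construction in the $a\ge 1$ case: one must thread a single smooth Jordan boundary through the pinch point $\beta$ while keeping $V\setminus{\cal S}$ inside $\{u<0\}\cup{\rm Int}\,{\cal S}$, and verify that no extra zeros of $u$ intrude. The biholomorphism $w=\sqrt{\phi_{A}}$ localises and trivialises this gluing; the rest is patching via a smooth cutoff to the tubular part constructed on the smooth portion of ${\cal S}$.
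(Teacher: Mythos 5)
Your argument is correct and follows the same route as the paper's proof: the maximum principle forcing ${\rm Re}\,\phi<0$ throughout ${\rm Int}\,{\cal S}$, the non-degeneracy of $\phi_A$ on ${\cal S}$ when $a<1$ (since the sole critical point $1/a$ lies off ${\cal S}$), and, when $a\geq 1$, the local quadratic behavior of $\phi_A$ at its critical point $\beta=1/a$ producing the $\pi/2$ corner and the sign pattern in the four sectors. The extra detail you supply (explicit angle bookkeeping and the coordinate $w=\sqrt{\phi_A}$) merely fleshes out what the paper summarizes as the ``local analysis of the harmonic function ${\rm Re}\,\phi_A(z)$.''
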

\begin{proof} From the definition \eqref{phi0} of $\phi$, ${\rm Re}\,\phi$
is harmonic function away from ${\cal S}$ and the origin. Since
${\rm Re}\,\phi(z)$ diverges to $-\infty$ as $z$ goes to $0$, ${\rm
Re}\,\phi(z)$ has to be negative everywhere in ${\rm
{Int}\,\mathcal{ S}}$ -- otherwise ${\rm Re}\,\phi(z)$ has a local
maximum in ${\rm Int}\,{\cal S}$, which is impossible. For $a<1$,
since the only critical point, $1/a$, of $\phi$ is away from ${\cal
S}$ and since ${\rm Re}\,\phi_A$ is harmonic in a neighborhood of
$\cal S$, ${\rm Re}\,\phi$ is negative in the vicinity of $\cal S$.
For $a\geq1$, since $\beta$ is the only critical point of $\phi_A$,
the claim in the lemma about the local shape of ${\cal S}$ near
$\beta$ and about $\partial V$ being intersecting $\beta$ follows by
the local analysis of the harmonic function ${\rm Re}\,\phi_A(z)$.
Specifically, ${\rm Re}\,\phi_A(z)$ is positive along the real axis
on $(0, \infty)\setminus\{\beta\}$, and is negative near $\beta$ in
the vertical direction (i.e. imaginary direction) from $\beta$.
\end{proof}
\begin{figure}
\begin{center}
\includegraphics[width=0.55\textwidth]{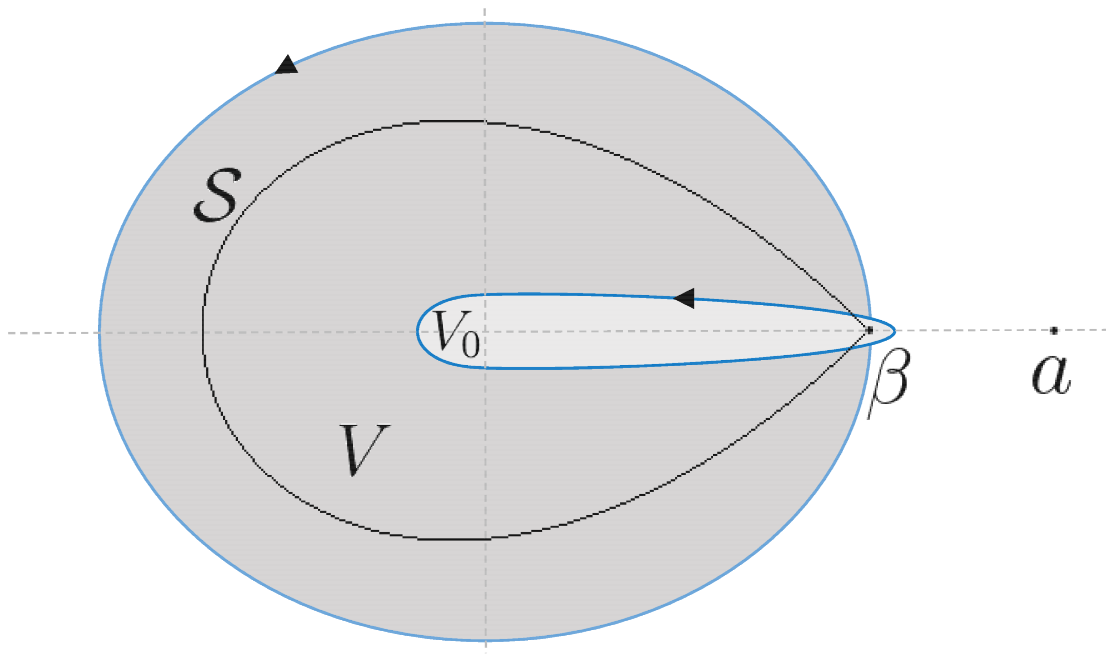}
\includegraphics[width=0.42\textwidth]{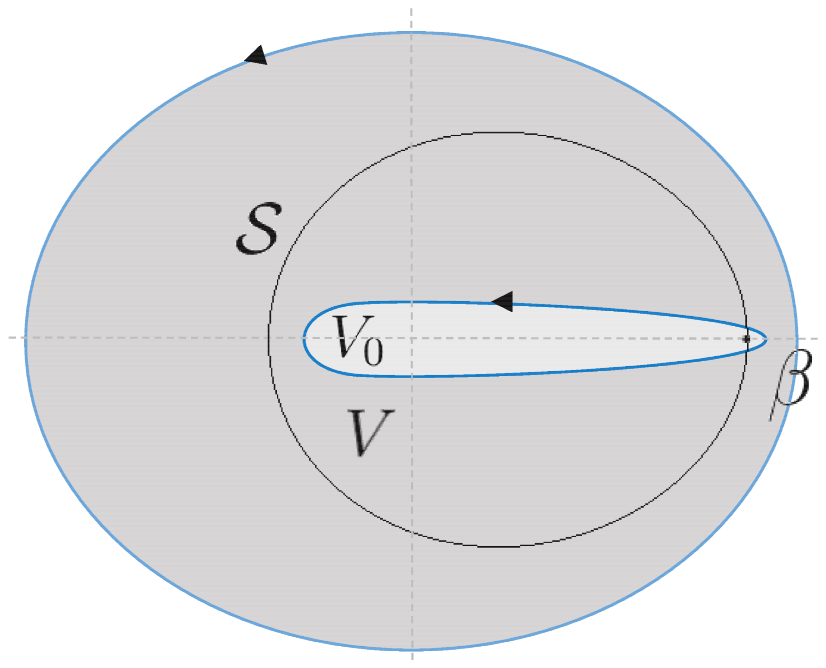}
\end{center}
\caption{ $V$ and $V_0$ for $a>1$ (left) and $a<1$ (right), ${\cal
S}$ is the black curve, $V$ is the interior of the contour enclosing
the shaded region, $V_0$ is the interior of the contour enclosing
the non--shaded region. These domains are used to define the domain
$U$ at \eqref{uv}. }\label{lens}
\end{figure}

\begin{figure}
\includegraphics[width=\textwidth]{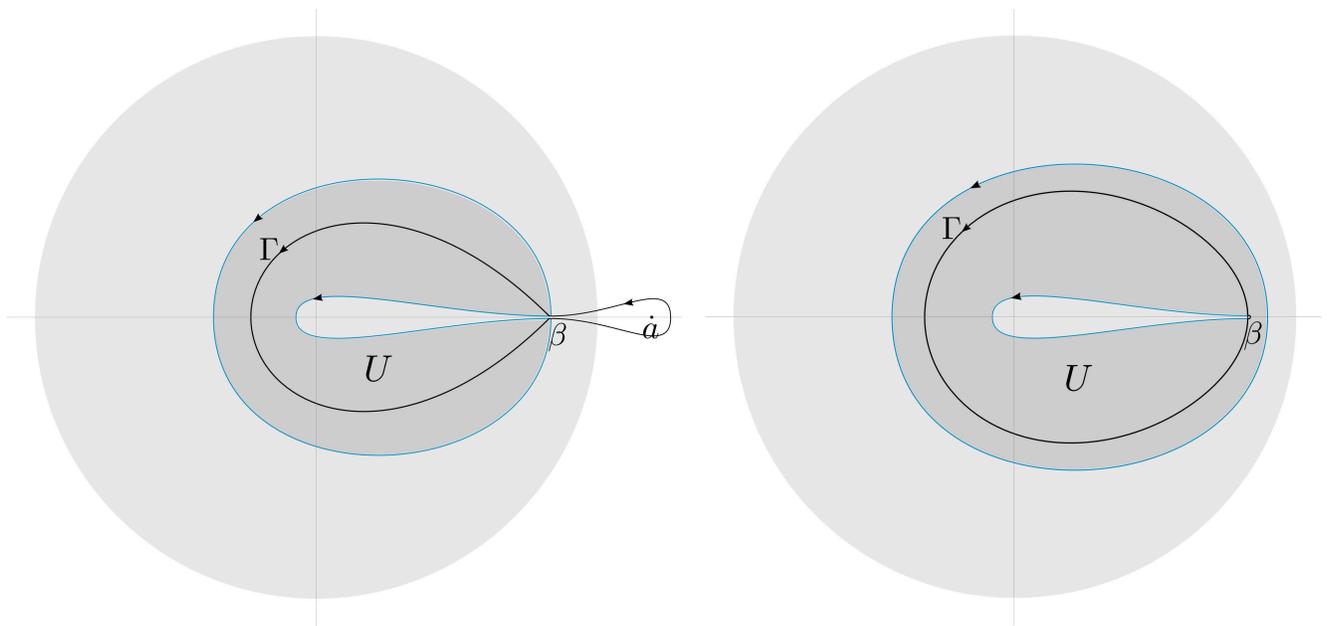}
\caption{ Contours for the Riemann-Hilbert problem of $\Phi$ when
$a>1$ (left) and $a<1$ (right). $\Gamma$ is the black curves and $U$
is the shaded region bounded by the blue curves.}\label{lens1}
\end{figure}

Using $V$ from the above lemma, we define the domain $U$ as below
\begin{equation}\label{uv}U=V\setminus \overline{V_0}.\end{equation} Here $V_0$ is a small open neighborhood of
$[0,\beta]$ such that its boundary, $\partial V_0$, is a smooth
Jordan curve that is arbitrarily close to $[0,\beta]$, see Figure
\ref{lens}. The region $U$ is the simply-connected (when $a\geq 1$)
or doubly-connected (when $a<1$) open neighborhood of ${\cal
S}\setminus \overline{V_0}$, disjoint from $[0,a]$ and with a
(piecewise) smooth boundary. We assign the counterclockwise
orientation on $\partial U\cap{\rm Ext}\,{\cal S}$ with respect to
the domain $U$ and the counterclockwise orientation on $\partial
U\cap{\rm Int}\,{\cal S}$ with respect to $V_0$.

From now on we let $\Gamma$ exactly match ${\cal S}$ inside $U$ and
away from a small neighborhood of $\beta$. When $a>1$, a part of the
contour $\Gamma$ goes outside $U$ around the line segment
$[\beta,a]$, see Figure \ref{lens1}. Near $\beta$ the reader should
not be concerned too much about the exact arrangement of  $\Gamma$
and $U$ as it will become clear when we define the local parametrix.

Below we define the complex logarithmic potential of $\mu$
\eqref{mu} by
$$g(z)=\int\log (z-w)\, d\mu(w), $$
where the specific branch of the $\log$ is chosen below. As a
function of $z$, this equals $\log z$ (modulo $2\pi\mathrm{i}$) when
$z\in{\rm Ext}\,{\cal S}$ by \eqref{balayage} and Theorem
\ref{thm1}, and has continuous real part, since the jump of $g$ on
${\cal S}$ is purely imaginary. These properties and \eqref{skele}
determine the explicit expression of this function as follows,
$$ g(z)=  \begin{cases} \log z ,\qquad & z \in \overline{{\rm Ext}\,{\cal S}},
\\ az+\log \beta-a\beta , &z\in{\rm Int}\,{\cal S}. \end{cases} $$
From the $g$-function above, we can write
\begin{equation}\nonumber
\phi(z) = az+\log z -2g(z) + \ell,\quad \ell = \log \beta-a\beta
\end{equation}
so that ${\rm Re}\,\phi(z)=0$ when $z\in{\cal S}$.

Following the standard nonlinear steepest descent method \cite{Deift
1999,DKMVZ 1999} applied to the matrix Riemann-Hilbert problem for
$Y$, we define $Z$ as the final object after the multiple transforms
of $Y$ given by
\begin{equation}\label{t1} Z(z)= e^{\frac{-N\ell}{2}\sigma_3}Y(z)\,e^{-Ng(z)\sigma_3}e^{\frac{N\ell}{2}\sigma_3}\begin{bmatrix}
1&0 \\\displaystyle \star\,
\Big(\frac{z}{z-a}\Big)^{c}e^{N\phi(z)}&1
\end{bmatrix}, \end{equation}
where
$$\star=\begin{cases} 1,\quad&\text{when $z\in U\cap {\rm Ext}\,{\Gamma}$,}\\-1,\quad&\text{when $z\in U\cap {\rm Int}\,{\Gamma}$,}\\0,\quad&\text{when $z\notin U$.}\end{cases} $$
Then $Z$ solves the following Riemann-Hilbert problem,
\begin{equation}\label{v1}
\left\{
\begin{array}{lll}
%Z \text{ is holomorphic in } \mathbb{C}\setminus (\Gamma\cup\partial U\cup[0,a]), \vspace{0.2cm}\\
Z_+(z)=Z_-(z)\begin{bmatrix}
1&0 \\
\big(\frac{z}{z-a}\big)^{c}e^{N\phi(z)}&1
\end{bmatrix},& z\in\partial U , \vspace{0.2cm}\\
Z_+(z)=Z_-(z)\begin{bmatrix}
0&\big(\frac{z-a}{z}\big)^{c} \\
-\big(\frac{z}{z-a}\big)^{c}&0
\end{bmatrix} , & z\in\Gamma\cap U, \vspace{0.2cm}\\
Z_+(z)=Z_-(z)\begin{bmatrix}
1&\big(\frac{z-a}{z}\big)^{c}e^{-N\phi(z)} \\
0&1
\end{bmatrix} , & z\in \Gamma\setminus U.
\vspace{0.2cm}\\
Z(z)=I+\mathcal {O}(z^{-1}) , & z\rightarrow \infty.
\end{array}
\right.
\end{equation}

  We define
\begin{equation}\nonumber
\Phi(z)= \left\{
\begin{array}{lll}
\begin{bmatrix}
\displaystyle\Big(\frac{z}{z-\beta}\Big)^{c}&0 \\
0&\displaystyle\Big(\frac{z-\beta}{z}\Big)^{c}
\end{bmatrix}, & z\in {\rm Ext}\,{\Gamma},\vspace{0.2cm}\\
\begin{bmatrix}
0&\displaystyle\Big(\frac{z-a}{z-\beta}\Big)^{c} \\\displaystyle
-\Big(\frac{z-\beta}{z-a}\Big)^{c}&0
\end{bmatrix}, & z\in {\rm Int}\,{\Gamma},
\end{array}
\right.
\end{equation}
that satisfies the Riemann-Hilbert problem,
\begin{equation}\nonumber
\left\{
\begin{array}{lll}
\Phi_+(z)=\Phi_-(z)\begin{bmatrix}
0&\displaystyle\Big(\frac{z-a}{z}\Big)^{c} \\\displaystyle
-\Big(\frac{z}{z-a}\Big)^{c}&0
\end{bmatrix}, & z\in{\cal S},\vspace{0.2cm}\\\displaystyle
\Phi(z)=I+\mathcal {O}\left(\frac{1}{z}\right), & z\to\infty.
\end{array}
\right.
\end{equation}
Note that, when $a\leq 1$ and $z\in {\rm Int}\,{\cal S}$ we have
$\Phi(z) = \left[0~~1\atop-1~0\right]$. Also note that $\Phi$ is not
the only solution to the above Riemann-Hilbert problem -- for any
rational matrix function ${\cal R}(z)$ with a pole at $\beta$ such
that ${\cal R}(\infty)=I$, ${\cal R}(z)\Phi(z)$ is a solution.  We
will use this fact in the next section.

\section{$a>1$: when $c$ near $0$}\label{sec4}
 From the definition of $\phi_A$ at \eqref{phi0}, we obtain
\begin{equation*}
\phi_A(z)=\frac{a^2}{2}(z-\beta)^2\left(1+\mathcal
{O}(z-\beta)\right).
\end{equation*}
Let $D_{\beta}$ be a disk centered at $\beta$ such that there exists
a univalent map $\zeta: D_{\beta}\rightarrow\mathbb{C}$ as defined
in \eqref{zetamap}.
%\begin{equation}
%\zeta(z)^2=(-1)^{\chi_{{\rm Int}{\cal S}} } 2N\phi(z) \quad
%\text{and}\quad \frac{\zeta(z)}{N^{1/2}}=a(z-\beta)(1+\mathcal{O}(z-\beta)). \end{equation}
Under the mapping $\zeta$ the contour ${\cal S}$ maps into $[0,
e^{3\pi\mathrm{i}/4}t]\cup[0,
e^{-3\pi\mathrm{i}/4}t]_{t\in[0,\infty)}.$

In this section we intend to find $\mathcal
{P}:D_\beta\to\CC^{2\times 2}$ such that
\begin{equation}\label{samejump}Z^\infty(z)= \Phi(z)\left(\frac{z-a}{z}\right)^{\frac{c}{2}\sigma_3}\mathcal
 {P}(z)\left(\frac{z-a}{z}\right)^{-\frac{c}{2}\sigma_3},\quad z\in
 D_\beta
 \end{equation}
 satisfies
the jump condition of $Z$ at (\ref{v1}), i.e., we require ${\cal P}$
to satisfy, in $D_\beta$,
\begin{equation}\label{pj}
\left\{
\begin{array}{lll}
\mathcal {P}_{+}(z)=\mathcal {P}_{-}(z)\begin{bmatrix}
1&e^{-\zeta(z)^2/2} \\
0&1
\end{bmatrix}, & z\in \Gamma\setminus U,
\vspace{0.2cm}\\
\mathcal {P}_{+}(z)=\mathcal {P}_{-}(z)\begin{bmatrix}
1&0 \\
e^{\zeta(z)^2/2}&1
\end{bmatrix}, & z\in\partial U\cap{\rm Ext}\,\Gamma,
\vspace{0.2cm}\\
\mathcal {P}_{+}(z)=\mathcal {P}_{-}(z)\begin{bmatrix}
1&0 \\
e^{-\zeta(z)^2/2}&1
\end{bmatrix}, & z\in\partial U\cap{\rm Int}\,\Gamma,\vspace{0.2cm}\\
\mathcal {P}_{+}(z)=\begin{bmatrix}
0&-1 \\
1&0
\end{bmatrix}\mathcal {P}_{-}(z)\begin{bmatrix}
0&1 \\
-1&0
\end{bmatrix}, & z\in\Gamma\cap U,\vspace{0.2cm}\\
\mathcal {P}_{+}(z)=e^{-{c\pi\mathrm{i}}\sigma_3}\mathcal
{P}_{-}(z)e^{{c\pi\mathrm{i}}\sigma_3}, & z\in\RR ,\vspace{0.2cm}
\end{array}
\right.
\end{equation}
and the boundary condition, ${\cal P}(z)\sim I$ on $\partial
D_\beta$.   The fourth equation of \eqref{pj} comes from $\Phi$ in
\eqref{samejump} and the last equation comes from the (conjugating)
factors $\big((z-a)/z\big)^{\pm (c/2)\sigma_3}$ in \eqref{samejump}.
The jump contours, $\Gamma\setminus U$ and $\partial U\cap{\rm
Int}\,\Gamma$, can be pushed arbitrarily close to the real axis, so
that the jump contours of ${\cal P}$ consists of $\RR$, $i\RR$ and
$\{t \,e^{\pm \mathrm{i}3\pi/4}\}_{0<t<\infty}$. See Figure
\ref{pic5} for the illustration of the jump contours in $D_\beta$.

\begin{figure}
\includegraphics[width=\textwidth]{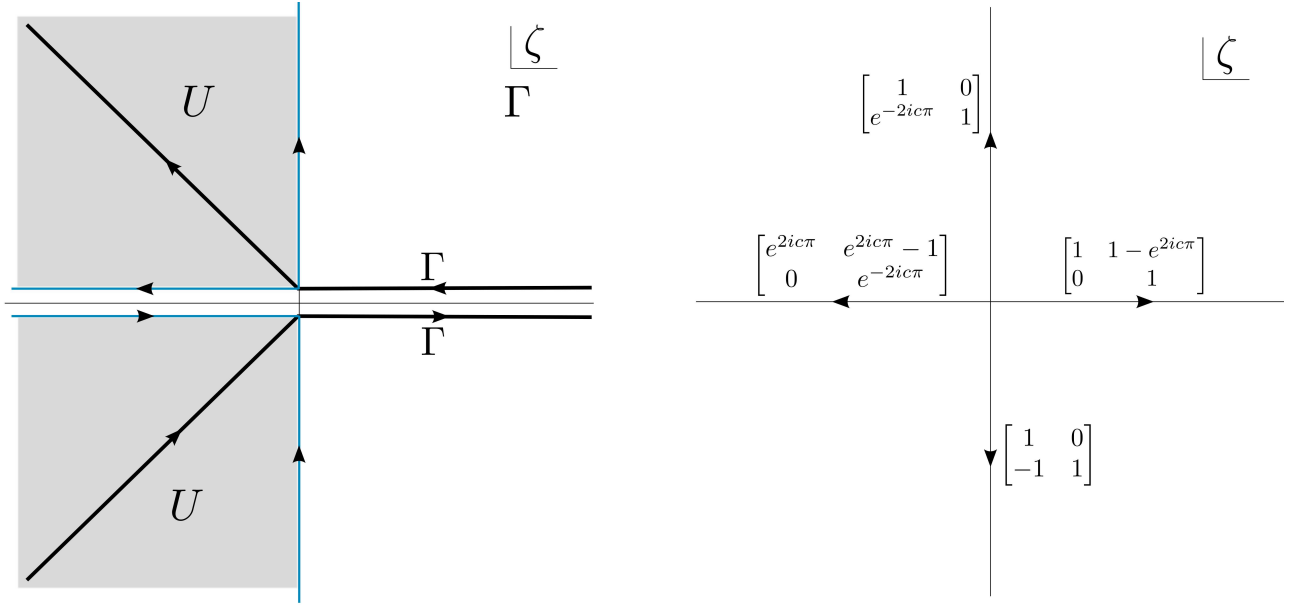}
\caption{Jump contours of ${\cal P}$ \eqref{pj} in $D_\beta$ (left)
and the jump matrices of $W$ (right)} \label{pic5}
\end{figure}

We want to transform ${\cal P}$ into a new matrix function, $W$,
that has only {\it constant jump matrices from the right}.   Such
transform may be given by
\begin{equation}\label{W} W(z):= \zeta(z)^{-{c}\sigma_3}S \cdot \mathcal
{P}(z)\cdot {T(\zeta(z))^{-1}\, S^{-1}},
\end{equation} using a diagonal matrix, $T$, and a piecewise
constant matrix, $S$, defined below, \begin{equation}\label{TS1}
T(\zeta) =\left\{\begin{array}{lll}\displaystyle \exp\left(
\frac{\zeta^2}{4}\sigma_3\right), & |\arg
\zeta|<3\pi/4,\vspace{0.4cm}\\\displaystyle
\exp\left(-\frac{\zeta^2}{4} \sigma_3\right), & \text{otherwise},
\end{array}\right.
\end{equation}
and
\begin{equation}\label{TS2} S=\left\{
\begin{array}{lll}
I,  & {\rm Im}\,{\zeta}<0\cap|\arg
\zeta|<3\pi/4,\vspace{0.2cm}\\
e^{c\pi\mathrm{i}\sigma_3}, & {\rm Im}\,{\zeta}>0\cap|\arg
\zeta|<3\pi/4,\vspace{0.2cm}\\
\begin{bmatrix}
0&1 \\
-1&0
\end{bmatrix}, & {\rm Im}\,{\zeta}<0\cap|\arg
\zeta|\geq3\pi/4,\vspace{0.2cm}\\
e^{c\pi\mathrm{i}\sigma_3}\begin{bmatrix}
0&1 \\
-1&0
\end{bmatrix}, & {\rm Im}\,{\zeta}>0\cap|\arg
\zeta|\geq3\pi/4.
\end{array}
\right.
\end{equation}
Here we choose $S$ such that $S^{-1}\zeta(z)^{c\sigma_3}$ satisfies
all the {\it left} jumps of ${\cal P}$, i.e.,
\begin{equation*}\begin{split}  \left(S^{-1}\zeta^{c\sigma_3}\right)_+ &=\begin{bmatrix}
0&-1 \\
1&0
\end{bmatrix}
\left( S^{-1}\zeta^{c\sigma_3}\right)_- ,  \,\quad z\in\Gamma\cap U,
 \\
\left(S^{-1}\zeta^{c\sigma_3}\right)_+ &= e^{-c\pi
\mathrm{i}\sigma_3} \left(S^{-1}\zeta^{c\sigma_3}\right)_- , \qquad
z\in\RR ,
\end{split}
\end{equation*}
such that $W$ has the jump matrices only from the {\it right}.
Furthermore, the jump matrices of $W$ are constant matrices because
of the right multipliction of $T^{-1}$ in \eqref{W}. The jump on
$\{t\,e^{\pm \mathrm{i}3\pi/4}\}_{0<t<\infty}$ disappears by the
right multiplication by $S^{-1}$.  We summarize the jump matrices of
$W$ below,
\begin{equation}\label{pj4}
W_{+}(z)=W_{-}(z)\left\{
\begin{array}{lll}
\begin{bmatrix}
1&1-e^{2\mathrm{i}c\pi} \\
0&1
\end{bmatrix},  & \zeta(z)\in\mathbb{R}^{+},\vspace{0.2cm}\\
\begin{bmatrix}
1&0 \\
e^{-2\mathrm{i}c\pi}&1
\end{bmatrix}, & \zeta(z)\in\mathrm{i}\mathbb{R}^{+},\vspace{0.2cm}\\
\begin{bmatrix}
e^{2\mathrm{i}c\pi}&e^{2\mathrm{i}c\pi}-1 \\
0&e^{-2\mathrm{i}c\pi}
\end{bmatrix}, & \zeta(z)\in\mathbb{R}^{-},\vspace{0.2cm}\\
\begin{bmatrix}
1&0 \\
-1&1
\end{bmatrix}, & \zeta(z)\in\mathrm{i}\mathbb{R}^{-}.
%\vspace{0.2cm}\\ W(z)=\zeta^{-{c}\sigma_3}\left(I+\mathcal {O}\left(\frac{1}{\zeta^2}\right)\right)T^{-1}(z), \quad z\to\infty,
\end{array}
\right.
\end{equation}
The following fact can be checked by direct calculation.

\begin{lemma}\label{LemN} For $z\in D_\beta$ we have
\begin{equation*}
\Phi(z)\left(\frac{z-a}{z}\right)^{\frac{c}{2}\sigma_3}S^{-1}\zeta^{{c}\sigma_3}=
\left(N^{c/2} \eta(z)\right)^{\sigma_3},
\end{equation*}
where $\eta:D_\beta\to\CC$ ,
\begin{equation}\nonumber
\eta(z): =  \frac{e^{-ic\pi/2}}{N^{c/2}}
\left(\frac{a-z}{z}\right)^{\frac{c}{2}}\left(\frac{z\,\zeta(z)}{z-\beta}\right)^{c}
\end{equation}
is a nonvanishing $N$-independent analytic function in $D_\beta$.
\end{lemma}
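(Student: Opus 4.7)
The identity is a pointwise matrix equality, and since $S$ is locally constant it suffices to verify it in each of the four angular sectors of $D_\beta$ listed in \eqref{TS2}. The key observation is that in every sector the left-hand side is automatically diagonal: for $|\arg\zeta|<3\pi/4$ (which corresponds to $\mathrm{Ext}\,\Gamma\cap D_\beta$) both $\Phi$ and $S^{-1}$ are diagonal, while for $|\arg\zeta|>3\pi/4$ (which corresponds to $\mathrm{Int}\,\Gamma\cap D_\beta$) $\Phi$ is antidiagonal and $S^{-1}$ carries the antidiagonal factor $\bigl(\begin{smallmatrix}0 & -1 \\ 1 & 0 \end{smallmatrix}\bigr)$, so the product $\Phi\cdot(\text{diag})\cdot S^{-1}\cdot\zeta^{c\sigma_3}$ has the structure (antidiag)$\cdot$(diag)$\cdot$(antidiag)$\cdot$(diag), which is again diagonal. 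Every factor on the left has determinant $1$, so it is enough to check the $(1,1)$ entry.

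Then I would carry out the $(1,1)$ calculation in a representative sector, say $\{\mathrm{Im}\,\zeta<0,\,|\arg\zeta|<3\pi/4\}$ where $S=I$ and $\Phi=\mathrm{diag}\bigl((z/(z-\beta))^c,\,((z-\beta)/z)^c\bigr)$. The $(1,1)$ entry of the left-hand side reduces to
$$\left(\frac{z}{z-\beta}\right)^{c}\left(\frac{z-a}{z}\right)^{c/2}\zeta(z)^{c}=\left(\frac{z\,\zeta(z)}{z-\beta}\right)^{c}\left(\frac{z-a}{z}\right)^{c/2},$$
and using the branch identity $(z-a)^{c/2}=e^{-\mathrm{i}c\pi/2}(a-z)^{c/2}$ valid on the relevant side of $\RR$, this equals $N^{c/2}\eta(z)$. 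The three remaining sectors follow the same pattern: the factor $e^{c\pi\mathrm{i}\sigma_3}$ sitting in $S$ on the upper half-plane is precisely what absorbs the branch jump of $(z-a)^{c/2}$ across $\RR$ prescribed by the last line of \eqref{pj}, and on the antidiagonal sectors the antidiagonal block of $S^{-1}$ swaps the two entries of $\Phi$ before they combine with $\zeta^c$, again producing $N^{c/2}\eta(z)$ in the $(1,1)$ slot.

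Finally, I would verify that $\eta$ is $N$-independent, holomorphic, and nonvanishing on $D_\beta$. After shrinking $D_\beta$ if necessary so that $\{0,a\}\cap D_\beta=\emptyset$, the factor $((a-z)/z)^{c/2}$ is holomorphic and nonvanishing on $D_\beta$. From \eqref{zetamap}, $\zeta(z)/\sqrt{N}$ is an $N$-independent function with $\zeta(z)/\sqrt{N}=a(z-\beta)+\mathcal{O}((z-\beta)^{2})$, so $\zeta(z)/\bigl((z-\beta)\sqrt{N}\bigr)$ extends to a nonvanishing holomorphic function on $D_\beta$ equal to $a$ at $\beta$; multiplying by the nonvanishing $z$ and raising to the power $c$, the $N^{-c/2}$ prefactor cancels the $N^{c/2}$ coming from $\zeta^c$, so $\eta$ is $N$-independent, holomorphic, and nonvanishing on $D_\beta$, with $\eta(\beta)=e^{-\mathrm{i}c\pi/2}((a-\beta)/\beta)^{c/2}(a\beta)^c$.

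The main obstacle is the branch bookkeeping across the four sectors: one must verify that the piecewise factor $e^{c\pi\mathrm{i}\sigma_3}$ built into $S$, together with the prefactor $e^{-\mathrm{i}c\pi/2}$ in the definition of $\eta$, is consistent with the branch of $(z-a)^{c/2}$ on each side of $\RR$ as prescribed by the last equation of \eqref{pj}. Once these branch choices are pinned down, the remainder of the argument is routine algebra.
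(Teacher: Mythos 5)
Your proof is correct and is, in effect, exactly the ``direct calculation'' that the paper merely asserts without writing out. The sector-by-sector organization is the natural way to do it: in each of the four angular sectors of \eqref{TS2} the product $\Phi\cdot\bigl((z-a)/z\bigr)^{(c/2)\sigma_3}\cdot S^{-1}\cdot\zeta^{c\sigma_3}$ is diagonal (diag$\times$diag$\times$diag$\times$diag in the two outer sectors, antidiag$\times$diag$\times$antidiag$\times$diag in the two inner ones), all factors are unimodular so checking the $(1,1)$ entry suffices, and the branch identities $(z-a)^{c/2}=e^{\mp ic\pi/2}(a-z)^{c/2}$ on the lower/upper side of $\RR$ match exactly the factor $e^{-ic\pi/2}$ in $\eta$ together with the $e^{c\pi\mathrm{i}\sigma_3}$ built into $S$ above the real axis. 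Your verification of the $N$-independence, holomorphy, and nonvanishing of $\eta$ (including the value $\eta(\beta)=e^{-\mathrm{i}c\pi/2}((a-\beta)/\beta)^{c/2}(a\beta)^c$) is also right and uses the correct normalization $\zeta(z)=a\sqrt N(z-\beta)(1+\mathcal O(z-\beta))$ from \eqref{zetamap}. The one thing worth stating explicitly, which you leave implicit, is that the Ext/Int $\Gamma$ dichotomy in the definition of $\Phi$ should be read near $\beta$ as the Ext/Int $\mathcal S$ dichotomy (i.e.\ the sectors $|\arg\zeta|\lessgtr 3\pi/4$), consistent with the RHP that $\Phi$ is asserted to solve on $\mathcal S$; this is what makes the antidiagonal block of $\Phi$ appear precisely where $S$ carries the antidiagonal permutation.
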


Using the parabolic cylinder function \eqref{para!} we define ${\cal
W}:\CC\setminus(\RR\cup\mathrm{i}\RR)\to\CC^{2\times 2}$ as
\begin{equation}\label{ww}
{\cal W}(\zeta)=\left\{
\begin{array}{lll}
\begin{bmatrix}
D_{-{c}}(\zeta)&\frac{\mathrm{i}\sqrt{2\pi}e^{\frac{c\pi \mathrm{i}}{2}}}{\Gamma({c})}D_{-1+c}(\mathrm{i}\zeta) \\
-\frac{\Gamma({c}+1)}{\sqrt{2\pi}e^{{c\pi
\mathrm{i}}}}D_{-1-{c}}(\zeta)&e^{-\frac{c\pi
\mathrm{i}}{2}}D_{{c}}(\mathrm{i}\zeta)
\end{bmatrix}, & -\frac{\pi}{2}<\arg(\zeta)<0,\vspace{0.2cm}
\\
\begin{bmatrix}
D_{-{c}}(\zeta)&-\frac{\mathrm{i}\sqrt{2\pi}e^{\frac{3c\pi \mathrm{i}}{2}}}{\Gamma({c})}D_{-1+{c}}(-\mathrm{i}\zeta) \\
-\frac{\Gamma({c}+1)}{\sqrt{2\pi}e^{{c\pi
\mathrm{i}}}}D_{-1-{c}}(\zeta)&e^{\frac{c\pi
\mathrm{i}}{2}}D_{{c}}(-\mathrm{i}\zeta)
\end{bmatrix}, & 0<\arg(\zeta)<\frac{\pi}{2},\vspace{0.2cm}
\\
\begin{bmatrix}
e^{-{c\pi \mathrm{i}}}D_{-{c}}(-\zeta)&-\frac{\mathrm{i}\sqrt{2\pi}e^{\frac{3c\pi \mathrm{i}}{2}}}{\Gamma({c})}D_{-1+{c}}(-\mathrm{i}\zeta) \\
\frac{\Gamma(1+{c})}{\sqrt{2\pi}e^{{2c\pi
\mathrm{i}}}}D_{-1-{c}}(-\zeta)&e^{\frac{c\pi
\mathrm{i}}{2}}D_{{c}}(-\mathrm{i}\zeta)
\end{bmatrix},&\frac{\pi}{2}<\arg(\zeta)<\pi,\vspace{0.2cm}
\\
\begin{bmatrix}
e^{{c\pi \mathrm{i}}}D_{-{c}}(-\zeta)&\frac{\mathrm{i}\sqrt{2\pi}e^{\frac{c\pi \mathrm{i}}{2}}}{\Gamma({c})}D_{-1+{c}}(\mathrm{i}\zeta) \\
\frac{\Gamma(1+{c})}{\sqrt{2\pi}}D_{-1-{c}}(-\zeta)&e^{-\frac{c\pi
\mathrm{i}}{2}}D_{{c}}(\mathrm{i}\zeta)
\end{bmatrix},&\pi<\arg(\zeta)<\frac{3\pi}{2}.
\end{array}
\right.
\end{equation}

\begin{lemma}\label{lemma4}
There exists the asymptotic expansion of $D_{-c}(\zeta)$ given by
\begin{equation}\label{D}
D_{-c}(\zeta)=e^{-\frac{\zeta^2}{4}}\zeta^{-c}\left(\sum_{s=0}^{n-1}(-1)^s\frac{(c)_{2s}}{s!(2\zeta^2)^s}+\varepsilon_n(\zeta)\right),\quad|\rm
arg\,\zeta|< \frac{\pi}{2}.
\end{equation}
 There exists a constant $C>0$
independent of $c$ such that
$$|\varepsilon_n(\zeta)|\leq
C\left|\frac{(\frac{c}{2})_n(\frac{c+1}{2})_n}
{n!(\zeta^2)^{n}}\right|,\quad|\rm arg\,\zeta|< \frac{\pi}{2}.$$
Here, $(\cdot)_n$ is Pochhammer's Symbol defined by
$(x)_n=\Gamma(x+n)/\Gamma(x)$.
\end{lemma}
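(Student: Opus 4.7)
The plan is to work from the classical Laplace-type integral representation
$$D_{-c}(\zeta)=\frac{e^{-\zeta^2/4}}{\Gamma(c)}\int_0^\infty t^{c-1}e^{-\zeta t-t^2/2}\,dt,$$
valid for $\operatorname{Re}c>0$ and $\operatorname{Re}\zeta>0$, then extend to the full stated sector $|\arg\zeta|<\pi/2$ by contour rotation and to $-1<c\le 0$ by analytic continuation in $c$. After the substitution $u=\zeta t$ the integral factors as
$$D_{-c}(\zeta)=\frac{e^{-\zeta^2/4}\zeta^{-c}}{\Gamma(c)}\int_0^\infty u^{c-1}e^{-u}\,e^{-u^2/(2\zeta^2)}\,du.$$
Writing the inner exponential as a Taylor polynomial of degree $n-1$ in $u^2/(2\zeta^2)$ together with its integral remainder, and integrating each polynomial term against $e^{-u}$ using $\int_0^\infty u^{c+2s-1}e^{-u}\,du=\Gamma(c+2s)=(c)_{2s}\Gamma(c)$, reproduces the main series of \eqref{D} exactly. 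The error $\varepsilon_n(\zeta)$ is then the image, under the same integration, of the Taylor remainder.

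For the uniform bound I would first treat $|\arg\zeta|\le\pi/4$, where $\operatorname{Re}(1/\zeta^2)\ge 0$ implies $|e^{-\tau u^2/(2\zeta^2)}|\le 1$ for $\tau\in[0,1]$ and $u\ge 0$. A direct estimate of the remainder integral then gives
$$|\varepsilon_n(\zeta)|\le\frac{1}{n!\,|2\zeta^2|^n}\cdot\frac{\Gamma(c+2n)}{\Gamma(c)}=\frac{(c)_{2n}}{n!\,|2\zeta^2|^n},$$
and the Legendre duplication identity $(c)_{2n}=4^n(c/2)_n((c+1)/2)_n$ rewrites this as $2^n(c/2)_n((c+1)/2)_n/(n!\,|\zeta|^{2n})$, so that the $n$-dependent prefactor $2^n$ can be absorbed into $C$. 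For the wider range $\pi/4<|\arg\zeta|<\pi/2$ I would rotate the $u$-contour from $[0,\infty)$ to $e^{i\theta}[0,\infty)$, where $\theta$ is chosen so that both $\operatorname{Re}(e^{i\theta})>0$ (ensuring decay against $e^{-u}$) and $\operatorname{Re}(e^{2i\theta}/\zeta^2)\ge 0$ hold; Cauchy's theorem legitimises the deformation because the integrand is entire in $t$ and decays sufficiently fast at infinity along the rotated ray.

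The main difficulty, in my view, is producing a bound that is genuinely uniform in $c\in(-1,\infty)$, in particular across $c=0$ where the factor $\Gamma(c)$ in the Laplace representation has a pole. The decisive point is that the combination $\Gamma(c+2n)/\Gamma(c)$ appearing in the estimate cancels this singularity exactly, leaving the finite Pochhammer ratio that features in the target bound. For the remaining range $-1<c\le 0$, where the original Laplace integral fails to be absolutely convergent, I would rely on analyticity: both $D_{-c}(\zeta)$ (entire in $c$) and the explicit partial sum are analytic in $c$, hence so is $\varepsilon_n(\zeta)$, and the estimate already proved for $\operatorname{Re}c>0$ extends by continuity to any compact sub-interval of $(-1,\infty)$. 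Alternatively, one may run the same Taylor expansion directly on the contour representation \eqref{para!}, which is valid for every $c$; the combinatorics and the remainder estimate are identical, and the $\Gamma$-cancellation again yields the uniformity stated in the lemma.
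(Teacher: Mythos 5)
Your approach is genuinely different from the paper's. The paper proves the lemma by invoking the relation $D_{-c}(\zeta)=2^{-c/2}e^{-\zeta^2/4}\,U\!\left(\tfrac{c}{2},\tfrac12,\tfrac{\zeta^2}{2}\right)$ to the confluent hypergeometric function and citing the explicit error-bounded asymptotic expansion of $U$ from the NIST Handbook, whereas you rederive the expansion directly from the classical Laplace representation of $D_{-c}$. For $\operatorname{Re} c>0$ your derivation is sound and rather clean: after $u=\zeta t$, the Taylor polynomial of $e^{-u^2/(2\zeta^2)}$ with integral remainder, the identity $\Gamma(c)^{-1}\int_0^\infty u^{c-1+2s}e^{-u}\,du=(c)_{2s}$, the duplication formula $(c)_{2n}=4^n\left(\tfrac{c}{2}\right)_n\left(\tfrac{c+1}{2}\right)_n$, and the observation that $\operatorname{Re}(1/\zeta^2)\ge 0$ forces $|e^{-\tau u^2/(2\zeta^2)}|\le 1$ all fit together to produce the bound with $C=2^n$; the rotation $u\mapsto e^{\mathrm{i}\theta}u$ with $|\theta-\arg\zeta|\le\pi/4$, $|\theta|<\pi/2$ handles the wider sector at the cost of a factor $(\cos\theta)^{-(c+2n)}\le 2^{n+1}$, again absorbed into the $n$-dependent $C$. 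This is self-contained and elementary where the paper delegates the hard estimate to a citation.

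There is, however, a genuine gap for $-1<c\le 0$, which is precisely the range of interest in the paper (Theorem \ref{thma1} needs uniformity over $c\in[-1/2,1/2]$, including negative $c$). The Laplace integral diverges at $t=0$ for $c\le 0$, so your explicit calculation is unavailable there. The proposed remedy --- that the bound ``extends by continuity to any compact sub-interval of $(-1,\infty)$'' because $\varepsilon_n$ and the comparison quantity are analytic in $c$ --- is not valid reasoning: analyticity lets you pass an inequality to the boundary point $c=0$ as a limit, but it does not propagate the inequality into $c<0$. (A bounded analytic function on a half-line does not remain bounded when continued past the endpoint.) To close the gap you would need to do one of the two things you gesture at but do not execute: either analytically continue the Laplace representation through $c=0$ (for instance by integration by parts, or by splitting at $t=1$ and expanding the $[0,1]$ piece) and re-derive the remainder estimate from the continued form; or carry out the Taylor expansion and remainder estimate directly on the vertical-line contour representation \eqref{para!}, which is indeed valid for all $c$, but whose remainder analysis is a nontrivial saddle-point estimate rather than ``identical'' to the Laplace case. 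The paper's route avoids this entirely because the NIST error bound for $U(a,b,z)$ is stated uniformly in $a$, and hence in $c$, with no sign restriction.
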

The proof of this lemma is in \ref{app2}. Though the lemma only
concerns $|\rm arg\,\zeta|<\pi/2$, this turns out to cover every
term that appears in ${\cal W}(\zeta)$ of \eqref{ww} and leads to
the following lemma.
\begin{lemma}\label{prop}
${\cal W}(\zeta(z))$ satisfies the jump of $W$ \eqref{pj4} and the
asymptotic behavior
\begin{equation}\label{F0} {\cal F}(\zeta):= {\cal W}(\zeta)\,\zeta^{{c}\sigma_3}e^{\frac{\zeta^2}{4}\sigma_3}=I+\frac{C_1}{\zeta}+\frac{C_2}{\zeta^{2}}
+{\cal O}\left(\frac{1}{\zeta^{3}}\right)
\end{equation}
as $|\zeta|$ goes to $\infty$, where
$$C_1=\begin{bmatrix}
0&\frac{\sqrt{2\pi}e^{\mathrm{i}\pi c}}{\Gamma(c)} \\
-\frac{\Gamma(c+1)}{\sqrt{2\pi}e^{\mathrm{i}\pi c}}&0
\end{bmatrix}\quad \text{and} \quad C_2=\begin{bmatrix}
-\frac{c(c+1)}{2}&0 \\
0&\frac{c(c-1)}{2}
\end{bmatrix}.$$
Moreover, as $c\to 0$ and $|\zeta|\to\infty$, we get
\begin{align}\label{f11}
&{\cal F}(\zeta)F_1(\zeta)^{-1}= I +
\begin{bmatrix} \displaystyle{\cal O}\left(c\,\zeta^{-2}\right)
&\displaystyle{\cal O}\left(c\,\zeta^{-3}\right)
\vspace{0.4cm}\\\displaystyle {\cal O}\left(\zeta^{-1}\right)
&\displaystyle {\cal O}\left(c\,\zeta^{-2}\right)
\end{bmatrix},
%\begin{bmatrix} \displaystyle{\cal O}\left(\frac{c(c+1)}{\zeta^2}\right)
%&\displaystyle{\cal O}\left(\frac{(c-1)(c-2)}{\zeta^3\Gamma(c)}\right) \\\displaystyle
%{\cal O}\left(\frac{\Gamma(c+1)}{\zeta}\right) &\displaystyle {\cal O}\left(\frac{c(c+1)}{\zeta^2}\right) \end{bmatrix},
\\\label{f21} &
{\cal F}(\zeta)F_1(\zeta)^{-1}F_2(\zeta)^{-1}= I + {\cal
O}\left(\zeta^{-3}\right),
\end{align}
where
\begin{align}\label{f1}
&F_{1}(\zeta) = I+\frac{1}{\zeta}\begin{bmatrix}
0&\displaystyle\frac{\sqrt{2\pi}e^{\mathrm{i}\pi c}}{\Gamma(c)} \\
0&0
\end{bmatrix},\\\label{f2} &
F_{2}(\zeta) = I+\begin{bmatrix}
\displaystyle-\frac{c(c+1)}{2}\frac{1}{\zeta^2}
&\displaystyle\frac{1}{\zeta^3}\frac{\sqrt{2\pi}e^{\mathrm{i}\pi
c}c^2(c+1)^2}{4\Gamma(c+1)} \\\displaystyle
-\frac{\Gamma(c+1)}{\sqrt{2\pi}e^{\mathrm{i}\pi c}}\frac{1}{\zeta}
&\displaystyle\frac{c(c+1)}{2}\frac{1}{\zeta^2}
\end{bmatrix}.
\end{align}
The error bound in \eqref{f11} is uniform over $c\in[-1/2,1/2]$ as
$\zeta$ tends to infinity, and the error bound in \eqref{f21} is for
a fixed $c$.
\end{lemma}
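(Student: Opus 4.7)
The plan is to verify the three claims in turn, all by direct computation with the parabolic cylinder function. The first claim, that $\mathcal{W}(\zeta)$ realizes the constant jumps \eqref{pj4} on $\mathbb{R}\cup\mathrm{i}\mathbb{R}$, is a sector-by-sector comparison of adjacent pieces of the definition \eqref{ww}. Each block is assembled from $D_{-c}(\pm\zeta)$, $D_{-1\pm c}(\pm\zeta)$, $D_{c}(\pm\mathrm{i}\zeta)$, and $D_{-1+c}(\pm\mathrm{i}\zeta)$, and a standard connection formula such as
\[
D_\nu(z) \;=\; e^{\pi\mathrm{i}\nu}D_\nu(-z) \;+\; \frac{\sqrt{2\pi}}{\Gamma(-\nu)}e^{\mathrm{i}\pi(\nu+1)/2}D_{-\nu-1}(-\mathrm{i}z)
\]
applied column by column produces exactly the four constant matrices on the right-hand sides in \eqref{pj4}. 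This step is routine once the correct branches of $\zeta^c$ are fixed.

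For the large-$\zeta$ expansion of $\mathcal{F}(\zeta)=\mathcal{W}(\zeta)\zeta^{c\sigma_3}e^{\zeta^2\sigma_3/4}$, I would work in one fixed sector, say $-\pi/2<\arg\zeta<0$, where both $\zeta$ and $\mathrm{i}\zeta$ lie in the half-plane $|\arg\,\cdot|<\pi/2$ on which the expansion \eqref{D} of Lemma \ref{lemma4} applies directly. For each of the four parabolic cylinder functions entering $\mathcal{W}$, the leading behaviour $D_{-\nu}(z)\sim e^{-z^2/4}z^{-\nu}$ cancels against the conjugation by $\zeta^{c\sigma_3}e^{\zeta^2\sigma_3/4}$ (after accounting for the factors $\mathrm{i}^{\pm c}=e^{\pm\mathrm{i} c\pi/2}$ coming from the powers $(\mathrm{i}\zeta)^{\pm c}$), yielding the identity at leading order. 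Collecting the $1/\zeta$ and $1/\zeta^2$ corrections from \eqref{D} and combining them with the prefactors $\mp\Gamma(c+1)/(\sqrt{2\pi}e^{\mathrm{i}\pi c})$ and $\pm\mathrm{i}\sqrt{2\pi}e^{\mathrm{i}\pi c/2}/\Gamma(c)$ gives the matrices $C_1$ and $C_2$ stated in \eqref{F0}. The expressions in the other three sectors of \eqref{ww} are handled either by the same computation or by using the jumps already verified.

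For the uniform-in-$c$ refinement \eqref{f11}, the key observation is $1/\Gamma(c)=c/\Gamma(c+1)=O(c)$ as $c\to 0$, so the $(1,2)$ entry of $C_1$ is $O(c)$ and the diagonal entries of $C_2$ involve $c(c\pm 1)/2=O(c)$. I would compute $\mathcal{F}F_1^{-1}$ entry by entry, pushing the expansion of $D_{-1+c}(\mathrm{i}\zeta)$ one order further so that the $1/\zeta^3$ contribution to the $(1,2)$ entry is visible; then the diagonal entries deviate from $1$ by $O(c/\zeta^2)$, the $(2,1)$ entry retains its non-vanishing $O(1/\zeta)$ behaviour (it carries $\Gamma(c+1)\to 1$), and the $(1,2)$ entry is pushed to $O(c/\zeta^3)$ because the surviving leading contribution there still carries a $1/\Gamma(c)$ factor. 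The estimate \eqref{f21} is then the same bookkeeping carried out one order further, with $c$ fixed, using $F_2$ to subtract the $1/\zeta$ and $1/\zeta^2$ pieces of $\mathcal{F}F_1^{-1}$.

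The main obstacle is precisely this last, uniform-in-$c$ bookkeeping: one must expand each of the four parabolic cylinder functions at least to the $1/\zeta^3$ term, carry the prefactors $\Gamma(c+1)$ and $1/\Gamma(c)$ consistently through the multiplication by $F_1^{-1}$, and identify which surviving contributions carry an extra factor of $1/\Gamma(c)$ so that they absorb into $O(c)$. One also needs the error term $\varepsilon_n(\zeta)$ of Lemma \ref{lemma4} to be uniform in $c\in[-1/2,1/2]$, which follows because the Pochhammer combination $(c/2)_n(c/2+1/2)_n/n!$ is continuous, and hence bounded, on that interval.
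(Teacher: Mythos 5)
Your proposal is correct and follows essentially the same route as the paper's proof: establish the jumps via the standard parabolic cylinder connection formulas, and obtain the asymptotics by applying Lemma \ref{lemma4} to each of the four entries of ${\cal W}$ in a single fixed sector where both $\zeta$ and $\mathrm{i}\zeta$ lie in $|\arg(\cdot)|<\pi/2$, then extract the factor $1/\Gamma(c)=c+{\cal O}(c^2)$ to get the uniform-in-$c$ bounds in \eqref{f11}. The only small point worth noting is that you do not actually need to bound the Pochhammer factor $(c/2)_n(\tfrac{c+1}{2})_n/n!$ by a constant to justify uniformity — Lemma \ref{lemma4} already asserts $C$ is $c$-independent, and for the proof you want to \emph{retain} the explicit factor $c(c+1)/4$ (for $n=1$) rather than absorb it into the constant, since it is precisely what furnishes the extra power of $c$ in the diagonal and $(1,2)$ entries of ${\cal F}F_1^{-1}$.
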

\begin{proof} The proof of the jump is an exercise using the
following identities \cite{Wh 1996, Dai 2009}:
\begin{equation}\nonumber
\begin{array}{lll}
D_{-c}(\zeta)&=&\frac{\Gamma(1-c)}{\sqrt{2\pi}}\left[e^{\frac{-c\pi\mathrm{i}}{2}}D_{c-1}(\mathrm{i}\zeta)+
e^{\frac{c\pi\mathrm{i}}{2}}D_{c-1}(-\mathrm{i}\zeta)\right],\\
D_{-c}(\zeta)&=&e^{-c\pi\mathrm{i}}D_{-c}(-\zeta)+
\frac{\sqrt{2\pi}}{\Gamma(c)}e^{\frac{(1-c)\pi\mathrm{i}}{2}}D_{c-1}(-\mathrm{i}\zeta),\\
D_{-c}(\zeta)&=&e^{c\pi\mathrm{i}}D_{-c}(-\zeta) +
\frac{\sqrt{2\pi}}{\Gamma(c)}e^{\frac{(c-1)\pi\mathrm{i}}{2}}D_{c-1}(\mathrm{i}\zeta).
\end{array}
\end{equation}
The proof of the asymptotic behavior is based on Lemma \ref{lemma4}
about the asymptotic behavior of the parabolic cylinder function. By
Lemma \ref{lemma4}, letting $n=1$, we have
$$|\varepsilon_1(\zeta)|\leq C\left|\frac{c(c+1)}
{\zeta^2}\right|,\quad|\rm arg\,\zeta|< \frac{\pi}{2}.
$$
This leads to $D_{-{c}}(\zeta)=
e^{-\zeta^2/4}\zeta^{-c}\left(1+{\cal
O}\left(c(c+1)/\zeta^2\right)\right)$. Similarly, we can obtain the
asymptotic expression for $D_{-1+c}(\mathrm{i}\zeta),$
$D_{-1-{c}}(\zeta),$ and $D_{{c}}(\mathrm{i}\zeta)$ and we get
$${\cal F}(\zeta)= F_1(\zeta) + \begin{bmatrix}
\displaystyle{\cal
O}\left(\frac{c(c+1)}{\zeta^2}\right)&\displaystyle{\cal
O}\left(\frac{(c-1)(c-2)}{\zeta^3\Gamma(c)}\right) \\\displaystyle
{\cal O}\left(\frac{\Gamma(c+1)}{\zeta}\right)&\displaystyle{\cal
O}\left(\frac{c(c-1)}{\zeta^2}\right)
\end{bmatrix}.$$
This leads to \eqref{f11} using $\Gamma(c)=c^{-1}(1+{\cal O}(c))$.
Similarly, the equations \eqref{f21} and \eqref{F0} follow from
Lemma \ref{lemma4}.
\end{proof}

Let $H$ be a unimodular holomorphic matrix function on $D_\beta$. We
define $W$ by
\begin{equation}\label{WHW} W(z) = H(z) {\cal W}(\zeta(z)),\quad
z\in D_\beta.
\end{equation}

Combining \eqref{W}, \eqref{F0} and \eqref{WHW}, the expression in
\eqref{samejump} can be written as
\begin{align}\label{PHF}
&\nonumber
\textstyle\Phi(z)\left(\frac{z-a}{z}\right)^{\frac{c}{2}\sigma_3}\mathcal
 {P}(z)\left(\frac{z-a}{z}\right)^{-\frac{c}{2}\sigma_3}
 \\
&\nonumber=
\Phi(z)\left(\frac{z-a}{z}\right)^{\frac{c}{2}\sigma_3}S^{-1}\zeta^{{c}\sigma_3}H(z){\cal
W}(z)\,ST(\zeta(z))\left(\frac{z-a}{z}\right)^{-\frac{c}{2}\sigma_3}
\\
&\nonumber= \Phi(z)\left(\frac{z-a}{z}\right)^{\frac{c}{2}\sigma_3}
S^{-1}\zeta^{{c}\sigma_3}H(z)\,{\cal
F}(\zeta(z))\,\zeta(z)^{-{c}\sigma_3}e^{-\frac{\zeta(z)^2}{4}\sigma_3}ST(\zeta(z))
\left(\frac{z-a}{z}\right)^{-\frac{c}{2}\sigma_3}.
\end{align}
By \eqref{TS1}, \eqref{TS2} and Lemma \ref{LemN}, we obtain
$$\zeta^{-{c}\sigma_3}e^{\frac{-\zeta^2}{4}\sigma_3}ST(\zeta(z))\left(\frac{z-a}{z}\right)^{-\frac{c}{2}\sigma_3}
=\zeta^{-{c}\sigma_3}S\left(\frac{z-a}{z}\right)^{-\frac{c}{2}\sigma_3}=\left(N^{c/2}
\eta(z)\right)^{-\sigma_3} \Phi(z).$$ The above equations lead to
the following Lemma.
\begin{lemma}\label{lemma2}
When $z\in D_\beta$, we have
\begin{equation}\label{PHF}
\Phi(z)\left(\frac{z-a}{z}\right)^{\frac{c}{2}\sigma_3}\mathcal
 {P}(z)\left(\frac{z-a}{z}\right)^{-\frac{c}{2}\sigma_3}=\left(N^{c/2} \eta(z)\right)^{\sigma_3} H(z)\,{\cal F}(\zeta(z))
\left(N^{c/2} \eta(z)\right)^{-\sigma_3} \Phi(z).
\end{equation}
\end{lemma}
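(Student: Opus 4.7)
The plan is a straight substitution argument that chains together the four pieces already set up: equation \eqref{W} (which defines $W$ as a conjugate of $\mathcal{P}$), equation \eqref{WHW} ($W = H \cdot \mathcal{W}$), the definition \eqref{F0} of $\mathcal{F}$ (so that $\mathcal{W}(\zeta) = \mathcal{F}(\zeta)\,e^{-\zeta^2\sigma_3/4}\,\zeta^{-c\sigma_3}$), and Lemma \ref{LemN} (which bundles the awkward outer factor $\Phi\,((z-a)/z)^{c\sigma_3/2}\,S^{-1}\zeta^{c\sigma_3}$ into the diagonal matrix $(N^{c/2}\eta)^{\sigma_3}$). None of these is deep on its own, so the content of the lemma is purely organizational.

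Concretely, I will first solve \eqref{W} for $\mathcal{P}$, obtaining
\[
\mathcal{P}(z) = S^{-1}\,\zeta(z)^{c\sigma_3}\,W(z)\,S\,T(\zeta(z)),
\]
then substitute this into the left-hand side of \eqref{PHF}, and replace $W$ by $H\cdot\mathcal{W}$ using \eqref{WHW}. Next, using the definition of $\mathcal{F}$, I write $\mathcal{W}(\zeta) = \mathcal{F}(\zeta)\,\zeta^{-c\sigma_3}\,e^{-\zeta^2\sigma_3/4}$. At this point the expression has the form
\[
\Phi(z)\Bigl(\tfrac{z-a}{z}\Bigr)^{\tfrac{c}{2}\sigma_3}\!S^{-1}\zeta^{c\sigma_3}\,H(z)\,\mathcal{F}(\zeta(z))\,\zeta^{-c\sigma_3}e^{-\zeta^2\sigma_3/4}\,S\,T(\zeta(z))\Bigl(\tfrac{z-a}{z}\Bigr)^{-\tfrac{c}{2}\sigma_3}.
\]
Lemma \ref{LemN} immediately collapses the \emph{leftmost} block into $(N^{c/2}\eta(z))^{\sigma_3}$, which is the first factor on the right-hand side of the claim.

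To finish, I need to show the tail
\[
\zeta^{-c\sigma_3}\,e^{-\zeta^2\sigma_3/4}\,S\,T(\zeta(z))\,\bigl(\tfrac{z-a}{z}\bigr)^{-\tfrac{c}{2}\sigma_3}
\]
equals $(N^{c/2}\eta(z))^{-\sigma_3}\,\Phi(z)$. This reduces, as indicated in the excerpt, to the identity $e^{-\zeta^2\sigma_3/4}\,S\,T(\zeta) = S$, so that the tail simplifies to $\zeta^{-c\sigma_3}S\,((z-a)/z)^{-(c/2)\sigma_3}$, which is exactly the inverse of the quantity controlled by Lemma \ref{LemN}. The remaining step is to verify the identity $e^{-\zeta^2\sigma_3/4}\,S\,T(\zeta) = S$ sector by sector from \eqref{TS1}--\eqref{TS2}: in the sectors $|\arg\zeta|<3\pi/4$ the matrix $S$ is diagonal (either $I$ or $e^{c\pi\mathrm{i}\sigma_3}$) hence commutes with $\sigma_3$, while $T(\zeta) = e^{\zeta^2\sigma_3/4}$ cancels the prefactor; in the remaining sectors $S$ contains the anti-diagonal factor $\left[\,0~1\atop -1~0\right]$ which anti-commutes with $\sigma_3$, so $e^{-\zeta^2\sigma_3/4}S = S\,e^{\zeta^2\sigma_3/4}$, while $T(\zeta) = e^{-\zeta^2\sigma_3/4}$ cancels the resulting exponential.

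The only place where one could stumble is this sector-by-sector bookkeeping for $e^{-\zeta^2\sigma_3/4}\,S\,T(\zeta) = S$; everything else is formal rewriting. Once this identity is in hand the rest of the proof is a one-line concatenation, and the lemma follows.
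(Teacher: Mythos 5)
Your proof is correct and follows essentially the same route as the paper: solve \eqref{W} for $\mathcal{P}$, insert \eqref{WHW} and the definition \eqref{F0} of $\mathcal{F}$, collapse the leftmost block via Lemma \ref{LemN}, and observe that $e^{-\zeta^2\sigma_3/4}\,S\,T(\zeta)=S$ (which holds because $S$ commutes with $\sigma_3$ in the sectors where $T=e^{\zeta^2\sigma_3/4}$ and anti-commutes with $\sigma_3$ where $T=e^{-\zeta^2\sigma_3/4}$). The sector-by-sector check you flag is exactly the step the paper disposes of with the phrase ``By \eqref{TS1}, \eqref{TS2} and Lemma \ref{LemN},'' so your write-up is, if anything, slightly more explicit.
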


\begin{theorem}\label{thma1}
For $a>1$ and $-1/2\leq c\leq1/2$, we get
\begin{equation}\nonumber
P_N(z) =\left\{
\begin{array}{lll}
\displaystyle z^N\left(\frac{z}{z-\beta}\right)^{c} \left(1+\mathcal
{O}\left(\frac{1}{N^{c+1/2}}\right)\right),
& z\in {\rm Ext}\,{\cal S}\setminus (U\cup D_{\beta}),\vspace{0.4cm}\\
\begin{split}\displaystyle & z^N\bigg(\left(\frac{z}{z-\beta}\right)^{c}-
\frac{\sqrt{2\pi}(a^2-1)^c}{N^{1/2-c}a\Gamma(c)}\frac{
e^{N\phi_{A}(z)}}{(z-\beta)} \left(\frac{z-\beta}{z-a}\right)^{c} \vspace{0.2cm}\\
&\displaystyle\qquad+\mathcal
{O}\left(\frac{1}{N^{c+1/2}},\frac{e^{N\phi_{A}}}{N^{c+1/2}}\right)\bigg),
\end{split} &z\in U\setminus D_{\beta},\vspace{0.4cm}\\
 \displaystyle
z^N\left(\left(\frac{z\zeta}{z-\beta}\right)^{c}e^{\frac{\zeta^2(z)}{4}}D_{-c}(\zeta(z))+\mathcal
{O}\left(\frac{1}{N^{1/2}},\frac{1}{N^{2c+1/2}}\right)\right), &z\in
D_{\beta}.
\end{array}
\right.
\end{equation}
The error bounds are uniform in $c\in[-1/2,1/2]$. The big $\mathcal
{O}$ notation with multiple arguments is defined by $\mathcal {O}(A,
B)=\mathcal {O}(A)+\mathcal {O}(B).$
\end{theorem}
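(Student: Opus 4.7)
The plan is to run a Deift--Zhou steepest-descent analysis of the RHP \eqref{v1}, using the parabolic-cylinder local parametrix $\mathcal{W}$ in \eqref{ww} near $\beta$, and then apply a small-norm argument to $R(z):=Z(z)\,Z^{\infty}(z)^{-1}$. The novelty relative to the fixed-$c$ case is that the leading mismatch between the inner and outer parametrices does not tend to $I$ uniformly as $c\downarrow 0$, because the entries of $F_1$ in \eqref{f1} involve $1/\Gamma(c)$ which interacts nontrivially with the $N^{\pm c}$ powers from the diagonal conjugation. I would therefore absorb the $F_1$-contribution into the outer parametrix via an explicit rank-one rational correction at $\beta$.

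Concretely, define the outer parametrix for $z\in\mathrm{Ext}\,\overline{D_\beta}$ by $Z^{\infty}(z)=(I+M/(z-\beta))\Phi(z)$, with $M$ a constant matrix whose only nonzero entry is in position $(1,2)$, and for $z\in D_\beta$ by the local parametrix $\Phi(z)((z-a)/z)^{(c/2)\sigma_3}\mathcal{P}(z)((z-a)/z)^{-(c/2)\sigma_3}$ given by \eqref{W}--\eqref{WHW} with $H\equiv I$, i.e., $W=\mathcal{W}(\zeta)$. Using Lemma \ref{lemma2} and \eqref{F0}, the matching on $\partial D_\beta$ requires $M/(z-\beta)$ to cancel the leading $1/\zeta$ term of $(N^{c/2}\eta)^{\sigma_3}\mathcal{F}(\zeta)(N^{c/2}\eta)^{-\sigma_3}-I$. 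Substituting $\zeta(z)\sim a\sqrt{N}(z-\beta)$ near $\beta$ and $\eta(\beta)^2=e^{-\mathrm{i}\pi c}(a^2-1)^c$ (from Lemma \ref{LemN} with $\beta=1/a$), one finds $M_{12}=\sqrt{2\pi}(a^2-1)^c/(a\Gamma(c)N^{1/2-c})$, which is precisely the coefficient appearing in the explicit correction term of the theorem.

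With this choice, the jumps of $R$ are $I$ plus exponentially small terms on $\partial U\cup(\Gamma\setminus U)$ by Lemma \ref{lemma1}, and on $\partial D_\beta$ they equal $(N^{c/2}\eta)^{\sigma_3}(\mathcal{F}F_1^{-1})(N^{c/2}\eta)^{-\sigma_3}$. The entrywise bounds \eqref{f11}, combined with $|\zeta|\sim\sqrt{N}$ on $\partial D_\beta$ and the diagonal conjugation, show that the $(2,1)$ entry, which equals $-\Gamma(c+1)/(\sqrt{2\pi}e^{\mathrm{i}\pi c}\zeta)$ to leading order and carries no small $c$ factor, is the dominant entry and is of size $N^{-c-1/2}$ uniformly in $c\in[-1/2,1/2]$; all other entries acquire either a factor of $c$ or an extra power of $1/\zeta$ and are smaller. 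Hence $\|J_R-I\|=O(N^{-c-1/2})$ uniformly, and the standard small-norm theory yields $R(z)=I+O(N^{-c-1/2})$ uniformly in $z$.

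Finally, unwind \eqref{t1} to read off $P_N=Y_{11}=e^{Ng(z)}[Z_{11}(z)-\star(z/(z-a))^c e^{N\phi(z)}Z_{12}(z)]$. In $\mathrm{Ext}\,\mathcal{S}\setminus(U\cup D_\beta)$, only $\Phi_{11}$ contributes and $e^{Ng}=z^N$ gives the first region's formula; in $\mathrm{Int}\,\mathcal{S}$ outside $U$, the rank-one correction $M/(z-\beta)$ multiplies $\Phi_{21}=-((z-\beta)/(z-a))^c$ and $e^{Ng}=\beta^N e^{Na(z-\beta)}$ to produce the second explicit term with its sign; the two-term expression on $U\setminus D_\beta$ is then forced by analytic continuation, and in $D_\beta$ substituting $\mathcal{W}(\zeta)$ and using the asymptotics of $D_{-c}(\zeta)$ from Lemma \ref{lemma4} produces the parabolic-cylinder form, with the subleading $1/N^{2c+1/2}$ error supplied by the $F_2$-level improvement \eqref{f21}. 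The main technical obstacle is the uniform-in-$c$ estimate $\|J_R-I\|=O(N^{-c-1/2})$: verifying that the $(2,1)$ entry of $\mathcal{F}F_1^{-1}-I$ is the unique entry without a small $c$ prefactor and that after the conjugation it indeed dominates is the key delicate point, and this is what sets the rate and causes the error to degenerate as $c\to-1/2$.
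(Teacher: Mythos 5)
Your overall strategy matches the paper's: absorb the $F_1$-contribution into a rank-one rational correction ${\cal R}(z)=I+M/(z-\beta)$ of the outer parametrix, with $M_{12}$ determined by matching the leading $1/\zeta$ term of $(N^{c/2}\eta)^{\sigma_3}\mathcal{F}(\zeta)(N^{c/2}\eta)^{-\sigma_3}$ against a constant pole coefficient. Your coefficient $M_{12}=\sqrt{2\pi}(a^2-1)^c/(aN^{1/2-c}\Gamma(c))$ and your identification of the $(2,1)$-entry as the $c$-uniform bottleneck (hence the $N^{-c-1/2}$ rate) are both exactly right.

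However, taking $H\equiv I$ breaks the argument. With your choice the jump of $R$ on $\partial D_\beta$ is $(N^{c/2}\eta)^{\sigma_3}\mathcal{F}(\zeta)(N^{c/2}\eta)^{-\sigma_3}{\cal R}(z)^{-1}$, which is \emph{not} equal to $(N^{c/2}\eta)^{\sigma_3}\mathcal{F}(\zeta)F_1(\zeta)^{-1}(N^{c/2}\eta)^{-\sigma_3}$: the latter would require ${\cal R}(z)=(N^{c/2}\eta)^{\sigma_3}F_1(\zeta(z))(N^{c/2}\eta)^{-\sigma_3}$, but ${\cal R}$ has a \emph{constant} pole coefficient while the right side has an analytic, $z$-dependent one. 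They agree only at leading order of the Taylor expansion at $\beta$. The discrepancy is a factor $I+(g(z)-r(z))\,e_{12}$ with $g=N^c\eta^2\sqrt{2\pi}e^{\mathrm{i}\pi c}/(\zeta\Gamma(c))$ and $r$ the pole coefficient in ${\cal R}$; since $g-r$ is analytic at $\beta$ and $1/\Gamma(c)=O(c)$, one finds $g-r=O(cN^{c-1/2})$ on $\partial D_\beta$. This is not small near $c=1/2$ (it is $O(1)$ there), so the small-norm theorem is not applicable, and even for moderate $c>0$ the jump defect $O(cN^{c-1/2})$ is of the same order as the second explicit term in the stated expansion, so that term would not be resolved. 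The paper's nontrivial holomorphic $H$ in \eqref{rh2} — precisely $(N^{c/2}\eta)^{-\sigma_3}{\cal R}(N^{c/2}\eta)^{\sigma_3}F_1(\zeta)^{-1}$, the ratio between the constant-coefficient and variable-coefficient absorptions, which is $I+O(c/\sqrt N)$ — is what transfers this discrepancy from the jump into the local parametrix, yielding the jump $(N^{c/2}\eta)^{\sigma_3}H\widehat{\cal F}H^{-1}(N^{c/2}\eta)^{-\sigma_3}$ with $(1,2)$-entry $O(cN^{c-3/2})$, subdominant to the $(2,1)$-entry. The missing idea is thus that the rational factor must be matched from both sides: as a left correction ${\cal R}$ of $\Phi$ outside $D_\beta$, \emph{and} as a compensating holomorphic twist $H$ of $\mathcal{W}$ inside $D_\beta$.
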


This theorem is similar to Theorem \ref{thumm} except that the range
of $c$ is restricted to $[-1/2,1/2]$ and the error bounds are
uniform in the range.

\begin{proof} Using $F_1$ in \eqref{f1} we can define a unimodular meromorphic matrix function with a simple pole at $\beta$ by

\begin{equation}{\label{rh}}
{\cal R}(z) = I + \frac{\sqrt{2 \pi } \left(a^2-1\right)^c
}{N^{1/2-c} a \,\Gamma (c)}\frac{1}{z-\beta}\begin{bmatrix} 0 & 1\\
0&0 \end{bmatrix}
\end{equation}
such that we can set
\begin{equation}\label{rh2}
 H(z) =\left(N^{c/2} \eta(z)\right)^{-\sigma_3} {\cal R}(z)
\left(N^{c/2} \eta(z)\right)^{\sigma_3} F_1(\zeta(z))^{-1},
\end{equation}
i.e., the above is unimodular and holomorphic at $\beta$.

Now we define the strong asymptotics of $Z$ that we will denote by
 \begin{equation}\label{z}
    Z^{\infty}(z):= \left\{
\begin{array}{ll}
{\cal R}(z)\Phi(z), \quad & z\notin D_{\beta},\vspace{0.2cm}
\\\displaystyle
\Phi(z)\left(\frac{z-a}{z}\right)^{\frac{c}{2}\sigma_3}\mathcal
{P}(z)\left(\frac{z-a}{z}\right)^{-\frac{c}{2}\sigma_3}, \quad &
z\in  D_{\beta},
\end{array}
\right.
\end{equation}
where the second line is given in Lemma \ref{lemma2}. We get
\begin{equation}
\begin{split}
Z_{+}^{\infty}(z)\left(Z_{-}^{\infty}(z)\right)^{-1}
&=\Phi(z)\left(\frac{z-a}{z}\right)^{\frac{c}{2}\sigma_3}\mathcal
{P}(z)\left(\frac{z-a}{z}\right)^{-\frac{c}{2}\sigma_3}\Phi^{-1}(z){\cal
R}^{-1}(z)\vspace{0.2cm}\\
&=\left(N^{c/2} \eta(z)\right)^{\sigma_3}H(z){\cal
F}(z)\left(N^{c/2} \eta(z)\right)^{-\sigma_3}{\cal R}^{-1}(z)\vspace{0.2cm}\\
&=\left(N^{c/2} \eta(z)\right)^{\sigma_3}H(z)\widehat{\cal
F}(\zeta)H^{-1}(z)\left(N^{c/2} \eta(z)\right)^{-\sigma_3},
\end{split}
\end{equation}
where, in the last line, we define
\begin{equation}\nonumber
\widehat{\cal F}(\zeta)={\cal F}(\zeta)F_1(\zeta)^{-1}.
\end{equation}
Defining the error matrix by
\begin{equation}\nonumber
{\cal E}(z):=Z^{\infty}(z)\,Z^{-1}(z).
\end{equation}
 we get
 \begin{equation}\label{EEZZ}
 \begin{split}
 {\cal E}_{+}(z){\cal E}^{-1}_{-}(z)&=Z^{\infty}(z)_{+}\left(Z_{-}^{\infty}(z)\right)^{-1}\vspace{0.2cm}\\
  &= \left(N^{c/2} \eta(z)\right)^{\sigma_3}H(z)\widehat{\cal
F}(\zeta)H^{-1}(z)\left(N^{c/2} \eta(z)\right)^{-\sigma_3} \vspace{0.2cm}\\
&=I+\begin{bmatrix}\displaystyle \mathcal
{O}\left(\frac{c}{N}\right)&\displaystyle\mathcal
{O}\left(\frac{c}{N^{3/2-c}}\right) \\\displaystyle \mathcal
{O}\left(\frac{1}{N^{1/2+c}}\right)&\displaystyle\mathcal
{O}\left(\frac{c}{N}\right)
\end{bmatrix}=I+{\cal O}\left(\frac{1}{N^{1/2+c}}\right),
%\begin{bmatrix}\displaystyle \mathcal {O}\left(\frac{c(c+1)}{N}\right)&\displaystyle\mathcal {O}\left(\frac{(c-1)(c-2)}{N^{3/2-c}\Gamma(c)}\right) \\\displaystyle \mathcal {O}\left(\frac{\Gamma(c+1)}{N^{1/2+c}}\right)&\displaystyle\mathcal{O}\left(\frac{c(c+1)}{N}\right)\end{bmatrix},
 \quad z\in\partial D_\beta,
\end{split}
\end{equation}
where, in the last equality, we use the asymptotic behavior
\eqref{f11} for $\widehat{\cal F}(\zeta)={\cal
F}(\zeta)F_1(\zeta)^{-1}$, and the asymptotic behavior of $H$ given
below.
\begin{equation} \label{herror}
H= \begin{bmatrix} 1 & h(z)\\  0&1 \end{bmatrix} ,\qquad h(z)
=\frac{\sqrt{2 \pi } \left(a^2-1\right)^c }{\sqrt N\eta^2(z)a
\,\Gamma
(c)}\frac{1}{z-\beta}-\frac{1}{\zeta(z)}\frac{\sqrt{2\pi}e^{{\mathrm
i}\pi c}}{\Gamma(c)}={{\cal O}\left(\frac{c}{\sqrt N}\right).}
\end{equation}
One can check that the jump of ${\cal E}$ is exponentially small in
$N$ away from $\partial D_\beta$ using Lemma \ref{lemma1} and
\eqref{v1}. By the small norm theorem (e.g. Theorem 7.171 in
{\cite{Deift 1999} or \cite{DKMVZ 1999}}) we obtain ${\cal
E}(z)=I+\mathcal {O}\left(1/N^{c+1/2}\right)$
%c^*=\begin{cases} \frac{1}{2}+c,& -\frac{1}{2}<c \leq\frac{1}{2}
%,\\\frac{3}{2}-c,\quad &\frac{1}{2}<c<\frac{3}{2}. \end{cases}
 and, therefore, $Z^{\infty}(z)Z^{-1}(z)=I+\mathcal
{O}\left(1/N^{c+1/2}\right)$.   Note that the error bound is uniform
over $c\in[-1/2,1/2]$.

Using \eqref{t1} we have (see \eqref{t1} for the definition of
$\star$)
\begin{equation}\nonumber
\begin{split}
Y(z)&=e^{\frac{N\ell}{2}\sigma_3}Z(z)
\begin{bmatrix}
1&0 \\\displaystyle -\star\,
\Big(\frac{z}{z-a}\Big)^{c}e^{N\phi(z)}&1
\end{bmatrix}
e^{\frac{-N\ell}{2}\sigma_3}e^{Ng(z)\sigma_3}
\\
&=e^{\frac{N\ell}{2}\sigma_3}\left(I+{\cal
O}\left(\frac{1}{N^{1/2+c}}\right)\right) Z^\infty(z)\begin{bmatrix}
1&0 \\\displaystyle -\star\,
\Big(\frac{z}{z-a}\Big)^{c}e^{N\phi(z)}&1
\end{bmatrix}e^{\frac{-N\ell}{2}\sigma_3}e^{Ng(z)\sigma_3}.
%\\
%&=\left\{\begin{split}&e^{\frac{N\ell}{2}\sigma_3}\left(I+\mathcal{O}\left(\frac{1}{N^{c^*}}\right)\right){\cal R}(z)\Phi(z)\begin{bmatrix}1&0 \\\displaystyle-\star\Big(\frac{z}{z-a}\Big)^{c}e^{N\phi(z)}&1\end{bmatrix}e^{\frac{-N\ell}{2}\sigma_3}e^{Ng(z)\sigma_3} , z\notin D_{\beta},\vspace{0.2cm}
%\\&
%e^{\frac{N\ell}{2}\sigma_3}\left(I+\mathcal{O}\left(\frac{1}{N^{c^*}}\right)\right)\Phi(z)\left(\frac{z-a}{z}\right)^{\frac{c}{2}\sigma_3}\mathcal{P}(z)\left(\frac{z-a}{z}\right)^{-\frac{c}{2}\sigma_3}e^{\frac{-N\ell}{2}\sigma_3}e^{Ng(z)\sigma_3}, z\inD_{\beta},
%\end{split}
%\right.
\end{split}
\end{equation}
Using \eqref{z}, we calculate the strong asymptotics for $z\in ({\rm
Ext}\,{\cal S}\cap U) \setminus D_{\beta}$ as an example.
 \begin{equation}\label{long1}
 \begin{split}
P_N(z)&=[Y(z)]_{11}=\left[\left(I+\mathcal
{O}\left(\frac{1}{N^{1/2+c}}\right)\right)Z^{\infty}(z)\begin{bmatrix}
1&0 \\ -\star\, \Big(\frac{z}{z-a}\Big)^{c}e^{N\phi(z)}&1
\end{bmatrix}\right]_{11}e^{Ng(z)}\vspace{0.2cm}\\
&=\left[\left(I+\mathcal
{O}\left(\frac{1}{N^{1/2+c}}\right)\right){\cal
R}(z)\Phi(z)\begin{bmatrix} 1&0 \\
-\Big(\frac{z}{z-a}\Big)^{c}e^{N\phi(z)}&1
\end{bmatrix}\right]_{11}e^{Ng(z)}\vspace{0.2cm}\\
&=\left[\left(I+\mathcal {O}\left(\frac{1}{N^{1/2+c}}\right)\right)
\begin{bmatrix} 1 & \frac{\sqrt{2 \pi } \left(a^2-1\right)^c }{N^{1/2-c} a \,\Gamma
(c)}\frac{1}{z-\beta}\\  0&1 \end{bmatrix}\begin{bmatrix}
\big(\frac{z}{z-\beta}\big)^{c}&0 \\
0&\big(\frac{z-\beta}{z}\big)^{c}
\end{bmatrix}\begin{bmatrix} 1&0
\\ -\big(\frac{z}{z-a}\big)^{c}e^{N\phi(z)}&1
\end{bmatrix}\right]_{11}z^N\vspace{0.2cm}\\
&=\bigg[\left(1+\mathcal {O}\left(\frac{1}{N^{1/2+c}}\right)\right)
\bigg(\left(\frac{z}{z-\beta}\right)^{c} -
\left(\frac{z-\beta}{z-a}\right)^{c}\frac{\sqrt{2\pi}(a^2-1)^c}{a\Gamma(c)N^{{1}/{2}-c}(z-\beta)}e^{N\phi(z)}
\bigg)
\\ & \qquad -\mathcal
{O}\left(\frac{1}{N^{1/2+c}}\right)\left(\frac{z-\beta}{z-a}\right)^{c}e^{N\phi(z)}\bigg]\,z^N\vspace{0.2cm}\\
&=z^N\bigg(\left(\frac{z}{z-\beta}\right)^{c} -
\left(\frac{z-\beta}{z-a}\right)^{c}\frac{\sqrt{2\pi}(a^2-1)^c}{a\Gamma(c)N^{{1}/{2}-c}(z-\beta)}e^{N\phi(z)}+\mathcal
{O}\left(\frac{1}{N^{1/2+c}}\right) \bigg).
\end{split}
\end{equation}
A similar calculation will give the following for $z\in ({\rm
Int}\,{\cal S}\cap U) \setminus D_{\beta}$:
$$ P_N(z)=e^{Ng(z)}\bigg(\left(\frac{z}{z-\beta}\right)^{c}e^{N\phi(z)} -
\left(\frac{z-\beta}{z-a}\right)^{c}\frac{\sqrt{2\pi}(a^2-1)^c}{a\Gamma(c)N^{{1}/{2}-c}(z-\beta)}+\mathcal
{O}\left(\frac{1}{N^{1/2+c}}\right) \bigg). $$

For $z\in ({\rm Ext}\,{\cal S}\setminus U)\cap D_{\beta}$ we
calculate the strong asymptotics using \eqref{z}, \eqref{F0} and
Lemma \ref{lemma2} to represent $\mathcal {P}$ in terms of ${\cal
W}$ \eqref{ww} and $H(z)$ \eqref{rh2}.
\begin{equation}\label{long2}
 \begin{split}
P_N(z)&=[Y(z)]_{11}=\left[\left(I+\mathcal
{O}\left(\frac{1}{N^{1/2+c}}\right)\right)Z^{\infty}
\begin{bmatrix}
1&0 \\ -\star\, \Big(\frac{z}{z-a}\Big)^{c}e^{N\phi(z)}&1
\end{bmatrix}
\right]_{11}e^{Ng(z)}\vspace{0.2cm}\\
&=\left[\left(I+\mathcal
{O}\left(\frac{1}{N^{1/2+c}}\right)\right)\Phi(z)\left(\frac{z-a}{z}\right)^{\frac{c}{2}\sigma_3}\mathcal
{P}(z)\left(\frac{z-a}{z}\right)^{-\frac{c}{2}\sigma_3}\right]_{11} z^N\vspace{0.2cm}\\
&=\left[\left(I+\mathcal {O}\left(\frac{1}{N^{1/2+c}}\right)\right)
\left(N^{c/2} \eta(z)\right)^{\sigma_3} H(z)\,{\cal F}(\zeta(z))
\left(N^{c/2} \eta(z)\right)^{-\sigma_3} \Phi(z)\right]_{11}z^N\vspace{0.2cm}\\
&=\bigg[\left(I+\mathcal {O}\left(\frac{1}{N^{1/2+c}}\right)\right)
\left(N^{c/2} \eta(z)\right)^{\sigma_3} \begin{bmatrix} 1 & h(z)\\
0&1 \end{bmatrix}\,\begin{bmatrix}
D_{-{c}}(\zeta)&\frac{\mathrm{i}\sqrt{2\pi}e^{\frac{c\pi \mathrm{i}}{2}}}{\Gamma({c})}D_{-1+c}(\mathrm{i}\zeta) \\
-\frac{\Gamma({c}+1)}{\sqrt{2\pi}e^{{c\pi
\mathrm{i}}}}D_{-1-{c}}(\zeta)&e^{-\frac{c\pi
\mathrm{i}}{2}}D_{{c}}(\mathrm{i}\zeta)
\end{bmatrix}
\vspace{0.2cm}\\
&\qquad
\cdot\,\zeta^{{c}\sigma_3}e^{\frac{\zeta^2}{4}\sigma_3}\left(N^{c/2}
\eta(z)\right)^{-\sigma_3}\begin{bmatrix}
\Big(\frac{z}{z-\beta}\Big)^{c}&0 \\
0&\Big(\frac{z-\beta}{z}\Big)^{c}
\end{bmatrix}\bigg]_{11}z^N\vspace{0.2cm}\\
&=\bigg[\left(\frac{z}{z-\beta}\right)^{c}\zeta(z)^{c}e^{\frac{\zeta(z)^2}{4}}\bigg(D_{-c}(\zeta)-h(z)\frac{\Gamma({c}+1)}{\sqrt{2\pi}e^{{c\pi
\mathrm{i}}}}D_{-1-{c}}(\zeta)\bigg)\left(1+{\cal
O}\left(\frac{1}{N^{1/2+c}}\right)\right)
\\&\qquad +\mathcal
{O}\left(\frac{1}{N^{1/2+2c}}\right)\bigg]z^N
\\&=\bigg[\left(\frac{z}{z-\beta}\right)^{c}\zeta(z)^{c}e^{\frac{\zeta(z)^2}{4}} D_{-c}(\zeta) +{\cal O}\left(\frac{1}{\sqrt{N}},\frac{1}{N^{1/2+2c}}\right)\bigg]z^N.
\end{split}
\end{equation}
We used \eqref{herror} at the last equality.  Note that the above
error bounds are uniform over $c\in[-1/2,1/2]$.

For the other regions we skip the calculations as they are similar.
\end{proof}

\section{$a>1$: Proof of Theorem \ref{thumm}}\label{rs}

The proof of Theorem \ref{thumm} is identical to the above proof of
Theorem \ref{thma1} except that we use different ${\cal R}$ and $H$
(hence different ${\cal P}$). The construction of ${\cal R}$ and $H$
will be more involved and will be useful for the next case of $a<1$
and, therefore, we will describe the construction in a more general
setting.

Here we describe how to construct $\cal R$ and $\cal P$ inductively
such that the jump, $Z^{\infty}_+(Z^{\infty}_-)^{-1}$, of
$Z^{\infty}$ is close to the identity up to ${\cal O}({N^{-L}})$ for
any given $L>0$.  The inductive method that we describe here
involves only algebraic manipulations -- such as the inverse of
relatively small matrices.

We introduce several notations that we will use in this section.

Let us recall that $\zeta$ is a univalent function in $D_\beta$ such
that $\zeta(\beta)=0$ and $N^{-\tau_a}\zeta(z)/(z-\beta)$
 is an $N$-independent and
non-vanishing holomorphic function where (we include the case,
$a<1$, for later)
\begin{equation}\nonumber
\tau_a =\bigg\{ \begin{array}{ll} 1/2 &\text{ for } a>1, \\ 1
&\text{ for } a<1. \end{array}
\end{equation}

The lemma below generalize the definition of $\widehat{\cal F}$ that
we used in the previous section.
\begin{lemma}\label{lem4}
Let ${\cal F}$ be a unimodular piecewise analytic matrix function
with the asymptotic expansion around $\infty$ given by
$${\cal F}=I+\frac{C_1}{\zeta}+\frac{C_2}{\zeta^2}+\cdots,$$
where $C_j$'s are constant $2\times2$ matrices. For any positive
integer $L$, there exists a positive number of $k$ and a
decomposition
\begin{equation}\label{ff}
 {\cal F}(\zeta)= \widehat{\cal F}(\zeta) F_k(\zeta)\cdots
F_1(\zeta),
\end{equation}
such that, for all $1\leq j\leq k,$ $ { F}_j$ is a rational function
with only singularity at the origin, $ { F}_j(\infty)=I$,
$F_j(\zeta)-I$ is nilpotent and
$$ \widehat{\cal F}(\zeta)=I+{\cal O}\left(\zeta^{-L}\right).$$
\end{lemma}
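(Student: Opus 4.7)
The plan is to argue by induction on $L$. The base case $L=1$ is immediate, since by hypothesis ${\cal F} = I + {\cal O}(\zeta^{-1})$ already, and one may take $k=0$ (or a trivial $F_1 = I$). For the inductive step, suppose that factors $F_1, \ldots, F_j$ with the stated properties have already been produced and that
\[
\widehat{\cal F}_j(\zeta) := {\cal F}(\zeta)\, F_1(\zeta)^{-1} \cdots F_j(\zeta)^{-1} = I + \frac{D}{\zeta^L} + {\cal O}(\zeta^{-L-1})
\]
for some constant matrix $D$. Because $\widehat{\cal F}_j$ is unimodular, $D$ is traceless.

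I would then peel off the $D/\zeta^L$ term with either one or two additional factors, depending on whether $D$ happens to be nilpotent. \textbf{Case A} ($D^2 = 0$): set $F_{j+1}(\zeta) := I + D/\zeta^L$; this is rational with sole pole at $\zeta=0$, satisfies $F_{j+1}(\infty)=I$, and $F_{j+1}-I$ is nilpotent pointwise. Since $F_{j+1}^{-1} = I - D/\zeta^L$, we obtain $\widehat{\cal F}_j F_{j+1}^{-1} = I + {\cal O}(\zeta^{-L-1})$.

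\textbf{Case B} ($D^2 \neq 0$, equivalently $\det D \neq 0$ since $\mathrm{tr}\,D = 0$): I would invoke the elementary fact that every traceless $2\times 2$ matrix decomposes as a sum of two nilpotent matrices, $D = N_a + N_b$. If $D$ has a nonzero off-diagonal entry, one may take $N_a$ strictly triangular, and then the nilpotency condition $\det(D - N_a) = 0$ on $N_b := D - N_a$ is a single scalar equation solved in closed form; in the remaining case $D = \mathrm{diag}(d,-d)$ with $d\neq 0$, a concrete choice such as $N_a = \bigl(\begin{smallmatrix} d/2 & 1 \\ -d^2/4 & -d/2 \end{smallmatrix}\bigr)$, $N_b = D - N_a$, works by direct verification. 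With such a splitting in hand, set $F_{j+1} := I + N_b/\zeta^L$ and $F_{j+2} := I + N_a/\zeta^L$. Using $N_a^2 = N_b^2 = 0$, we compute
\[
F_{j+1}^{-1} F_{j+2}^{-1} = \Bigl(I - \tfrac{N_b}{\zeta^L}\Bigr)\Bigl(I - \tfrac{N_a}{\zeta^L}\Bigr) = I - \frac{D}{\zeta^L} + \frac{N_b N_a}{\zeta^{2L}},
\]
so multiplying $\widehat{\cal F}_j$ on the right gives $I + {\cal O}(\zeta^{-L-1}) + {\cal O}(\zeta^{-2L}) = I + {\cal O}(\zeta^{-L-1})$, since $2L \geq L+1$ for $L \geq 1$. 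Both new factors clearly meet the required conditions.

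The only conceptual content of the argument is the algebraic decomposition $D = N_a + N_b$ in Case B; once this is available, the error bookkeeping is routine, because the only new terms introduced by the substitution live at order $\zeta^{-2L}$, which is automatically dominated by the target $\zeta^{-L-1}$. Iterating the inductive step $L-1$ times produces the desired $\widehat{\cal F}$ using at most $k \leq 2(L-1)$ factors.
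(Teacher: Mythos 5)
Your argument is correct and follows the same overall strategy as the paper: induct on the leading order of the remainder $\widehat{\cal F}_j - I$, and at each step strip off the traceless leading coefficient $D$ by expressing it as a sum of nilpotent matrices $N_i$ and factoring out $(I + N_i/\zeta^L)$, which is exactly invertible since $N_i^2=0$, so the only new terms appear at order $\zeta^{-2L}$ and are harmless. The one genuine difference is the nilpotent decomposition itself. You split $D$ into at most \emph{two} nilpotents, which forces a small case analysis (whether $D$ is already nilpotent, and which entries of $D$ vanish) and rests on the fact that every traceless $2\times 2$ matrix is a sum of two nilpotents. The paper instead writes down a single explicit \emph{three}-term decomposition $C_0 = N_1 + N_2 + N_3$, with $N_1$ a traceless rank-one matrix carrying the diagonal of $C_0$ and $N_2,N_3$ strictly upper and lower triangular, which works uniformly with no case distinctions (note the $(1,2)$-entry of the paper's $N_2$ should read $c_{12}+c_{11}^2$, not $c_{12}-c_{11}^2$, so that the three pieces sum to $C_0$). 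Your two-term version gives the marginally sharper count $k\leq 2(L-1)$ versus the paper's $k\leq 3(L-1)$, but since the lemma only requires some finite $k$, this is cosmetic; the paper's uniform formula avoids the case-by-case verification your route needs. Both are valid proofs of essentially the same length.
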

\begin{proof}  Assume $${\cal F}(\zeta)=I+\frac{C_0}{\zeta^m}+{\cal
O}\left(\frac{1}{\zeta^{m+1}}\right),\quad
C_0=\begin{bmatrix} c_{11}& c_{12}\\
c_{21}&c_{22}\end{bmatrix}.$$ Since $\det {\cal F}=1$, we have
$c_{11}+c_{22}=0$. One can write $C_0$ as the sum of three nilpotent
matrices, $C_0=N_1+N_2+N_3$, where
\begin{equation}\nonumber
N_1=\begin{bmatrix} c_{11}& -c^2_{11}\\
1&-c_{11}\end{bmatrix},\quad N_2=\begin{bmatrix} 0& c_{12}-c^2_{11}\\
0&0\end{bmatrix},\quad N_3=\begin{bmatrix} 0& 0\\
c_{21}-1&0\end{bmatrix}.
\end{equation}
We get
$${\cal F}(\zeta)\left(I+\frac{N_1}{\zeta^m}\right)^{-1}\left(I+\frac{N_2}{\zeta^m}\right)^{-1}\left(I+\frac{N_3}{\zeta^m}\right)^{-1}=I+{\cal O}\left(\frac{1}{\zeta^{m+1}}\right).$$
Using induction, this proves the lemma.
\end{proof}

Given $\{{F_k}\}_{k=1,2,\cdots}$, we will define $\{H_k\}$ and $\{
R_k\}$ inductively. Let $H_0=I$. Assume that $H_{k-1}$ is
holomorphic and non-vanishing at $\beta$, and $H_{k-1}(z)=I+{\cal
O}(1/N^{\tau_a})$. We define
\begin{equation}\label{fk}
\widetilde{{F}}_k(z):=\left(N^{\frac{c}{2}}\eta(z)\right)^{\sigma_3}
H_{k-1}(z)F_k(\zeta(z))H_{k-1}^{-1}(z)\left(N^{\frac{c}{2}}\eta(z)\right)^{-\sigma_3}.
\end{equation}
If ${F_k}$ satisfies the property described in Lemma \ref{lem4}, we
have the following truncated Laurent series expansion near $\beta$,
$$\widetilde{{F}}^{-1}_k(z)=N^{\frac{c}{2}\sigma_3}\left(I+\sum_{j=-\infty}^{m_k}\frac{A_j}{(z-\beta)^j}\right)N^{-\frac{c}{2}\sigma_3},$$
for some positive integer $m_k$ and some constant matrices
$\{A_j\}$. Given $\{A_j\}$, the lemma below constructs $\{R_k\}$
inductively.

\begin{lemma}\label{r}
Given $\widetilde{{F}}_k(z)$ as above, the unique rational matrix
function $R_k$ such that its only singularity is at $\beta$,
$R_k(\infty)=I$ and $R_k(z)\widetilde{{F}}^{-1}_k(z)$ is holomorphic
at $\beta$, is given by
$$R_k(z)=N^{\frac{c}{2}\sigma_3}\left(I+\sum_{j=1}^{m_k}\frac{B_j}{(z-\beta)^j}\right)N^{-\frac{c}{2}\sigma_3},$$
where, for a sufficiently large $N$, ${B_j}$'s are given by
$$[B_{m_k},B_{m_{k-1}},\cdots,B_1]=-[A_{m_k},A_{m_{k-1}},\cdots,A_1]\left(I+\widetilde{M}\right)^{-1}.$$
The $2m_k\times 2m_k$ matrix $\widetilde M$ is given in block form
by
$$\widetilde{M}=\begin{bmatrix}
A_{0}&A_{-1}&\cdots&A_{1-m_{k}} \\
A_1&A_{0}&\cdots&A_{2-m_k}\\
\vdots&\ddots &\ddots&\vdots\\
A_{m_{k-1}}& \cdots& A_1&A_0
\end{bmatrix}$$
and, for a sufficiently large $N$, $I+\widetilde{M}$ is invertible.
Moreover, $\det R_k\equiv1.$
\end{lemma}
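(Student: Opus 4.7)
The plan is to parametrize the unknown $R_k$ explicitly and reduce the problem to a finite linear system by matching principal parts. I first observe that conjugation by $N^{(c/2)\sigma_3}$ preserves both the constraint ``rational with only singularity at $\beta$'' and the normalization ``$I$ at infinity,'' and since the same conjugation appears around the Laurent expansion of $\widetilde F_k^{-1}$, I may strip it off simultaneously from $R_k$ and from the data. This reduces the task to finding $\check R_k(z) = I + \sum_{j=1}^{m_k} B_j (z-\beta)^{-j}$ with unknown constant matrices $B_1,\ldots,B_{m_k}$ such that
\begin{equation*}
\check R_k(z)\left(I + \sum_{i=-\infty}^{m_k}\frac{A_i}{(z-\beta)^i}\right)
\end{equation*}
is holomorphic at $\beta$.

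Expanding this product and setting the coefficient of $(z-\beta)^{-n}$ to zero for each $n = 1,\ldots,m_k$ yields
\begin{equation*}
A_n + B_n + \sum_{j=1}^{m_k} B_j\, A_{n-j} = 0.
\end{equation*}
Collecting these equations into a single block-row equation with the indexing reversed to $[B_{m_k},\ldots,B_1]$ and $[A_{m_k},\ldots,A_1]$ produces exactly the linear system $[B_{m_k},\ldots,B_1](I + \widetilde M) = -[A_{m_k},\ldots,A_1]$, where the $(p,\ell)$-block of $\widetilde M$ is $A_{p-\ell}$, matching the block matrix in the statement. Provided $I + \widetilde M$ is invertible, this uniquely determines the $B_j$'s and yields the stated formula.

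To establish invertibility of $I + \widetilde M$ for large $N$, I use Lemma \ref{lem4}: each factor of $F_k(\zeta)$ differs from the identity by a nilpotent matrix whose entries are $\mathcal O(\zeta^{-m_k})$ at infinity. Near $z = \beta$ one has $\zeta(z) \sim c_a\, N^{\tau_a}(z-\beta)$, which injects a factor of $N^{-m_k \tau_a}$ into every Laurent coefficient of $F_k(\zeta(z))$. The inductively controlled conjugation by $H_{k-1}(z) = I + \mathcal O(N^{-\tau_a})$ and by the $N$-independent nonvanishing $\eta(z)$ in \eqref{fk} preserves this smallness, so each Laurent coefficient $A_j$ of $\widetilde F_k^{-1}$ with $j \geq 0$ is $o(1)$ as $N \to \infty$. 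Consequently $\|\widetilde M\| \to 0$, and $I + \widetilde M$ is invertible by a Neumann series. This quantitative smallness bound is the main obstacle; the rest is pure linear algebra.

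Finally, for $\det R_k \equiv 1$: by Lemma \ref{lem4}, $F_k$ is a product of factors of the form $I + (\text{nilpotent})$, each with determinant $1$, so $\det F_k = 1$ and hence $\det \widetilde F_k = 1$ by invariance of determinants under conjugation. Therefore $\det R_k = \det(R_k \widetilde F_k^{-1}) \cdot \det \widetilde F_k = \det(R_k \widetilde F_k^{-1})$, which is holomorphic at $\beta$ by construction. Since $R_k$ is rational with only possible pole at $\beta$, $\det R_k$ is then entire; normalized to $1$ at infinity, Liouville's theorem forces $\det R_k \equiv 1$.
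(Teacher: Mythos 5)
Your overall scaffolding matches the paper's, but there is a genuine gap that makes the argument incomplete: you have not shown that your $R_k$ actually renders $R_k\widetilde F_k^{-1}$ holomorphic at $\beta$.

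The product
\begin{equation*}
\left(I+\sum_{j=1}^{m_k}\frac{B_j}{(z-\beta)^j}\right)\left(I+\sum_{i=-\infty}^{m_k}\frac{A_i}{(z-\beta)^i}\right)
\end{equation*}
has pole terms up to order $2m_k$ (coming from $B_{m_k}A_{m_k}(z-\beta)^{-2m_k}$), so holomorphicity at $\beta$ requires the coefficient of $(z-\beta)^{-n}$ to vanish for \emph{all} $n=1,\ldots,2m_k$, not just for $n\le m_k$. You impose only the first $m_k$ block conditions, which gives the stated linear system $[B](I+\widetilde M)=-[A]$ and yields a unique candidate for the $B_j$'s; but you never verify the remaining $m_k$ conditions (in the paper's notation, $[B]\cdot M=0$ where $M$ collects the coefficients $A_{m_k},\ldots,A_1$ in a block upper-triangular array). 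These remaining conditions are what make the problem overdetermined, and \emph{a priori} the candidate $B_j$'s might not satisfy them. The paper resolves this by observing that $F_k-I$ is nilpotent, hence $\widetilde F_k^{-1}-I$ is nilpotent, which forces $\bigl(\sum_{j\le m_k}A_j(z-\beta)^{-j}\bigr)^2=0$; this gives $M^2=0$ and $M\widetilde M=-\widetilde M M$, from which a short computation shows that $[B]\cdot M=0$ follows automatically from $[B]=-[A](I+\widetilde M)^{-1}$. Without this step your proof establishes only that a candidate formula exists, not that it solves the problem, so existence (and with it the formula in the statement) is not actually proven. Your Neumann-series invertibility bound and your determinant/Liouville argument are both fine and agree with the paper; the missing nilpotency argument is the one essential ingredient you omitted.
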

\begin{proof} Let
$$M=\begin{bmatrix}
A_{m_k}&A_{m_{k-1}}&\cdots&A_1 \\
&A_{m_k}&\cdots&A_2\\
& &\ddots&\vdots\\
& & &A_{m_k}
\end{bmatrix},$$
In order to make $R_k(z)\widetilde{{F}}^{-1}_k(z)$ holomorphic at
$\beta$, we require all the pole terms of
$R_k(z)\widetilde{{F}}^{-1}_k(z)$ to vanish. We obtain
\begin{align}\label{a1}
&[B_{m_k},B_{m_{k-1}},\cdots,B_1]\cdot M=0,\\\label{a2}
&[B_{m_k},B_{m_{k-1}},\cdots,B_1](I+\widetilde{M})+[A_{m_k},A_{m_{k-1}},\cdots,A_1]=0,
\end{align}
where the first equation comes from the the poles of the orders
$2m_k, 2m_k-1, \cdots, m_k+1$, and the second equation comes from
the poles orders $m_k, m_k-1, \cdots, 1$.

We explain a useful bound on $A_j$'s. If $F_k(\zeta)=I+{\cal
O}(\zeta^{-m_k})$, then $F_k(\zeta(z))=I+{\cal O}(N^{-m_k \tau_a})$
on $\partial D_{\beta}$. Therefore, we have $A_j={\cal
O}(N^{-m_k\tau_a})$ and $\|\widetilde{M}\|={\cal
O}(N^{-m_k\tau_a})$.   Hence $I+\widetilde{M}$ is invertible for a
sufficiently large $N$ so that, from \eqref{a2}, we can obtain
\begin{equation}\nonumber
[B_{m_k},B_{m_{k-1}},\cdots,B_1]=-[A_{m_k},A_{m_{k-1}},\cdots,A_1]\left(I+\widetilde{M}\right)^{-1}.
\end{equation}
Let us show that \eqref{a1} is satisfied.  Since $F_k(\zeta)-I$ is
nilpotent, $\widetilde{{F}}^{-1}_k(z)-I$ is nilpotent and,
therefore,
$$\left(\sum_{j=-\infty}^{m_k}\frac{A_j}{(z-\beta)^j}\right)^2=0.$$
This implies $M^2=0$ and $M\widetilde{M}=-\widetilde{M}M$. Then,
\begin{equation}\nonumber
\begin{array}{lll} [B_{m_k},B_{m_{k-1}},\cdots,B_1]\cdot
M&=&-[A_{m_k},A_{m_{k-1}},\cdots,A_1]\left(I+\widetilde{M}\right)^{-1}\cdot
M \vspace{0.2cm}\\
&=&-[M]_{\text{1st row}}\left(I-\widetilde{M}+\widetilde{M}^2+\cdots\right)\cdot M\vspace{0.2cm}\\
&=&-\left[M\cdot\left(I-\widetilde{M}+\widetilde{M}^2+\cdots\right)\cdot
M\right]_{\text{1st row}}\vspace{0.2cm}\\
&=&-\left[MM-M\widetilde{M}M+M\widetilde{M}^2M+\cdots
\right]_{\text{1st row}}\vspace{0.2cm}\\
&=&-\left[MM+M^2\widetilde{M}+M^2\widetilde{M}^2+\cdots
\right]_{\text{1st row}}=0\end{array}
\end{equation}
The ``1st row'' means the 1st two rows or, equivalently, the 1st row
in the $2\times 2$ block matrix. Since
$R_k(z)\widetilde{{F}}^{-1}_k(z)$ is holomorphic at $\beta$ and
$\det \widetilde{{F}}^{-1}_k(z)\equiv1$, $\det R_k(z)$ is
holomorphic at $\beta.$ Since $\det R_k(\infty)=1$, we have $\det
R_k\equiv1$.

Now we show that $R_k$ is unique. Assume $\widetilde{R}_k$ also
satisfies all the conditions satisfied by $R_k$ in the lemma. Then,
${R}_k\widetilde{R}^{-1}_k$ is holomorphic away from $\beta$,
${R}_k(z)\widetilde{R}_k(z)^{-1}\to I$ as $z\to\infty$, and
${R}_k\widetilde{R}^{-1}_k=R_k\widetilde{{F}}^{-1}_k\big(\widetilde{R}_k\widetilde{{F}}^{-1}_k\big)^{-1}$
is holomorphic at $\beta$. Thus, ${R}_k=\widetilde{R}_k.$
\end{proof}
\begin{cor}\label{cor}
If $F_k(\zeta)=I+{\cal O}(\zeta^{-m})$, then
$N^{-\frac{c}{2}\sigma_3}R_k(z)N^{\frac{c}{2}\sigma_3}=I+{\cal
O}(N^{-\tau_am})$ when $z\in\partial D_\beta.$
\end{cor}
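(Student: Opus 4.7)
The plan is to unpack the explicit formula for $R_k$ from Lemma \ref{r} and track how the hypothesis $F_k(\zeta)=I+\mathcal{O}(\zeta^{-m})$ propagates through each step of the construction. Since Lemma \ref{r} gives
$$R_k(z)=N^{(c/2)\sigma_3}\!\left(I+\sum_{j=1}^{m_k}\frac{B_j}{(z-\beta)^j}\right)\!N^{-(c/2)\sigma_3},$$
we immediately obtain $N^{-(c/2)\sigma_3}R_k(z)N^{(c/2)\sigma_3}=I+\sum_{j=1}^{m_k}B_j/(z-\beta)^j$. On $\partial D_\beta$ the quantity $|z-\beta|$ is bounded below by the (fixed) radius of the disk, and $m_k$ is a fixed finite integer, so the corollary reduces to showing the uniform estimate $B_j=\mathcal{O}(N^{-\tau_a m})$ for each $1\le j\le m_k$.

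First I would estimate the $A_j$'s. By \eqref{zetamap} the map $\zeta$ satisfies $\zeta(z)\asymp N^{\tau_a}(z-\beta)$ for $z\in D_\beta$, and $|\zeta(z)|\asymp N^{\tau_a}$ on $\partial D_\beta$; combined with the hypothesis this gives $F_k(\zeta(z))=I+\mathcal{O}(N^{-\tau_a m})$ on $\partial D_\beta$, and the same bound for $F_k(\zeta(z))^{-1}$. Now the crucial cancellation: from \eqref{fk},
$$N^{-(c/2)\sigma_3}\widetilde{F}_k^{-1}(z)N^{(c/2)\sigma_3}=\eta(z)^{\sigma_3}H_{k-1}(z)F_k(\zeta(z))^{-1}H_{k-1}(z)^{-1}\eta(z)^{-\sigma_3},$$
so the outer $N^{c/2}$ factors introduced by $\widetilde{F}_k$ cancel exactly against the $N^{\mp(c/2)\sigma_3}$ in the Laurent representation. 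Since $\eta$ is $N$-independent and non-vanishing on $\overline{D}_\beta$ and $H_{k-1}=I+\mathcal{O}(N^{-\tau_a})$ is holomorphic and non-vanishing at $\beta$ by the inductive hypothesis, reading off the Laurent coefficients of the right-hand side yields $A_j=\mathcal{O}(N^{-\tau_a m})$ uniformly for $j\le m_k$.

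Next I would feed this into the explicit inversion formula in Lemma \ref{r}. Because every block of $\widetilde{M}$ is an $A_\ell$, the same estimate gives $\|\widetilde{M}\|=\mathcal{O}(N^{-\tau_a m})$, so for large $N$ the Neumann series converges and $(I+\widetilde{M})^{-1}=I+\mathcal{O}(N^{-\tau_a m})$. Substituting into
$$[B_{m_k},B_{m_k-1},\dots,B_1]=-[A_{m_k},A_{m_k-1},\dots,A_1]\bigl(I+\widetilde{M}\bigr)^{-1},$$
produces $B_j=\mathcal{O}(N^{-\tau_a m})$ for every $j$, and the conclusion follows from the reduction in the first paragraph.

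The only mildly delicate point, rather than a genuine obstacle, is the bookkeeping of the conjugations by $(N^{c/2}\eta)^{\sigma_3}$: a priori these threaten to inflate the off-diagonal entries of $\widetilde{F}_k$ by a factor of $N^c$, which would prevent a uniform $N^{-\tau_a m}$ bound on the $A_j$'s. This is harmless here because the estimate we want is on the \emph{conjugated} object $N^{-(c/2)\sigma_3}R_k N^{(c/2)\sigma_3}$, so the $N^{c/2}$ factors coming in from $\widetilde{F}_k$ are exactly cancelled by the $N^{-(c/2)}$ factors defining the $A_j$'s, as displayed above.
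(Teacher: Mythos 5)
Your argument is correct and follows the same route the paper takes: translate the decay $F_k(\zeta)=I+\mathcal{O}(\zeta^{-m})$ into $F_k(\zeta(z))=I+\mathcal{O}(N^{-\tau_a m})$ on $\partial D_\beta$ via $|\zeta(z)|\asymp N^{\tau_a}$, deduce $A_j=\mathcal{O}(N^{-\tau_a m})$ after the $(N^{c/2})^{\pm\sigma_3}$ conjugations cancel, then invoke the formula $[B_{m_k},\dots,B_1]=-[A_{m_k},\dots,A_1](I+\widetilde M)^{-1}$ with $\|\widetilde M\|=\mathcal{O}(N^{-\tau_a m})$ to get $B_j=\mathcal{O}(N^{-\tau_a m})$. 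The paper's own proof is a one-liner that implicitly appeals to exactly this chain, so you have simply filled in the details the authors left to the reader.
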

\begin{proof}
From $A_j={\cal O}(N^{-m\tau_a})$, $B_j={\cal O}(N^{-m\tau_a})$
follows.
\end{proof}

 Using $R_k(z)$ from the above lemma, we define $H_k(z)$ by
\begin{equation}\label{hk}
H_k(z)=\left(N^{\frac{c}{2}}\eta(z)\right)^{-\sigma_3}R_k(z)\widetilde{{F}}^{-1}_k(z)\left(N^{\frac{c}{2}}
\eta(z)\right)^{\sigma_3}H_{k-1}(z).\end{equation} Since $H_0=I$, by
induction, $H_k(z)$ is holomorphic at $\beta$ and unimodular. By
Corollary \ref{cor} we get
\begin{equation}\label{Hbound}
H_k(z)=I+{\cal O}(N^{-\tau_a}),\quad z\in\overline{D_\beta}.
\end{equation}

\begin{lemma}\label{thum3}
For $z\in\partial D_\beta,$
 we have $$Z_{+}^{\infty}(z)\left(Z_{-}^{\infty}(z)\right)^{-1}=\left(N^{c/2} \eta(z)\right)^{\sigma_3}H(z)\widehat{\cal
F}(\zeta)H^{-1}(z)\left(N^{c/2} \eta(z)\right)^{-\sigma_3}.$$
\end{lemma}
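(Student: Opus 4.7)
My plan is to expand both one-sided limits on $\partial D_\beta$ using the piecewise definition \eqref{z} of $Z^\infty$ together with Lemma \ref{lemma2}, reduce the claim to a purely algebraic identity between $\mathcal{R}$, the diagonal conjugator $\bigl(N^{c/2}\eta\bigr)^{\sigma_3}$, $H$, and the block factors $F_j(\zeta)$, and then establish that identity by induction on the number of iteration steps.

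For $z \in \partial D_\beta$, the $+$ limit is taken from inside $D_\beta$ and the $-$ limit from outside. Abbreviate $G(z) := \bigl(N^{c/2}\eta(z)\bigr)^{\sigma_3}$ and set $H := H_k$, where $k$ is the final iteration index of the construction in Section~\ref{rs}. Lemma \ref{lemma2} identifies the interior formula for $Z^\infty$ with $G H \mathcal{F}(\zeta) G^{-1} \Phi$, while the exterior formula from \eqref{z} is $\mathcal{R}\Phi$. After $\Phi\,\Phi^{-1}$ cancels one obtains
\begin{equation*}
Z^\infty_+(z)\bigl(Z^\infty_-(z)\bigr)^{-1} = G(z)\,H(z)\,\mathcal{F}(\zeta(z))\,G(z)^{-1}\,\mathcal{R}(z)^{-1}.
\end{equation*}
Using the factorization $\mathcal{F} = \widehat{\mathcal{F}}\,F_k F_{k-1}\cdots F_1$ from Lemma \ref{lem4}, the assertion of the lemma reduces to the algebraic identity
\begin{equation*}
\mathcal{R}(z)\,G(z) = G(z)\,H(z)\,F_k(\zeta)\,F_{k-1}(\zeta)\cdots F_1(\zeta).
\end{equation*}

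I would prove this identity by induction on the iteration index. Reading off \eqref{fk} in the form $\widetilde{F}_j(z)^{-1} = G\,H_{j-1}\,F_j(\zeta)^{-1}\,H_{j-1}^{-1}\,G^{-1}$ and substituting into the definition \eqref{hk} of $H_j$ gives, after the $H_{j-1}^{-1}H_{j-1} = I$ cancellation, the single-step relation
\begin{equation*}
R_j(z)\,G(z)\,H_{j-1}(z) = G(z)\,H_j(z)\,F_j(\zeta).
\end{equation*}
Iterating this on $\mathcal{R}\,G = R_k R_{k-1}\cdots R_1\,G\,H_0$ (with $H_0 = I$) telescopes each $R_j$ on the left into a corresponding $F_j$ on the right, producing $\mathcal{R}\,G = G\,H_k\,F_k \cdots F_1$, which is exactly the required identity.

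The argument is essentially formal, so I do not expect a substantive obstacle; the only point requiring care is respecting the non-commutation of the diagonal factor $G$, the holomorphic factors $H_j$, and the $(I + \text{nilpotent})$ factors $F_j$, which forces these pieces to be combined in precisely the order prescribed by \eqref{hk}. Once the one-step identity $R_j G H_{j-1} = G H_j F_j$ is verified, the $\mathcal{R}$ coming from the exterior piece of $Z^\infty$ is fully absorbed by the product $F_k\cdots F_1$ coming from the interior, leaving only the claimed $G H \widehat{\mathcal{F}} H^{-1} G^{-1}$.
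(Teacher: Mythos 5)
Your proposal is correct and follows the same route as the paper: expand $Z^\infty_\pm$ via \eqref{z} and Lemma \ref{lemma2}, then absorb $\mathcal{R}^{-1}$ into the factorization $\mathcal{F}=\widehat{\mathcal{F}}F_k\cdots F_1$ using the inductive definition \eqref{hk}. Your single-step identity $R_j\,G\,H_{j-1}=G\,H_j\,F_j$ and its telescoping are exactly the content the paper compresses into the asserted identity \eqref{hk1}, $H_k=G^{-1}R_k\cdots R_1\,G\,F_1^{-1}\cdots F_k^{-1}$, so you have merely spelled out the induction that the paper leaves to the reader.
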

\begin{proof} We have
\begin{equation}\label{e}
\begin{split}
Z_{+}^{\infty}(z)\left(Z_{-}^{\infty}(z)\right)^{-1}
&=\Phi(z)\left(\frac{z-a}{z}\right)^{\frac{c}{2}\sigma_3}\mathcal
{P}(z)\left(\frac{z-a}{z}\right)^{-\frac{c}{2}\sigma_3}\Phi^{-1}(z){\cal
R}^{-1}(z)\vspace{0.2cm}\\
&=\left(N^{c/2} \eta(z)\right)^{\sigma_3}H(z){\cal
F}(z)\left(N^{c/2} \eta(z)\right)^{-\sigma_3}{\cal R}^{-1}(z)\vspace{0.2cm}\\
&=\left(N^{c/2} \eta(z)\right)^{\sigma_3}H(z)\widehat{\cal
F}(\zeta)H^{-1}(z)\left(N^{c/2} \eta(z)\right)^{-\sigma_3}.
\end{split}
\end{equation}
The first equality is from \eqref{z}, the second equality comes from
Lemma \ref{lemma2}, and the last equality follows from \eqref{ff}
and
\begin{equation}\label{hk1}
H=H_k=\left(N^{c/2} \eta\right)^{-\sigma_3} R_k\cdots R_1
\left(N^{c/2} \eta\right)^{\sigma_3} F_1^{-1}\cdots F_k^{-1},
\end{equation} which follows from the inductive definition of  $H_k$ at
\eqref{hk} with $H_0=I$. The theorem is proved using Lemma
\ref{lem4} and \eqref{Hbound}.
\end{proof}

\begin{proof}[Proof of Theorem \ref{thumm}]
  Contrary to
the proof of Theorem \ref{thma1}, all the error bounds will be for a
{\em fixed} $c$.

Here, we construct $\{R_j\}$ and $\{H_j\}$ inductively from the
initial data $R_1={\cal R}$ and $H_1=H$ where ${\cal R}$ and $H$
given in \eqref{rh} and \eqref{rh2}.

By \eqref{f2} with \eqref{fk} a calculation leads to,
\begin{equation}\nonumber
\begin{split}
\widetilde{{F}}_2(z)&=\left(N^{\frac{c}{2}}\eta(z)\right)^{\sigma_3}H_{1}(z)
F_2(\zeta(z))H_{1}^{-1}(z)\left(N^{\frac{c}{2}}\eta(z)\right)^{-\sigma_3} \vspace{0.2cm}\\
&=N^{\frac{c}{2}\sigma_3}\left(I+\begin{bmatrix} {\cal
O}\left(\frac{1}{N}\right)&{\cal
O}\left(\frac{1}{N^{3/2}}\right) \\
{\cal O}\left(\frac{1}{\sqrt N}\right)&{\cal
O}\left(\frac{1}{N}\right)
\end{bmatrix}\right)N^{-\frac{c}{2}\sigma_3}.
\qquad z\in\partial D_\beta.
\end{split}
\end{equation}
An estimate using $H_1=I+{\cal O}(N^{-1/2})$ in \eqref{Hbound} gives
the same result except the bound at ($12$)-entry above may be
relaxed to ${\cal O}(N^{-1})$. Then by Lemma \ref{r} we have
$$R_2(z)=N^{\frac{c}{2}\sigma_3}\left(I+\begin{bmatrix} {\cal O}\left(\frac{1}{N}\right)&{\cal
O}\left(\frac{1}{N}\right) \\
{\cal O}\left(\frac{1}{\sqrt N}\right)&{\cal O}\left(\frac{1}{\sqrt
N}\right)\end{bmatrix}\right)N^{-\frac{c}{2}\sigma_3}.$$ Using
$R_1={\cal R}$ with \eqref{rh} we get
$$R_2R_1=N^{\frac{c}{2}\sigma_3}\left(I+\begin{bmatrix} {\cal
O}\left(\frac{1}{N}\right)&\frac{\sqrt{2 \pi } \left(a^2-1\right)^c
}{\sqrt N a \,\Gamma (c)}\frac{1}{z-\beta}+{\cal
O}\left(\frac{1}{N}\right) \\
{\cal O}\left(\frac{1}{\sqrt N}\right)&{\cal O}\left(\frac{1}{\sqrt
N}\right)\end{bmatrix}\right)N^{-\frac{c}{2}\sigma_3}.$$ From
\eqref{f21} a further decompositions of ${\cal F}$ gives
${F}_k=I+{\cal O}\left(\zeta^{-3}\right)$ for $k\geq3$. Then, by
Corollary \ref{cor}, we get
$$R_k\cdots R_3=N^{\frac{c}{2}\sigma_3}(I+{\cal O}(N^{-3/2}))N^{-\frac{c}{2}\sigma_3},$$
and
$$R_k\cdots
R_1=N^{\frac{c}{2}\sigma_3}\left(I+\begin{bmatrix} {\cal
O}\left(\frac{1}{N}\right)&\frac{\sqrt{2 \pi } \left(a^2-1\right)^c
}{\sqrt N a \,\Gamma (c)}\frac{1}{z-\beta}+{\cal
O}\left(\frac{1}{N}\right) \\
{\cal O}\left(\frac{1}{\sqrt N}\right)&{\cal O}\left(\frac{1}{\sqrt
N}\right)\end{bmatrix}\right)N^{-\frac{c}{2}\sigma_3},\quad
z\in\partial D_\beta.$$ Using Lemma \ref{lem4}, we can have $
\widehat{\cal F}(\zeta)=I+{\cal O}\left(\zeta^{-L}\right)$ for an
arbitrary $L$. Using Lemma \ref{thum3} with
$${\cal R}=R_k\cdots R_1 \quad \text{and} \quad H=H_k= I+{\cal O}(N^{-1/2}),$$ we get
$Z_{+}^{\infty}\left(Z_{-}^{\infty}\right)^{-1}=I+{\cal O}(N^{-L})$
on $\partial D_\beta.$ From the similar argument as in the proof of
Theorem \ref{thma1}, we obtain
\begin{equation}\nonumber
Y(z)=e^{\frac{N\ell}{2}\sigma_3}\left(I+{\cal
O}\left(\frac{1}{N^{L}}\right)\right) Z^\infty(z)\begin{bmatrix} 1&0
\\\displaystyle -\star\, \Big(\frac{z}{z-a}\Big)^{c}e^{N\phi(z)}&1
\end{bmatrix}e^{\frac{-N\ell}{2}\sigma_3}e^{Ng(z)\sigma_3}
\end{equation}
uniformly over a compact set for an arbitrary positive integer $L$.
The proof is finished by calculations similar to \eqref{long1} and
\eqref{long2}.
\end{proof}

\section{$a<1$}\label{sec5}
In this section, we consider the case $a<1$ following closely the
analysis of previous two sections for the case $a>1$.

\begin{figure}
\begin{center}
\includegraphics[width=0.45\textwidth]{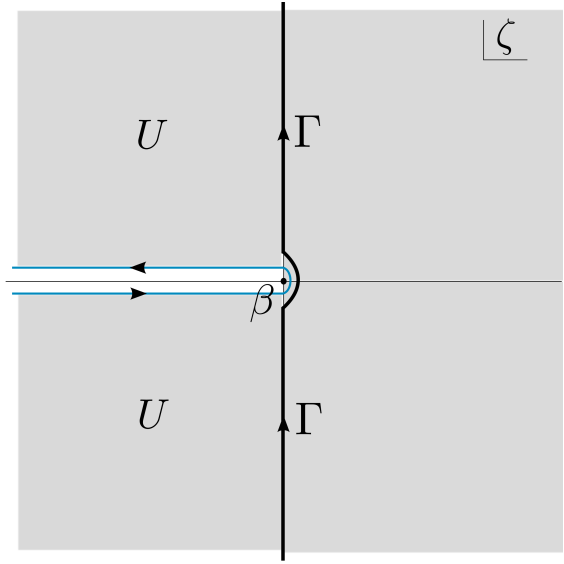}
\caption{Jump contours of ${\cal P}$ \eqref{post} in $D_\beta$
(left); the shaded region (everywhere except the negative real axis)
is $U$.}
\end{center}
\end{figure}

From \eqref{phi0}, we obtain
\begin{equation}\nonumber
\phi_{A}(z)=\frac{a^2-1}{a}(z-\beta)\left(1+\mathcal
{O}(z-\beta)\right).
\end{equation}
 We define  $\zeta: D_{\beta}\rightarrow\mathbb{C}$ by \eqref{zetamap}
where $D_{\beta}$ is a sufficiently small but fixed disc around
$z=\beta$ such that $\zeta$ is one-to-one. Under the mapping $\zeta$
the contour $\mathcal {S}$ maps to the imaginary axis.

Inside ${D}_{\beta}$ we want to find $\mathcal {P}$ such that
$$Z^\infty(z)=\Phi(z)\left(\frac{z-a}{z}\right)^{\frac{c}{2}\sigma_3}\mathcal
 {P}(z)\left(\frac{z-a}{z}\right)^{-\frac{c}{2}\sigma_3}$$ satisfies
the jump conditions of $Z$ in (\ref{v1}), i.e.,
\begin{equation}\label{post}
\left\{
\begin{array}{lll}
\mathcal {P}_{+}(z)=\mathcal {P}_{-}(z)\begin{bmatrix}
1&0 \\
e^{\zeta(z)}&1
\end{bmatrix},  &z\in\partial U\cap D_\beta,\vspace{0.2cm}\\
\mathcal {P}_{+}(z)=\begin{bmatrix}
0&-1 \\
1&0
\end{bmatrix}\mathcal {P}_{-}(z)\begin{bmatrix}
0&1 \\
-1&0
\end{bmatrix},& z\in\Gamma\cap D_\beta,\vspace{0.2cm}\\
\mathcal {P}_{+}(z)=e^{-{c\pi\mathrm{i}}\sigma_3}\mathcal
{P}_{-}(z)e^{{c\pi\mathrm{i}}\sigma_3}, &z\in(-\infty,a]\cap
D_\beta.
\end{array}
\right.
\end{equation}
Let us define $S$ by
$$S=S(\zeta)=\begin{cases} I,&|\arg \zeta|<\pi/2 ,\\\begin{bmatrix}
0&1 \\
-1&0
\end{bmatrix},\quad &\text{otherwise}. \end{cases}$$
Here we choose $S$ such that $S^{-1}\zeta(z)^{\frac{c}{2}\sigma_3}$
satisfies {\it the left jump of} $\mathcal {P}(z)$ from the second
and the third equations of (\ref{post}). Then the matrix function,
\begin{equation}\label{ww1}
W(z)=\zeta(z)^{-\frac{c}{2}\sigma_3}S\mathcal
{P}(z)S^{-1}\zeta(z)^{\frac{c}{2}\sigma_3},\end{equation} satisfies
\begin{equation}\nonumber
W_{+}(z)=W_{-}(z)\begin{bmatrix}
1&-\zeta(z)^{-c}e^{\zeta(z)} \\
0&1
\end{bmatrix}, \quad z\in\partial U\cap D_\beta.
\end{equation}
Let $H$ be a holomorphic matrix (that will be determined soon). A
solution to the above jump condition can be written by
$W(z)=H(z){\cal F}(\zeta(z))$ where \begin{equation}\label{fff}
 {\cal
F}(\zeta):=\begin{bmatrix}
1&\displaystyle\frac{-1}{2\mathrm{i}\pi}\int_{\cal L}\frac{e^{s}}{s^{c}(s-\zeta)}ds \\
0&1
\end{bmatrix}.
\end{equation}
Here the contour ${\cal L}$ is the image of $\partial U$ under
$\zeta$, and it begins at $-\infty$, circles the origin once in the
counterclockwise direction, and returns to $-\infty$.

\begin{lemma}\label{LemmaS}
For $z\in D_\beta$ we have
\begin{equation*}
\Phi(z)\left(\frac{z-a}{z}\right)^{\frac{c}{2}\sigma_3}S^{-1}\zeta(z)^{\frac{c}{2}\sigma_3}=
\left(N^{c/2} \eta(z)\right)^{\sigma_3}
\end{equation*}
where $\eta:D_\beta\to\CC$,
\begin{equation}\nonumber
\eta(z): =
\frac{1}{N^{c/2}}\left(\frac{z\,\zeta(z)}{z-\beta}\right)^{c/2},
\end{equation}
is a nonvanishing $N$-independent analytic function in $D_\beta$.
\end{lemma}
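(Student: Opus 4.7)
The identity is a direct algebraic verification on $D_\beta$, entirely analogous to Lemma \ref{LemN} above, so I would prove it by carrying out the fourfold matrix product on the left piecewise, using the definitions of $\Phi$ and $S$. The key simplification is that $a<1$ forces $\beta=\min\{a,1/a\}=a$, whereupon the formula for $\Phi$ on $\text{Int}\,\Gamma$ collapses from $\left[\begin{smallmatrix}0 & ((z-a)/(z-\beta))^c\\ -((z-\beta)/(z-a))^c & 0\end{smallmatrix}\right]$ to the constant matrix $\left[\begin{smallmatrix}0&1\\-1&0\end{smallmatrix}\right]$, while on $\text{Ext}\,\Gamma$ it equals the diagonal matrix $(z/(z-a))^{c\sigma_3}$.

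Using $\zeta(z)=-N\phi_A(z)\sim\frac{1-a^2}{a}N(z-\beta)$, the image of $\text{Ext}\,\mathcal{S}\cap D_\beta$ under $\zeta$ is $\{|\arg\zeta|<\pi/2\}$ (where $S=I$) and the image of $\text{Int}\,\mathcal{S}\cap D_\beta$ is $\{|\arg\zeta|>\pi/2\}$ (where $S^{-1}=\left[\begin{smallmatrix}0&-1\\1&0\end{smallmatrix}\right]=\Phi^{-1}$). In the first region the product of three diagonal matrices collapses immediately to $(z/(z-a))^{c\sigma_3}\cdot((z-a)/z)^{(c/2)\sigma_3}\cdot\zeta^{(c/2)\sigma_3}=(z\zeta/(z-a))^{(c/2)\sigma_3}$. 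In the second region, conjugation by $\Phi$ together with $S^{-1}=\Phi^{-1}$ swaps the diagonal entries of $((z-a)/z)^{(c/2)\sigma_3}$, turning it into $(z/(z-a))^{(c/2)\sigma_3}$; multiplying by $\zeta^{(c/2)\sigma_3}$ on the right again yields $(z\zeta/(z-a))^{(c/2)\sigma_3}$. Since $\beta=a$, this is exactly $(N^{c/2}\eta(z))^{\sigma_3}$ with $\eta$ as defined in the statement.

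It only remains to justify that $\eta$ is an $N$-independent, nonvanishing analytic function on $D_\beta$. From $\zeta(z)=-N\phi_A(z)$ and the fact that $\phi_A$ is $N$-independent with a simple zero at $\beta$, the ratio $\zeta(z)/(z-\beta)$ equals $N$ times a nonvanishing $N$-independent holomorphic function on $D_\beta$; raising to the $(c/2)$-power and multiplying by the prefactor $N^{-c/2}$ cancels all $N$-dependence, and analyticity and non-vanishing then follow because $\beta=a>0$ (so $0\notin D_\beta$) and $\zeta$ is univalent. No serious obstacle is expected; the only care needed is the consistent choice of branch cuts for the various fractional powers across $\Gamma\cap D_\beta$ and the real axis, which is forced by the jumps of $\mathcal{P}$ listed in \eqref{post} and the piecewise definition of $S$.
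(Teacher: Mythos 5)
Your computation is correct and is exactly the direct piecewise verification the paper intends (the analogous Lemma~\ref{LemN} is flagged as something one ``can check by direct calculation,'' and this one is stated without proof). The simplification you rely on --- $\beta=a$ for $a<1$, which makes $\Phi$ and $S$ coincide on ${\rm Int}\,\Gamma$ so that the conjugation $\Phi\left(\frac{z-a}{z}\right)^{\frac{c}{2}\sigma_3}S^{-1}$ merely inverts the diagonal factor --- is precisely what makes both sides of $\Gamma$ collapse to $\left(\frac{z\,\zeta(z)}{z-\beta}\right)^{\frac{c}{2}\sigma_3}=(N^{c/2}\eta(z))^{\sigma_3}$.
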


By Lemma \ref{LemmaS}, \eqref{ww1} and $W=H{\cal F}$ we get
\begin{align}\label{aa}
&\nonumber
\Phi(z)\left(\frac{z-a}{z}\right)^{\frac{c}{2}\sigma_3}\mathcal
 {P}(z)\left(\frac{z-a}{z}\right)^{-\frac{c}{2}\sigma_3}
 \\
&\nonumber=
\Phi(z)\left(\frac{z-a}{z}\right)^{\frac{c}{2}\sigma_3}S^{-1}\zeta^{{(c/2)}\sigma_3}{
W}(z)\,\zeta^{-{(c/2)}\sigma_3}S\left(\frac{z-a}{z}\right)^{-\frac{c}{2}\sigma_3}
\\
&\nonumber= \Phi(z)\left(\frac{z-a}{z}\right)^{\frac{c}{2}\sigma_3}
S^{-1}\zeta^{{(c/2)}\sigma_3}H(z)\,{\cal
F}(\zeta(z))\,\zeta^{-{(c/2)}\sigma_3}S
\left(\frac{z-a}{z}\right)^{-\frac{c}{2}\sigma_3}\\ &=\left(N^{c/2}
\eta(z)\right)^{\sigma_3} H(z)\,{\cal F}(\zeta(z)) \left(N^{c/2}
\eta(z)\right)^{-\sigma_3} \Phi(z).
\end{align}
This essentially proves the statement in Lemma \ref{lemma2} for
$a<1$.

\begin{lemma}\label{lemma10} As $|\zeta|$ goes to $\infty$,
${\cal F}$ in \eqref{fff} satisfies
 \begin{equation}\label{ffff}
 {\cal
F}(\zeta)F_1(\zeta)^{-1}= I + {\cal
O}\left(\frac{1}{|\zeta^2|}\right)
\end{equation}
uniformly over $c\in(-1,2)$ and
 \begin{equation}\label{ffff1}
{\cal F}(\zeta)F_1(\zeta)^{-1}\cdots F_k(\zeta)^{-1}= I + {\cal
O}\left(\frac{1}{|\zeta^{k+1}|}\right)
\end{equation}
where
\begin{equation}\label{Fkpost}
F_{k}(\zeta) = I+\frac{c_k}{\zeta^k}\begin{bmatrix}
0&1\\
0&0
\end{bmatrix},\quad c_k=\frac{1}{2\mathrm{i}\pi}\int_{\cal L}\frac{s^{k-1}e^{s}}{s^{c}}ds=\frac{\sin(c\pi)\,\Gamma(k-c)}{\pi(-1)^{k-1}}.
\end{equation}
\end{lemma}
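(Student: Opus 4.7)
The plan is to observe that everything in the identity is essentially scalar, and then to produce the asymptotic expansion of the single nontrivial entry by expanding the Cauchy kernel in a finite geometric series.

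Since $E_{12} := \bigl[\begin{smallmatrix}0 & 1 \\ 0 & 0\end{smallmatrix}\bigr]$ is nilpotent, matrices of the form $I + \alpha E_{12}$ multiply via $(I+\alpha E_{12})(I+\beta E_{12}) = I + (\alpha+\beta)E_{12}$, and in particular $F_j(\zeta)^{-1} = I - (c_j/\zeta^j)E_{12}$. Writing ${\cal F}(\zeta) = I + \hat f(\zeta) E_{12}$ with
\begin{equation*}
\hat f(\zeta) = \frac{-1}{2\pi\mathrm{i}}\int_{\cal L}\frac{e^s}{s^c(s-\zeta)}\,ds,
\end{equation*}
a direct multiplication yields
\begin{equation*}
{\cal F}(\zeta)\,F_1(\zeta)^{-1}\cdots F_k(\zeta)^{-1} = I + \Bigl(\hat f(\zeta) - \sum_{j=1}^{k}\frac{c_j}{\zeta^j}\Bigr)E_{12}.
\end{equation*}
Thus both \eqref{ffff} and \eqref{ffff1} reduce to the scalar asymptotic $\hat f(\zeta) = \sum_{j=1}^{k} c_j/\zeta^j + O(|\zeta|^{-k-1})$, with the required uniformity in $c$ when $k=1$.

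To establish it, I would insert the exact identity
\begin{equation*}
\frac{1}{s-\zeta} = -\sum_{j=0}^{k-1}\frac{s^j}{\zeta^{j+1}} - \frac{s^k}{\zeta^k(s-\zeta)}
\end{equation*}
into the defining integral of $\hat f$. The first $k$ terms produce $\sum_{j=1}^{k}\tilde c_j/\zeta^j$ with $\tilde c_j = \frac{1}{2\pi\mathrm{i}}\int_{\cal L} s^{j-1-c}e^s\,ds$; by Hankel's integral representation of the reciprocal Gamma function $\tilde c_j = 1/\Gamma(c-j+1)$, and the reflection formula together with $\sin(\pi(c-j+1)) = (-1)^{j-1}\sin(\pi c)$ rewrites this as the stated $c_j = \sin(c\pi)\Gamma(j-c)/(\pi(-1)^{j-1})$. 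The remainder integral is
\begin{equation*}
R_k(\zeta) = \frac{-1}{2\pi\mathrm{i}\,\zeta^k}\int_{\cal L}\frac{s^{k-c}e^s}{s-\zeta}\,ds,
\end{equation*}
which I would estimate by deforming ${\cal L}$ (using the analyticity of the integrand away from $s=0$ and $s=\zeta$) to a Hankel contour consisting of a bounded loop around $0$ joined to two straight tails along the negative real axis. For $\zeta$ in the region of interest, with $|\zeta|$ large and bounded away from this deformed contour, one has $|s-\zeta| \geq c'|\zeta|$ on the compact part, while on the tails $e^{\mathrm{Re}\,s}$ decays exponentially and dominates the polynomial growth of $|s|^{k-c}$; this gives $\int_{\cal L} s^{k-c}e^s/(s-\zeta)\,ds = O(1/|\zeta|)$ and hence $R_k(\zeta) = O(|\zeta|^{-k-1})$.

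The main subtlety is the uniformity in $c \in (-1,2)$ in \eqref{ffff}. On the (fixed) compact part of the deformed contour, $|s^{-c}|$ stays bounded and on the tails it grows only polynomially in $|s|$, which is absorbed by the exponential decay; the leading coefficient $c_1 = \sin(c\pi)\Gamma(1-c)/\pi = 1/\Gamma(c)$ is entire in $c$ and hence uniformly bounded on $(-1,2)$, and the same holds for each $\tilde c_j$ on any bounded range of $c$. With these two observations in hand the estimate on $R_1$ goes through uniformly, while the improved expansions \eqref{ffff1} for $k \geq 2$ are needed only for fixed $c$ and follow from the same argument.
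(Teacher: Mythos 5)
Your proof is correct and follows essentially the same approach as the paper: the nilpotent structure reduces everything to the scalar asymptotic of $\hat f$, the paper's manipulation $\frac{1}{s-\zeta}+\frac{1}{\zeta}=\frac{s}{\zeta(s-\zeta)}$ is precisely the $k=1$ instance of your finite geometric series, and both estimate the remainder by combining $|s-\zeta|\gtrsim|\zeta|$ with a deformation of $\mathcal L$ (and of the branch cut, which the paper rotates explicitly to $\{te^{\mathrm{i}\theta_0}\}$ when $|\arg\zeta|\geq\pi/2$). One small imprecision worth tightening: you assert $|s-\zeta|\geq c'|\zeta|$ only on the compact loop and invoke exponential decay on the tails, but the tail contribution is then merely bounded by a $\zeta$-independent constant rather than ${\cal O}(1/|\zeta|)$; you should either carry $|s-\zeta|\geq c'|\zeta|$ to the whole (rotated) contour as the paper does, or let the cutoff between loop and tails grow with $|\zeta|$. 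You also go one step further than the paper by verifying the coefficient formula $c_k=\sin(c\pi)\Gamma(k-c)/(\pi(-1)^{k-1})$ via Hankel's formula and the reflection identity, which the paper states in the lemma without proof.
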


 \begin{proof}
We only show the proof of \eqref{ffff} as the proof of \eqref{ffff1}
is similar.
 The only nonzero entry of $\left({\cal
F}F_1^{-1}-I\right)$ is the $(12)-$entry. For
$\arg|\zeta|<{\pi}/{2},$ we have
\begin{equation}\nonumber
\begin{split}
\left|\left({\cal
F}(\zeta)F_1(\zeta)^{-1}\right)_{12}\right|&=\frac{1}{2\pi}
\left|\int_{\cal L}\frac{e^{s}}{s^{c}(s-\zeta(z))}ds+\int_{\cal
L}\frac{e^{s}}{s^{c}\zeta(z)}ds\right|\\&\leq\frac{1}{2\pi}\int_{\cal
L}\left|\frac{e^{s}s}{s^{c}(\zeta(z)-s)\zeta(z)}\right||ds|
\leq\frac{1}{2\pi}
\int_{\cal L}\left|\frac{e^{s}s}{s^{c}\zeta^2}\right||ds|\vspace{0.2cm}\\
&= \frac{1}{2\pi |\zeta^2|} \int_{\cal
L}\left|\frac{e^{s}s}{s^{c}}\right||ds|.
\end{split}
\end{equation}
In the second inequality, we use $|\zeta-s|\geq|\zeta|$ for ${\rm
Re}\, \zeta>0$ and $s\in(-\infty,0].$ One can prove that the last
integral is finite by deforming the contour away from the origin so
that the integrant is bounded from above.

When $|\arg \zeta|\geq\pi/2$ a similar argument using the
deformation of integration contour leads to the proof of the lemma.
Note that the branch cut $(-\infty,0)$ of $s^c$ and the integration
contour ${\cal L}$ can be deformed, respectively, into
$\{te^{\mathrm{i}\theta_0}\}_{0< t<\infty}$ for
${\pi}/{2}\leq|\theta_0|\leq \pi$ and the corresponding contour
around the new branch cut. We skipped the further details.
\end{proof}

\begin{theorem}\label{thm6}
For $a<1$ we get
\begin{equation}\nonumber
P_N(z) =\left\{
\begin{array}{lll}
\displaystyle z^N\left(\frac{z}{z-a}\right)^{c}\left(1+\mathcal
{O}\left(\frac{1}{N^{2-c}}\right)\right), &z\in {\rm Ext}\,{\cal S}\setminus (U\cup D_{\beta}),\vspace{0.4cm}\\
\displaystyle
z^N\left(\left(\frac{z}{z-a}\right)^{c}-\frac{a(1-a^2)^{c-1}}{N^{1-c}\Gamma(c)}\frac{e^{N\phi_{A}(z)}}{(z-a)}
+\mathcal {O}\left(\frac{1 }{N^{2-c}},\frac{ e^{N\phi_{A}}
}{N^{2-c}}\right)\right), &z\in
U\setminus {D}_{\beta},\vspace{0.4cm}\\
\displaystyle
z^N\left(\left(\frac{z}{z-a}\right)^{c}-\left(\frac{z\zeta(z)}{z-a}\right)^{c}\frac{1}{e^{\zeta(z)}}\left({
\hat{f}}(\zeta(z))+{\cal O}\left(\frac{c}{N}\right)\right)+\mathcal
{O}\left(\frac{1}{N^{2-c}}\right)\right),  &z\in{D}_{\beta}.
\end{array}\right.
\end{equation}
where
$$\hat{f}(\zeta)=\frac{-1}{2\mathrm{i}\pi}\int_{\cal L}\frac{e^{s}}{s^{c}(s-\zeta)}ds.$$
Here the contour ${\cal L}$ is the image of $\partial U$ under
$\zeta$, and it begins at $-\infty$, circles the origin once in the
counterclockwise direction, and returns to $-\infty$. The error
bounds are uniform over $-1<c<2$.
\end{theorem}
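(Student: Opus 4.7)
The plan is to follow the Riemann--Hilbert strategy of Theorem \ref{thumm} using the local parametrix $\mathcal{F}$ of \eqref{fff} and the asymptotic expansion from Lemma \ref{lemma10} in place of Lemma \ref{prop}. Identity \eqref{aa} already establishes the $a<1$ analog of Lemma \ref{lemma2}, and the inductive machinery of Section \ref{rs} --- Lemma \ref{lem4}, Lemma \ref{r}, Corollary \ref{cor}, and Lemma \ref{thum3} --- transports verbatim. The substantive changes are that the Weber function is replaced by $\hat{f}$ and that $\tau_a=1$, so each inductive step now contributes a factor of $N^{-1}$ rather than $N^{-1/2}$.

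First I would invoke Lemma \ref{lem4} to factor $\mathcal{F}(\zeta) = \widehat{\mathcal{F}}(\zeta)\,F_k(\zeta)\cdots F_1(\zeta)$ with $\widehat{\mathcal{F}} = I + O(\zeta^{-L})$ for any prescribed $L$, the $F_j$'s being the strictly upper-triangular nilpotent factors from \eqref{Fkpost}; their scalar coefficients $c_j = \sin(c\pi)\Gamma(j-c)/\pi$ are bounded on compact subsets of $(-1,2)$. Then I would build $H_j$ and $R_j$ inductively via \eqref{hk} and Lemma \ref{r}. Corollary \ref{cor} with $\tau_a = 1$ gives $N^{-(c/2)\sigma_3} R_j N^{(c/2)\sigma_3} = I + O(N^{-m_j})$ on $\partial D_\beta$, and an explicit computation at the first step, using $c_1 = 1/\Gamma(c)$, $\eta(\beta)^2 = (1-a^2)^c$, $\zeta'(\beta) = N(1-a^2)/a$, and $\beta = a$, yields
\begin{equation*}
\mathcal{R}(z) := R_k(z)\cdots R_1(z) = I + \frac{a(1-a^2)^{c-1}}{N^{1-c}\Gamma(c)(z-a)}\begin{bmatrix} 0 & 1 \\ 0 & 0 \end{bmatrix} + O\!\left(\frac{1}{N^{2-c}}\right),
\end{equation*}
the leading coefficient matching the correction displayed in the theorem.

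With $Z^\infty$ defined as $\mathcal{R}\,\Phi$ outside $D_\beta$ and by \eqref{aa} inside, Lemma \ref{thum3} yields $Z^\infty_+(Z^\infty_-)^{-1} = I + O(N^{-L})$ on $\partial D_\beta$; the jumps on $\partial U$ and $\Gamma\setminus U$ are exponentially suppressed by ${\rm Re}\,\phi < 0$ on $V\setminus\mathcal{S}$ from Lemma \ref{lemma1}. The small-norm theorem then gives $\mathcal{E} := Z^\infty Z^{-1} = I + O(N^{-L})$. Undoing the chain of transformations in \eqref{t1} and reading off the $(1,1)$ entry region by region produces the three asymptotic formulas: the outer region and the lens reduce to the bookkeeping of \eqref{long1} with the new $\mathcal{R}$, while inside $D_\beta$ the $(1,2)$ entry of $H(z)\mathcal{F}(\zeta(z))$ contributes the $\hat{f}(\zeta(z))$ term, mirroring \eqref{long2}.

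The main obstacle is keeping the error bound uniform over $c \in (-1,2)$. Every $F_j$ is strictly upper-triangular, so the conjugation by $(N^{c/2}\eta)^{\sigma_3}$ amplifies only the $(1,2)$ entry by a factor $N^c$; this amplification is precisely the factor already exposed in the stated $N^{1-c}$ normalization, and the boundedness of the coefficients $c_j$ together with the boundedness of $\eta$ on $\overline{D_\beta}$ controls the remaining inductive errors uniformly.
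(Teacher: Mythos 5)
Your proposal is correct in substance, but it takes a genuinely different (and heavier) route than the paper's proof. The paper proves Theorem \ref{thm6} by mirroring the proof of Theorem \ref{thma1}: it constructs only the single correction $R_1$ from \eqref{rh1} with $H_1$ as in \eqref{hk1}, computes the off-diagonal remainder $h(z)={\cal O}(c/N)$ in \eqref{hboundpost}, and then applies the small-norm theorem directly to the error matrix ${\cal E}=Z^\infty Z^{-1}$, whose jump on $\partial D_\beta$ is $I+{\cal O}(1/N^{2-c})$ by \eqref{ffff} after the $N^{(c/2)\sigma_3}$ conjugation. The quantity ${\cal O}(1/N^{2-c})$ in the final statement is thus the small-norm error itself. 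You instead invoke the full inductive machinery of Section \ref{rs} (Lemma \ref{lem4}, Lemma \ref{r}, Corollary \ref{cor}, Lemma \ref{thum3}) to drive the jump of $Z^\infty$ down to ${\cal O}(N^{c-L})$ for any $L$ -- which is what the paper does for Theorem \ref{thum33}, the \emph{fixed}-$c$, all-orders statement -- and then recover the ${\cal O}(1/N^{2-c})$ in the theorem by truncating ${\cal R}=R_k\cdots R_1$ after the first term. Both paths land in the same place; the paper's is shorter precisely because the theorem only asks for a first correction with uniform constants, so the one-step $R_1$/small-norm argument already suffices. Two small things to tighten in your write-up: (i) the jump of $Z^\infty$ after Lemma \ref{thum3} is $I+{\cal O}(N^{c-L})$, not $I+{\cal O}(N^{-L})$, because the conjugation by $(N^{c/2}\eta)^{\sigma_3}$ amplifies the $(1,2)$ entry of $\widehat{\cal F}$ by $N^c$; this is still uniformly $o(1)$ for $c\in(-1,2)$ once $L\ge 3$, so nothing breaks, but the exponent should be stated correctly since it is the same amplification that produces the $1/N^{2-c}$ in the single-step approach. (ii) You should note, as the paper does in the proof of Theorem \ref{thum33}, that because every $F_j$ in \eqref{Fkpost} is strictly upper triangular, the $R_j$'s produced by Lemma \ref{r} are also upper triangular; this is what makes the ${\cal R}=I+\frac{a(1-a^2)^{c-1}}{N^{1-c}\Gamma(c)(z-a)}E_{12}+{\cal O}(1/N^{2-c})$ truncation clean.
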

\begin{proof} From $F_1$ in \eqref{Fkpost} one can obtain $R_1$ using Lemma \ref{r} and obtain $H_1$ by \eqref{hk1}:
\begin{equation}{\label{rh1}}
\begin{split}
 R_1(z) &= I + \frac{a(1-a^2)^{c-1}}{N^{1-c}\Gamma(c)}\frac{1}{z-a}
\begin{bmatrix} 0 & 1\\  0&0 \end{bmatrix}, \vspace{0.2cm}\\
 H_1(z) &=\left(N^{c/2} \eta(z)\right)^{-\sigma_3} R_1(z)
\left(N^{c/2} \eta(z)\right)^{\sigma_3} F_1(\zeta(z))^{-1} =
\begin{bmatrix} 1 & h(z)\\  0&1 \end{bmatrix},
\end{split}
\end{equation}
where (using $c_1=1/\Gamma(c)$ that appears in $F_1$)
\begin{equation}\label{hboundpost}
h(z) =
\left(\frac{z-a}{z\zeta(z)}\right)^{c}\left(\frac{a(1-a^2)^{c-1}}{N^{1-c}\Gamma(c)}\frac{1}{z-a}\right)-
\frac{1}{\zeta(z)\Gamma(c)}={\cal O}\left(\frac{c}{N}\right).
\end{equation}
Setting ${\cal R}=R_1$ and $H=H_1$, we can define $Z^\infty$ by
 \eqref{PHF} and \eqref{z}. Defining the error matrix by ${\cal
E}=Z^{\infty}Z^{-1}$, by the similar calculation as \eqref{EEZZ}
with $\widehat{\cal F}={\cal F}F_1^{-1}$ and \eqref{ffff}, we get
\begin{equation}\nonumber
{\cal E}_{+}(z){\cal E}^{-1}_{-}(z)=I+\mathcal
{O}\left(\frac{1}{N^{2-c}}\right),\quad z\in\partial D_\beta,
\end{equation}
uniformly over $c\in(-1,2)$.  By the same argument as in the proof
of Theorem \ref{thma1} we obtain
$$Z(z)=\left(I+\mathcal
{O}\left(\frac{1}{N^{2-c}}\right)\right)Z^{\infty}(z).$$ The proof
is finished by the calculations exactly similar to \eqref{long1} and
\eqref{long2}. To add a little more detail, inside $D_\beta$ we need
to use \eqref{hboundpost} to obtain the final result.  Below we
write the strong asymptotics {\em before} using  \eqref{hboundpost}
as an example.
$$\left(\left(\frac{z}{z-a}\right)^{c}-\left(\frac{z\zeta(z)}{z-a}\right)^{c}\left({
\hat{f}}(\zeta(z))+h(z)\right)e^{N\phi(z)}+\mathcal
{O}\left(\frac{1}{N^{2-c}}\right)\right)e^{Ng(z)},\qquad
z\in\text{Ext}\,\mathcal {S}\cap{D}_{\beta}.$$ We skip the
computation.
\end{proof}

\begin{proof}[Proof of Theorem \ref{thum33}]
The proof will be similar to the above proof and the proof of
Theorem \ref{thumm}.

By \eqref{Fkpost}, \eqref{rh1} and \eqref{fk} we get
\begin{equation}\label{f222}
\begin{split}
\widetilde{{F}}_2(z)&=\left(N^{\frac{c}{2}}\eta(z)\right)^{\sigma_3}H_{1}(z)
F_2(\zeta(z))H_{1}^{-1}(z)\left(N^{\frac{c}{2}}\eta(z)\right)^{-\sigma_3} \vspace{0.2cm}\\
&=N^{\frac{c}{2}\sigma_3}\left(I+\begin{bmatrix} 0&{\cal
O}\left(N^{-2}\right) \\
0&0
\end{bmatrix}\right)N^{-\frac{c}{2}\sigma_3},\qquad z\in\partial D_\beta.
\end{split}
\end{equation}
From Lemma \ref{r} and \eqref{f222} we have
$$R_2(z)=N^{\frac{c}{2}\sigma_3}\left(I+\begin{bmatrix} 0&{\cal
O}\left(N^{-2}\right) \\
0&0\end{bmatrix}\right)N^{-\frac{c}{2}\sigma_3}.$$ Combined with
${R_1}$ in \eqref{rh1}, we get
$$R_2R_1=N^{\frac{c}{2}\sigma_3}\left(I+\begin{bmatrix} 0&\displaystyle\frac{a(1-a^2)^{c-1}}{N\Gamma(c)}\frac{1}{z-a}+{\cal
O}\left(\frac{1}{N^2}\right) \\
0&0\end{bmatrix}\right)N^{-\frac{c}{2}\sigma_3}.$$ From
\eqref{Fkpost} in Lemma \ref{lemma10}, we have ${F}_k=I+{\cal
O}\left(\zeta^{-3}\right)$ for $k\geq3$. By Corollary \ref{cor}, we
obtain
$$R_k\cdots R_3=N^{\frac{c}{2}\sigma_3}(I+{\cal O}(N^{-3}))N^{-\frac{c}{2}\sigma_3}.$$
In fact, following the inductive construction of $R_k$ and $H_k$ in
 Section \ref{rs}, one can find that $R_k$'s are all upper diagonal matrix.
Therefore, we get
 $$R_k\cdots
R_1=N^{\frac{c}{2}\sigma_3}\left(I+\begin{bmatrix}
0&\displaystyle\frac{a(1-a^2)^{c-1}}{N\Gamma(c)}\frac{1}{z-a}+{\cal
O}\left(\frac{1}{N^2}\right) \\
0&0\end{bmatrix}\right)N^{-\frac{c}{2}\sigma_3},\quad z\in\partial
D_\beta.$$ Using Lemma \ref{lem4}, we can have $ \widehat{\cal
F}(\zeta)=I+{\cal O}\left(\zeta^{-L}\right)$ for an arbitrary $L$.
Using Lemma \ref{thum3} with
$${\cal R}=R_k\cdots R_1 \quad \text{and} \quad H=H_k= I+{\cal O}(N^{-1}),$$ we get
$Z_{+}^{\infty}\left(Z_{-}^{\infty}\right)^{-1}=I+{\cal O}(N^{-L})$
on $\partial D_\beta.$ From the similar argument as in the proof of
Theorem \ref{thma1}, we obtain
\begin{equation}\nonumber
Y(z)=e^{\frac{N\ell}{2}\sigma_3}\left(I+{\cal
O}\left(\frac{1}{N^{L}}\right)\right) Z^\infty(z)\begin{bmatrix} 1&0
\\\displaystyle -\star\, \Big(\frac{z}{z-a}\Big)^{c}e^{N\phi(z)}&1
\end{bmatrix}e^{\frac{-N\ell}{2}\sigma_3}e^{Ng(z)\sigma_3}.
\end{equation}
for an arbitrary positive integer $L$.  The proof is finished by
calculations similar to \eqref{long1} and \eqref{long2}.
\end{proof}

\section{Critical case: $a=1$}

\begin{figure}
 \begin{center}
 \includegraphics[width=0.5\textwidth]{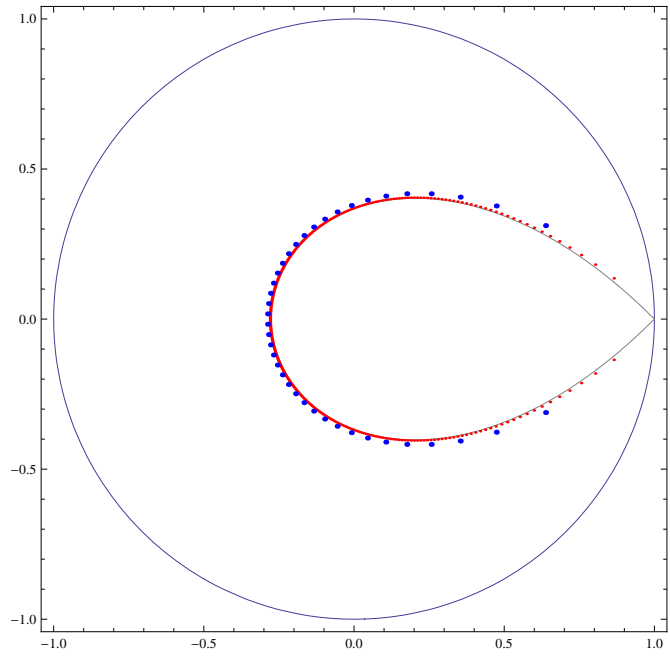}
 \caption{The zeros of orthogonal polynomials with degrees $40$ (blue) and $300$ (red),
$c=1$ and $a=1$. The solid line inside the disk is  ${\cal
S}$.}\label{pic2}
\end{center}
 \end{figure}

 \begin{figure}
\begin{center}
\includegraphics[width=0.5\textwidth]{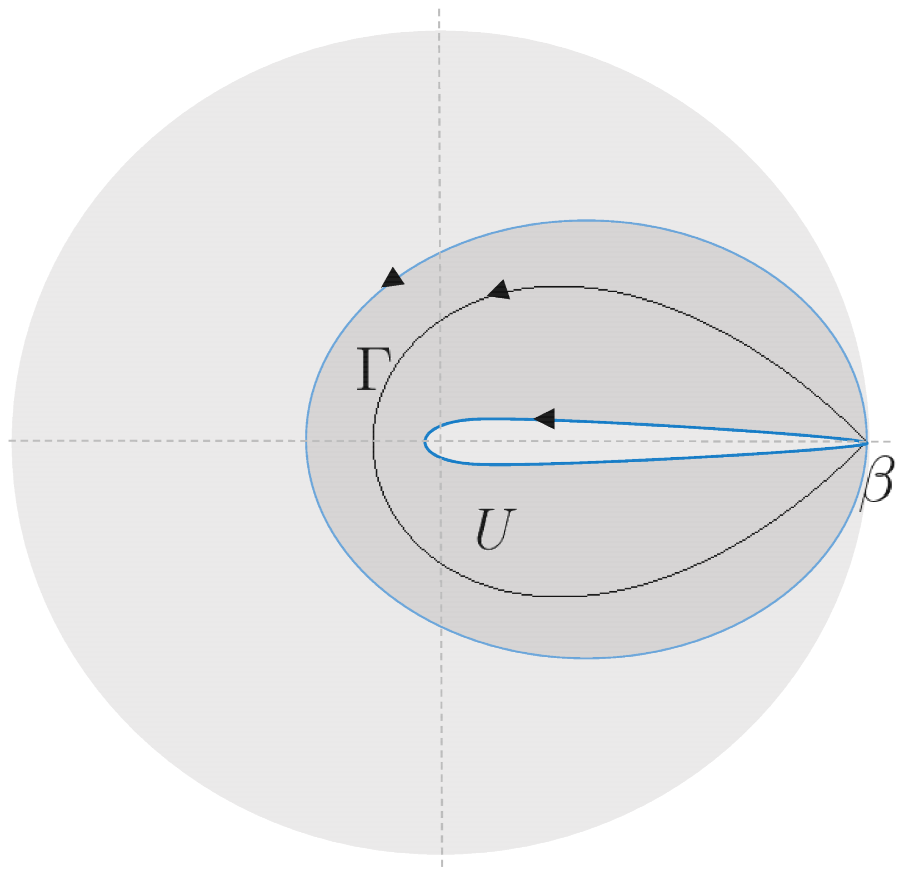}
\end{center}
\caption{%(Left) $V$ and $V_0$ for $a=1$, ${\cal S}$ is the black curve, $V$ is the interior of the contour enclosing the shaded region, $V_0$ is the contour enclosing the non shaded region.
Contours for the Riemann-Hilbert problem of $\Phi$ when $a\approx1$.
$\Gamma$ is the black curves and $U$ is the shaded region bounded by
the blue curves. }\label{fig12}
\end{figure}

In this section we consider $a=1+{\cal O}(1/\sqrt N)$. Here we only
argue that the strong asymptotics {\em can} be obtained through the
parametrix of Painlev\'e IV equation (as suggested in \cite{Ba1
2015}) following the similar steps described previously.

There is a disk ${D}_{1}$ centered at $1$ such that there exists a
univalent map $\zeta: D_{1}\rightarrow\mathbb{C}$ that satisfies
\begin{equation}\nonumber
\left(\zeta(z)+x\right)^2=N\phi_{A}(z)- N\phi_{A}\left(1/a\right)
%\quad \text{and}\quad \zeta(z)=\sqrt{\frac{N}{2}}(z-1)(1+\mathcal{O}(z-1)),
\end{equation}
where
$$x :=\sqrt{N\phi_{A}(a)-N\phi_{A}(1/a)}=\sqrt{2N}(a-1)(1+{\cal O}(a-1)).$$
Under the mapping $\zeta$, we have $\zeta(a)=0$ %, the contour $\mathcal {S}$ maps into $[0,e^{3\pi\mathrm{i}/4}t]\cup[0,e^{-3\pi\mathrm{i}/4}t]_{t\in[0,\infty)}$
and the critical point of $\phi_{A}$ maps to $-x$; note that $\phi(1/a)$ is the {\em critical value} of $\phi_{A}$.  %Using
%$$  \phi_a(z)-\phi_a(1/a) = \frac{a^2}{2}\left(z-\frac{1}{a}\right)^2\left(1+{\cal O}(z-a^{-1})\right) $$

%As $\beta=\min\{a,1/a\}=1-|a-1|+\frac{(a-1)^2+(a-1)|a-1|}{2}+\mathcal {O}((a-1)^3),$by \eqref{loc}, define

Inside ${D}_1$ we require that
$\Phi(z)\left(\frac{z-a}{z}\right)^{\frac{c}{2}\sigma_3}\mathcal
{P}(z)\left(\frac{z-a}{z}\right)^{-\frac{c}{2}\sigma_3}$ satisfies
the jump conditions (\ref{v1}) of $Z$.  With the boundary condition
of ${\cal P}$ on $\partial D_1$ this leads to the following jumps of
${\cal P}$ {\em inside $D_1$}:
\begin{equation}\nonumber
\left\{
\begin{array}{lll}
\mathcal {P}_{+}(z)=\mathcal {P}_{-}(z)\begin{bmatrix}
1&0 \\
e^{-N\phi_{A}(z)}&1
\end{bmatrix}, & z\in\partial U\cap\text{Int}\,\Gamma,\vspace{0.2cm}\\
\mathcal {P}_{+}(z)=\mathcal {P}_{-}(z)\begin{bmatrix}
1&0 \\
e^{N\phi_{A}(z)}&1
\end{bmatrix}, & z\in\partial U\cap\text{Ext}\,\Gamma,\vspace{0.2cm}\\
\mathcal {P}_{+}(z)=\begin{bmatrix}
0&-1 \\
1&0
\end{bmatrix}\mathcal {P}_{-}(z)\begin{bmatrix}
0&1 \\
-1&0
\end{bmatrix}, & z\in\Gamma\cap U,\vspace{0.2cm}\\
\mathcal {P}_{+}(z)=e^{-{c\pi\mathrm{i}}\sigma_3}\mathcal {P}_{-}(z)e^{{c\pi\mathrm{i}}\sigma_3}, & z\in (0, a ],\vspace{0.2cm}\\
\mathcal {P}(z)=I+o\left(1\right), & z\in\partial {D}_{1}.
\end{array}
\right.
\end{equation}
Here $U$ and $\Gamma$ are given similarly by those for $a>1$ except
the segment $[\beta,a]$ becomes a point at $1$, see Figure
\ref{fig12}. We will show that such ${\cal P}$ can be written in
terms of the solution of the Painlev\'e IV equation.  To achieve
this, we want to transform ${\cal P}$ into a new matrix function,
$W$, with only {\it constant jump matrices from the right}.   Such
transform is given by
\begin{equation}\label{W1} W(z)= e^{-\frac{\ell_x}{2}\sigma_3}\zeta(z)^{\frac{c}{2}\sigma_3}S \cdot \mathcal
{P}(z)\cdot {T(z)^{-1}\, S^{-1}},\quad z\in D_1,
\end{equation} using a diagonal matrices $T$, a
piecewise constant matrix $S$ and a constant $\ell_x$, defined by
\begin{equation}\nonumber
\begin{split}
 T(z)&=\exp\left(\frac{N}{2}(-1)^\nu\phi_A(z)\sigma_3\right)=\exp\left[ \frac{(-1)^{\nu}}{2}
\left(\zeta(z)^2+2x\zeta(z)+\ell_x\right)\sigma_3\right],\quad
%\sigma_1 =\begin{bmatrix}0&1 \\1&0\end{bmatrix},
\\
\ell_x  &= x^2 + N\phi_{A}(1/a) ,\quad S=S(z)=
\begin{bmatrix}
0&1 \\
1&0
\end{bmatrix}
\cdot
\begin{bmatrix}
0&1 \\
-1&0
\end{bmatrix}^{\nu},
\end{split}
\end{equation}
where
$$ \nu=\begin{cases} 0, &z\in{\rm Ext}\,\Gamma,\\1,\quad &z\in{\rm Int}\,\Gamma. \end{cases}  $$
Here we chose $S$ such that $S^{-1}\zeta(z)^{-\frac{c}{2}\sigma_3}$
satisfies all the {\it left} jumps of ${\cal P}$, i.e.,
\begin{equation*}\begin{split}  \left(S^{-1}\zeta(z)^{-\frac{c}{2}\sigma_3}\right)_+ &=\begin{bmatrix}
0&1 \\
-1&0
\end{bmatrix}
\left(S^{-1}\zeta(z)^{-\frac{c}{2}\sigma_3}\right)_- ,  \quad
z\in\Gamma\cap U,
 \\
\left(S^{-1}\zeta^{-\frac{c}{2}\sigma_3}\right)_+ &= e^{-c\pi
i\sigma_3} \left(S^{-1}\zeta^{-\frac{c}{2}\sigma_3}\right)_- ,
\quad\qquad z\in[-\infty, 0] .
\end{split}
\end{equation*}
Consequently, $W$ has the jump matrices only from the {\it right}.
Furthermore, the jump matrices of $W$ are constant matrices because
of the right multipliction of $T$ in \eqref{W1}, and the jump on
$\Gamma$ disappears by the right multiplication by $S^{-1}$.  We
obtain the jump condition of $W$ by
\begin{equation}\nonumber
W_{+}(z)=W_{-}(z)\left\{
\begin{array}{lll}
\begin{bmatrix}
1&0 \\
s_1&1
\end{bmatrix},  & \zeta(z)\in\mathbb{R}^{+},\vspace{0.2cm}\\
\begin{bmatrix}
1&s_2 \\
0&1
\end{bmatrix}, & \zeta(z)\in\mathrm{i}\mathbb{R}^{+},\vspace{0.2cm}\\
\begin{bmatrix}
1&0 \\
s_3&1
\end{bmatrix}, & \zeta(z)\in\mathbb{R}^{-},\vspace{0.2cm}\\
\begin{bmatrix}
1&s_4 \\
0&1
\end{bmatrix}, & \zeta(z)\in\mathrm{i}\mathbb{R}^{-},
%\vspace{0.2cm}\\ W(z)=\zeta^{-{c}\sigma_3}\left(I+\mathcal {O}\left(\frac{1}{\zeta^2}\right)\right)T^{-1}(z), \quad z\to\infty,
\end{array}
\right.
\end{equation}
where $s_1=0$, $s_2=1$, $s_3=e^{2\mathrm{i}c\pi}-1$ and
$s_4=-e^{-2\mathrm{i}c\pi}$.  The boundary condition at $\partial
D_1$ gives
$$W(z)=\zeta(z)^{\frac{c}{2}\sigma_3}\left(I+o\left(1\right)\right)e^{\big(\frac{\zeta(z)^2}{2}+x\zeta(z)\big)
\sigma_3}, \quad z\in\partial D_{\beta}.$$ Here we used that
$\ell_x={\cal O}(1)$ for $a=1+{\cal O}(1/\sqrt N)$. According to
page 34 of \cite{Dai 2009} (or \cite{FO 2006}) the Riemann-Hilbert
problem for the Painlev\'e IV parametrix -- $\Psi$, following the
notation in \cite{Dai 2009} -- exactly satisfies the jump condition
above and the boundary condition:
$$\Psi(\zeta,x)=\left(I+\frac{\Psi_{-1}(x)}{\zeta}+\frac{\Psi_{-2}(x)}{\zeta^2}+\mathcal
{O}\left(\frac{1}{\zeta^3}\right)\right)e^{\big(\frac{\zeta^2}{2}+x\zeta\big)}
\zeta^{-\Theta_\infty\sigma_3}, \quad z \to\infty,$$ when
$$(1+s_2s_3)e^{2\mathrm{i}\pi\Theta_\infty}+[s_1s_4+(1+s_3s_4)(1+s_1s_2)]e^{-2\mathrm{i}\pi\Theta_\infty}=2\cos 2\pi\Theta.$$
In our case we get $\Theta=c/2$, $\Theta_\infty=-c/2$.  It means
that, using the same strategy to Section \ref{sec4} and \ref{sec5},
we could get the similar result about the asymptotics of orthogonal
polynomial in terms of Painlev\'e IV equation:
%Then, by \cite{Dai 2009} and theorem 5.2 in \cite{FO 2006}, there exits a solution of Painleve IV equation with constants $\Theta$ and$\Theta_\infty$,
$$\frac{d^2u}{dx^2}=\frac{1}{2u}\left(\frac{du}{dx}\right)^2+\frac{3}{2}u^3+4xu^2+(2+2x^2-4\Theta_\infty)u-\frac{8\Theta^2}{u},$$
where the solution $u$ is related to the Riemann-Hilbert problem by
$$u(x)=-2x-\frac{d}{dx}\log\big((\Psi_{-1})(x)_{12}\big).$$

\appendix
\section{Proof of Lemma \ref{lemma4} }\label{app2}
By \cite{Ha 2010} we can write
$$D_{-c}(\zeta)=2^{-c/2}e^{-\zeta^2/4}U\left(\frac{c}{2},\frac{1}{2},\frac{\zeta^2}{2}\right),$$
where $U$ has the following asymptotic expansion as
$|\zeta|\to\infty$.
\begin{equation}\nonumber
U\left(\frac{c}{2},\frac{1}{2},\frac{\zeta^2}{2}\right)=\left(\frac{\zeta^2}{2}\right)^{-\frac{c}{2}}\sum_{s=0}^{n-1}\left(-\frac{\zeta^2}{2}\right)^{-s}
\frac{\left(\frac{c}{2}\right)_{s}\left(\frac{c+1}{2}\right)_{s}}{s!(2\zeta^2)^s}+\widehat{\varepsilon_n}\left(\frac{\zeta^2}{2}\right),\quad|\rm
arg\,\zeta|< \frac{\pi}{2}.
\end{equation}
The error term $\widehat{\varepsilon_n}$ is bounded by
$$
\left|\widehat{\varepsilon_n}\left(\frac{\zeta^2}{2}\right)\right|\leq2^{\frac{c}{2}+n+1}\alpha\left|\frac{(\frac{c}{2})_n(\frac{c+1}{2})_n}
{n!(\zeta^2)^{n+\frac{c}{2}}}\right| \exp\left(\frac{4\alpha\rho
}{|\zeta^2|}\right),$$ where
$$\alpha=\frac{1}{1-\sigma},\quad
\sigma=\left|\frac{1-2c}{\zeta^2}\right|,\quad
\rho=\left|\frac{c^2-c+1}{4}\right|+\frac{\sigma(1+\frac{\sigma}{4})}{(1-\sigma)^2}.$$
We have
\begin{equation}\nonumber
|\varepsilon_n(\zeta)|=
2^{-\frac{c}{2}}|\zeta|^c\left|\widehat{\varepsilon_n}\left(\frac{\zeta^2}{2}\right)\right|\leq
C\left|\frac{(\frac{c}{2})_n(\frac{c+1}{2})_n}
{n!(\zeta^2)^{n}}\right|.
\end{equation} where
$$C=\frac{2^{n+1}|\zeta^2|}{(|\zeta^2|-|1-2c|)}
\exp\left(\left|\frac{c^2-c+1}{4(|\zeta^2|-|1-2c|)}\right|+\frac{|1-2c|(|\zeta^2|+\frac{|1-2c|}{4})}{(|\zeta^2|-|1-2c|)^3}\right).$$
For $|\zeta^2|/|1-2c|$ big enough, we have $C\leq 2^{n+2}$.

\section{Lax pair: how the numerical calculation is done}
Define $\widetilde{Y}(z)$ by
$\widetilde{Y}(z)=\widetilde{Y}_{n}(z)=Y(z)\begin{bmatrix}
\left(\frac{z-a}{z}\right)^c\frac{1}{e^{Naz}}&0 \\
0&z^{n}
\end{bmatrix},$ then the Riemann-Hilbert problem for $\widetilde{Y}(z)$ is
\begin{equation}\nonumber
    \begin{array}{ccc}
\left\{
\begin{array}{lll}
\widetilde{Y}(z) \text{ is holomorphic in }\mathbb{C}\setminus
\Gamma,\vspace{0.2cm}\\
\widetilde{Y}_+(z)=\widetilde{Y}_-(z)\begin{bmatrix}
1&1 \\
0&1
\end{bmatrix} ,&z\in\Gamma,\vspace{0.2cm}\\
\widetilde{Y}_+(z)=\widetilde{Y}_-(z)\begin{bmatrix}
e^{{2c\pi\mathrm{i}}}&0 \\
0&1
\end{bmatrix},&z\in(0,a),\vspace{0.2cm}\\
\widetilde{Y}(z)=\displaystyle\left(I+\mathcal
{O}\left(\frac{1}{z}\right)\right)\begin{bmatrix}
\left(\frac{z-a}{z}\right)^c\frac{z^n}{e^{Naz}}&0 \\
0&1
\end{bmatrix},& z\to\infty.
\end{array}
\right.
\end{array}
\end{equation}
We observe $\widetilde{Y}_n(z)$ and $\widetilde{Y}_{n+1}(z)$ have
the same jump matrices. Since $\det{Y}(z)\equiv 1$, the inverse of
${\widetilde{Y}(z)}$ exists in
$\mathbb{C}\setminus(\Gamma\cup(0,a))$, and we can define
$$A_n(z)=\frac{d\widetilde{Y}_n(z)}{dz}{\widetilde{Y}_n(z)}^{-1}.$$
The matrix function $A_n(z)$ is meromorphic and can be determined by
identifying the singularities. For $z\to\infty,$ writing (we know
that $c_n$ below is not related to the charge ``$c$'' in the
potential)
$$\widetilde{Y}_n(z)=\left(I+\frac{1}{z}\begin{bmatrix}
a_n&b_n \\
c_n&d_n
\end{bmatrix}+\cdots\right)\begin{bmatrix}
\left(\frac{z-a}{z}\right)^c\frac{z^n}{e^{Naz}}&0 \\
0&1
\end{bmatrix}, $$
we get
\begin{equation}\nonumber
A_n(z)=\begin{bmatrix}
-Na&0 \\
0&0
\end{bmatrix}+\frac{1}{z}\begin{bmatrix}
n&N a b_n \\
-N a c_n&0
\end{bmatrix}+\mathcal
{O}\left(z^{-2}\right).
\end{equation}

Similarly we get the following for $z\to 0$.
\begin{equation}\nonumber
\begin{array}{lll}
\widetilde{Y}_n(z)&=&\begin{bmatrix}
\alpha_n&\beta_n \\
\gamma_n&\eta_n
\end{bmatrix}\left(I+\mathcal
{O}(z)\right)\begin{bmatrix}
\left(\frac{z-a}{z}\right)^c\frac{1}{e^{Naz}}&0 \\
0&z^{n}
\end{bmatrix},\vspace{0.2cm}\\
A_n(z)&=&\frac{1}{z}\begin{bmatrix}
-c-(c+n)\beta_n\gamma_n&(c+n)\alpha_n\beta_n \\
-(c+n)\gamma_n\eta_n&n+(c+n)\beta_n\gamma_n
\end{bmatrix}.
\end{array}
\end{equation}
Therefore, we obtain
\begin{equation}\nonumber
\begin{split}
A_n(z)&=\begin{bmatrix}
-Na&0 \\
0&0
\end{bmatrix}+\frac{1}{z}\begin{bmatrix}
-c-(c+n)\beta_n\gamma_n&(c+n)\alpha_n\beta_n \\
-(c+n)\gamma_n\eta_n&n+(c+n)\beta_n\gamma_n
\end{bmatrix}\\
& +\frac{1}{z-a}\begin{bmatrix}
(c+n)\left(1+\beta_n\gamma_n\right)&Nab_n-(c+n)\alpha_n\beta_n \\
-Nac_n+(c+n)\gamma_n\eta_n&-n-(c+n)\beta_n\gamma_n
\end{bmatrix}.
\end{split}
\end{equation}
Defining $M_n(z)=\widetilde{Y}_{n+1}(z){\widetilde{Y}_n(z)}^{-1}$
 we obtain, by the similar procedure as above,
\begin{equation}\nonumber
M_n(z)=\begin{bmatrix}
z+a_{n+1}-a_n&-b_n \\
c_{n+1}&1
\end{bmatrix}.
\end{equation}
The compatibility of the Lax pair,
\begin{equation}\nonumber
\begin{split}
\frac{d\widetilde{Y}_n(z)}{dz}&=A_n(z){\widetilde{Y}_n(z)},\vspace{0.2cm}\\
\widetilde{Y}_{n+1}(z)&=M_n(z){\widetilde{Y}_n(z)},\end{split}
\end{equation}
gives
\begin{equation}\nonumber
A_{n+1}(z)M_n(z)=\frac{dM_n(z)}{dz}+M_n(z)A_n(z).
\end{equation}
This gives the following recurrence relation:
\begin{equation}\nonumber
\begin{split}
&a_{n+1}=a_n+\frac{b_n \left(1+\beta_n\gamma_n\right)}{\alpha_n
\beta_n},\quad\alpha_{n+1}=\frac{b_n}{\beta_n},\quad
\gamma_{n+1}=-\frac{1}{\beta_n},
\vspace{0.2cm}\\
 &b_{n+1}=\frac{(1+n+a^2N)b_n}{aN}+\frac{(c+n)\alpha_n\beta_n}{N}+\frac{b^2_n\left(1+\beta_n\gamma_n\right)}
{\alpha_n\beta_n},\vspace{0.2cm}
\\
&\beta_{n+1}=\frac{\tilde{c}}{(1+c+n)
\left((c+n)\alpha_n\beta_n-aNb_n\right)\alpha^2_n\beta_n},
\end{split}
\end{equation}
where
\begin{equation}\nonumber
\begin{split}
\tilde{c}&=a^2N-c-a(1+2(c+n))\alpha_n\beta_n
+\left(a^2N-c-a(c+n)\alpha_n\beta_n\right)\beta_n\gamma_n\\
&+(c+n)(c+n+1)\alpha^3_n\beta^3_n+aN^2b^3_n
\left(1+\beta_n\gamma_n\right)^2,\\
&a_0=0,\quad
b_0=a,\quad\alpha_0=1,\quad\beta_0=1+a^2N,\quad\gamma_0=0.
\end{split}
\end{equation}
The last line contains the initial condition of the recurrence
relation. We used the above relation to generate the orthogonal
polynomials numerically.

\end{document}